\newcommand*{\addFileDependency}[1]{
  \typeout{(#1)}
  \@addtofilelist{#1}
  \IfFileExists{#1}{}{\typeout{No file #1.}}
}
\newcommand*{\myexternaldocument}[1]{
    \externaldocument{#1}
    \addFileDependency{#1.tex}
    \addFileDependency{#1.aux}
}
\DeclareMathOperator*{\argmax}{\arg\!\max}%
\DeclareMathOperator*{\argmin}{\arg\!\min}
\algnewcommand\INPUT{\item[\textbf{Input:}]}%
\algnewcommand\OUTPUT{\item[\textbf{Output:}]}%
\newtheorem{theorem}{Theorem}
\newtheorem{remark}{Remark}
\newtheorem{condition}[theorem]{Condition}
\newtheorem{corollary}{Corollary}
\newtheorem{definition}{Definition}
\newtheorem{lemma}{Lemma}
\newtheorem{assumption}{Assumption}
\numberwithin{equation}{section} 
\numberwithin{theorem}{section}
\numberwithin{lemma}{section} 
\numberwithin{corollary}{section}
\numberwithin{definition}{section}
\numberwithin{proposition}{section} 
\numberwithin{remark}{section}
\numberwithin{example}{section}
\DeclareMathOperator\supp{supp}
\numberwithin{table}{section}
\numberwithin{figure}{section}
\newcommand{\T}{\intercal}
\newcommand{\blind}{0}
\begin{document}

\def\spacingset#1{\renewcommand{\baselinestretch}%
{#1}\small\normalsize} \spacingset{1}

%%%%%%%%%%%%%%%%%%%%%%%%%%%%%%%%%%%%%%%%%%%%%%%%%%%%%%%%%%%%%%%%%%%%%%%%%%%%%%

\if0\blind
{
  \title{\bf Randomness of Shapes and Statistical Inference on Shapes via the Smooth Euler Characteristic Transform}
  \author[1,*]{Kun Meng}
	\author[2]{Jinyu Wang}
	\author[3,4]{Lorin Crawford}
	\author[3]{Ani Eloyan}
	
	\affil[1]{\small Division of Applied Mathematics, Brown University, RI, USA}	
	\affil[2]{\small Data Science Institute, Brown University, RI, USA}
	\affil[3]{\small Department of Biostatistics, Brown University School of Public Health, Providence, RI, USA}
	\affil[4]{\small Microsoft Research New England, Cambridge, MA, USA}
	\affil[*]{Address for correspondence: Kun Meng, Division of Applied Mathematics, Brown University, 182 George Street, Providence, RI 02912, USA. Email: \texttt{kun\_meng@brown.edu}.}
 
  \maketitle
} \fi

\if1\blind
{
  \bigskip
  \bigskip
  \bigskip
  \begin{center}
    {\Large\bf Randomness of Shapes and\\ Statistical Inference on Shapes via the Smooth Euler Characteristic Transform}
\end{center}
  %\medskip
} \fi

\bigskip
\begin{abstract}
In this article, we establish the mathematical foundations for modeling the randomness of shapes and conducting statistical inference on shapes using the smooth Euler characteristic transform. Based on these foundations, we propose two chi-squared statistic-based algorithms for testing hypotheses on random shapes. Simulation studies are presented to validate our mathematical derivations and to compare our algorithms with state-of-the-art methods to demonstrate the utility of our proposed framework. As real applications, we analyze a data set of mandibular molars from four genera of primates and show that our algorithms have the power to detect significant shape differences that recapitulate known morphological variation across suborders. Altogether, our discussions bridge the following fields: algebraic and computational topology, probability theory and stochastic processes, Sobolev spaces and functional analysis, analysis of variance for functional data, and geometric morphometrics.
\end{abstract}

\noindent%
{\it Keywords:} functional data analysis; Karhunen–Loève expansion; o-minimal structures; persistence diagrams; reproducing kernel Hilbert spaces.
\vfill

\newpage
%\spacingset{1.9} % DON'T change the spacing!
\spacingset{1.89}

%%%%%%%%%%%%%%%%%%%%%%%%%%%%%%%%%%%%%%%%%%%%%%%%
%%%%%% Section: Introduction
%%%%%%%%%%%%%%%%%%%%%%%%%%%%%%%%%%%%%%%%%%%%%%%%

\section{Introduction}\label{Introduction}

The quantification of shapes has become an important research direction. It has brought advances to many fields including geometric morphometrics \citep{boyer2011algorithms, gao2019gaussian, gao2019gaussianmorphometrics}, biophysics and structural biology \citep{wang2021statistical, tang2022topological}, and radiogenomics \citep{crawford2020predicting}. When shapes are considered as random variables, their corresponding quantitative summaries are also random, implying that such summaries of random shapes are statistics. The statistical inference on shapes based on these quantitative summaries has been of particular interest \citep{fasy2014confidence, roycraft2023bootstrapping}.

In this paper, we bring together mathematical and statistical approaches to make three significant contributions to shape statistics: (i) we provide mathematical foundations for the randomness of shapes encountered in applications, bridging algebraic topology \citep{hatcher2002algebraic} and stochastic processes \citep{hairer2009introduction}; (ii) we connect the statistical inference on shape-valued data to the well-studied analysis of variance for functional data \citep[fdANOVA,][]{zhang2013analysis}, bridging topological data analysis \citep[TDA,][]{edelsbrunner2010computational} and functional data analysis \citep[FDA,][]{hsing2015theoretical}; and (iii) our framework does not rely on any assumptions about diffeomorphisms or pre-specified landmarks.

\subsection{A Motivating Scientific Question}\label{section: A Motivation Question}

Through modeling the randomness of shapes, we aim to address the following statistical inference question: \textit{Is the observed difference between two groups of shapes statistically significant?} For example, the mandibular molars in Figure \ref{fig: Teeth} are from four genera of primates. A pertinent question from a morphological perspective is: \textit{In Figure \ref{fig: Teeth}, do the molars from genus \textit{Tarsius} exhibit significant differences from those from the other genera?} 

The primary objective of this paper is to propose a powerful approach for testing hypotheses on random shapes. This would help address morphology-motivated statistical inference questions like the one raised above. In achieving this objective, we lay down the mathematical foundations that justify our hypothesis testing methods. We take two key steps: In \textbf{Step 1}, we find the appropriate representations of shapes; and in \textbf{Step 2}, we test hypotheses on shapes using these representations. In Section \ref{section: Overview of The Shape and Topological Data Analysis}, we provide a literature review on shape representations and introduce the topological summary employed in this paper. Section \ref{section: Overview of Contributions and Paper Orgainization} begins by presenting the main theme of our hypothesis testing approach, followed by an overview of our contributions. Since the molars in Figure \ref{fig: Teeth} are diffeomorphic to the 2-dimensional unit sphere, some existing diffeomorphism-related methods can be considered for representing the molars \citep[e.g., parameterized surfaces;][]{kurtek2011elastic}. In contrast, we aim to propose an approach that does not rely on any diffeomorphic assumptions, allowing for a wider range of applications.
\begin{figure}[h]
\centering
\includegraphics[scale=0.141]{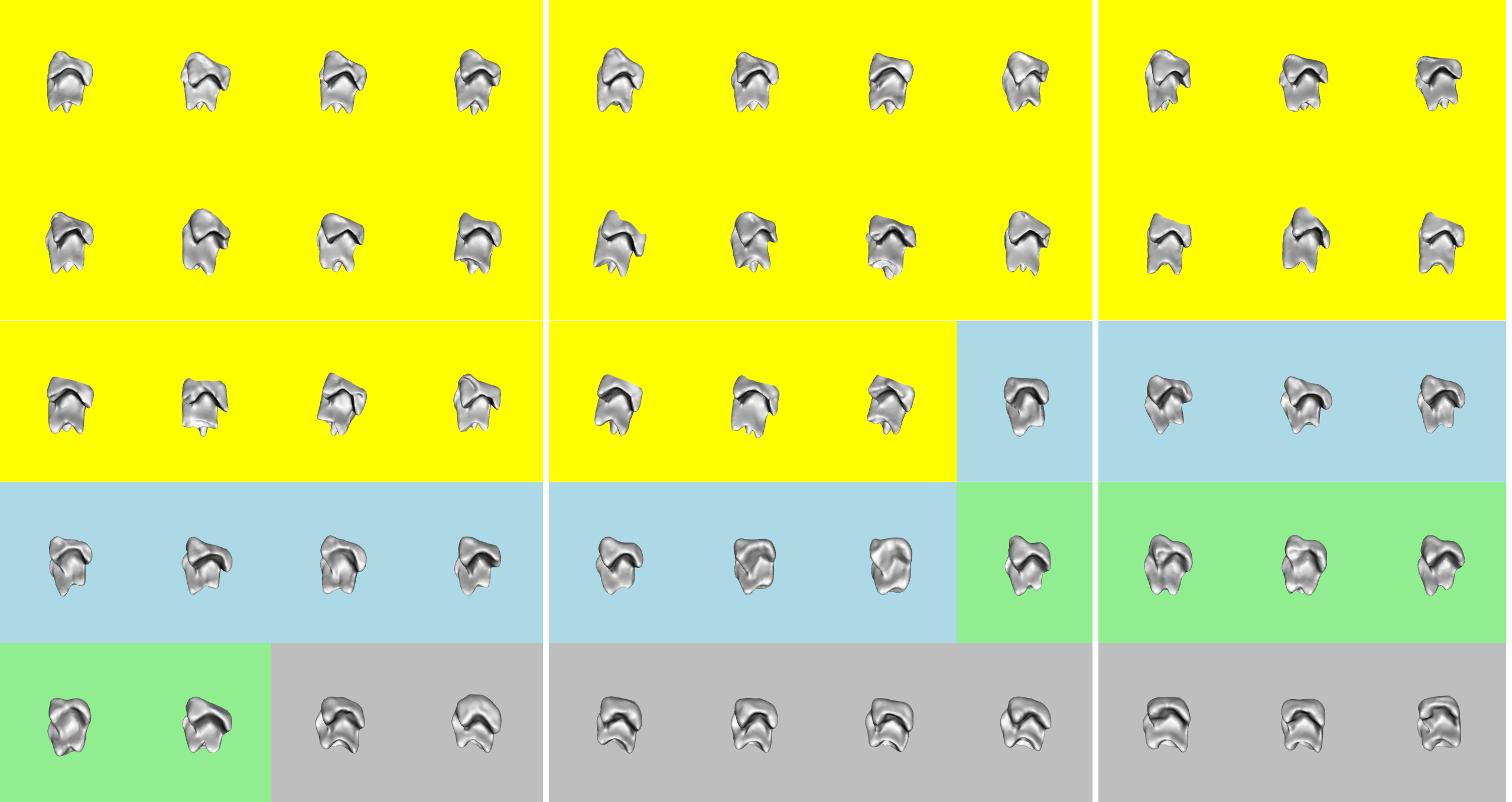}
\includegraphics[scale=0.145]{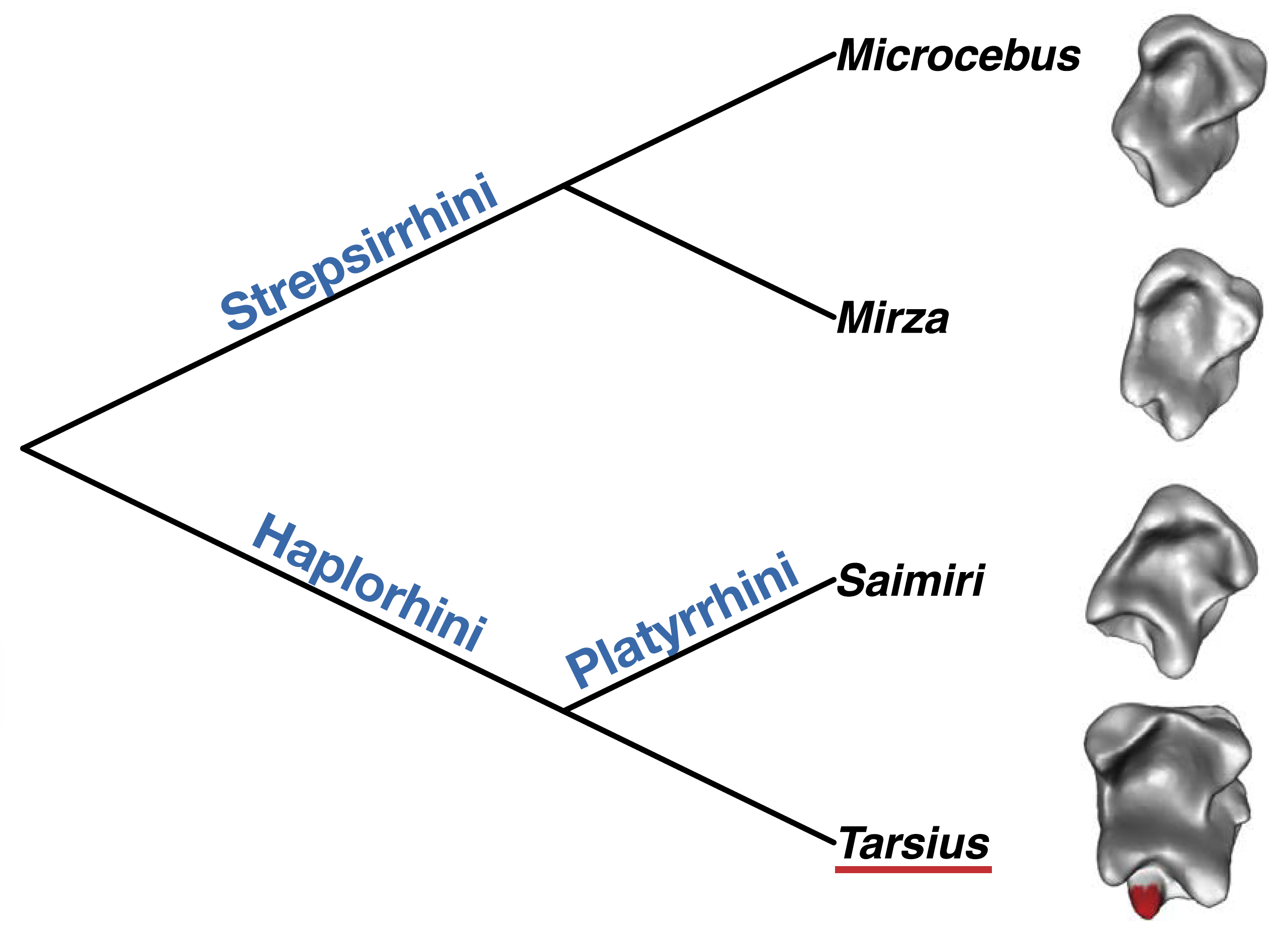}
        \caption{\footnotesize{Left: Molars from two suborders of the primates: Haplorhini and Strepsirrhini. The Haplorhini suborder has genera \textit{Tarsius} (yellow) and \textit{Saimiri} (grey). The Strepsirrhini suborder has genera \textit{Microcebus} (blue) and \textit{Mirza} (green). Right: Relationship between the four primate genera. Tarsier molars exhibit additional high cusps (highlighted in red). A similar figure was published in \cite{wang2021statistical}.}}
    \label{fig: Teeth}
\end{figure}

\subsection{Overview of Shape and Topological Data Analysis}\label{section: Overview of The Shape and Topological Data Analysis}

In classical geometric morphometrics, shapes are represented using prespecified points called landmarks \citep{kendall1989survey}. The manual landmarking of a collection of shapes requires domain knowledge, can be very labor intensive, and is subject to bias \citep{boyer2011algorithms}. Furthermore, an equal number of landmarks must be selected for each shape in a study in order to make comparisons (e.g., the Procrustes framework discussed in Section 2.1 of \cite{gao2019gaussianmorphometrics}). This necessitates comprehensive information about entire collections of shapes for consistency, which can be difficult to obtain (e.g., landmarking cancer tumors, which can have very different morphology across a population of patients). Unfortunately, many datasets do not come with prespecified landmarks \citep[e.g.,][]{goswami2015phenome10k}. Although many algorithms can automatically sample reasonable landmarks on shapes when their parameters are fine-tuned \citep[e.g.,][]{ gao2019gaussianmorphometrics,gao2019gaussian}, using a finite number of landmarks extracted from a continuum inevitably results in the loss of information. Diffeomorphism-based approaches \citep{dupuis1998variational, gao2019gaussianmorphometrics} are part of the ``computational anatomy" that was historically studied by the ``pattern theory school" pioneered by Ulf Grenander \citep{grenander1998computational}. They enable the comparison of (dis-)similarity between shapes with benefit of bypassing the need for landmarks. However, these approaches are based on the assumption that the shapes being compared are diffeomorphic to one another, making them unsuitable for many datasets (e.g., fruit fly wings in \cite{miller2015fruit}). Furthermore, parameterized curves and surfaces (PCS) provide a toolbox for assessing the heterogeneity of shapes with summary statistics that are invariant to reparameterizations \citep{kurtek2010parameterization, kurtek2011elastic, kurtek2012statistical}. Despite their effectiveness in analyzing real data \citep[e.g., DT-MRI brain fibers;][]{kurtek2012statistical}, PCS are based on assumptions about the diffeomorphism types of the shapes of interest. For example, \cite{kurtek2011elastic} focuses on surfaces that are diffeomorphic to the 2-dimensional unit sphere.

TDA opens the door for landmark-free and diffeomorphism-free representations of shapes. Motivated by differential topology, \cite{turner2014persistent} proposed the persistent homology transform (PHT) with the capability to sufficiently encode all information within shapes \citep{ghrist2018persistent}. To describe the PHT, we briefly provide some basics of TDA. One common statistical invariant in TDA is the persistence diagram \citep[PD,][]{edelsbrunner2010computational}. When equipped with the Wasserstein distance, the collection of PDs, denoted as \(\mathscr{D}\), is a Polish space \citep{mileyko2011probability}. Thus, probability measures can be applied, and the randomness of shapes can be represented using the probability measures on \(\mathscr{D}\). The PHT takes values in \(C(\mathbb{S}^{d-1};\mathscr{D}^d)=\{\mbox{continuous maps }F:\mathbb{S}^{d-1} \rightarrow \mathscr{D}^d\}\), where \(\mathbb{S}^{d-1}\) denotes the sphere \(\{x\in\mathbb{R}^d:\Vert x\Vert=1\}\) and \(\mathscr{D}^d\) is the \(d\)-fold Cartesian product of \(\mathscr{D}\) \citep[][Lemma 2.1 and Definition 2.1]{turner2014persistent}. A single PD does not preserve all information of a shape \citep{crawford2020predicting}. In contrast, the PHT is injective, which means it preserves all the information of a shape. However, since \(\mathscr{D}\) is not a vector space and the distances on \(\mathscr{D}\) are abstract \citep[e.g., the Wasserstein and bottleneck distances,][]{cohen2007stability}, many fundamental statistical concepts do not easily apply to summaries resulting from the PHT. For example, the definition of moments corresponding to probability measures on \(\mathscr{D}\) (e.g., means) is highly nontrivial \citep{mileyko2011probability}. The difficulty in defining these concepts hinders the application of PHT-based statistical methods in \(C(\mathbb{S}^{d-1};\mathscr{D}^d)\).

The smooth Euler characteristic transform \citep[SECT,][]{crawford2020predicting} offers an alternative summary statistic for shapes. The SECT not only preserves the information of shapes \citep[][Corollary 1]{ghrist2018persistent} but also represents shapes using continuous functions instead of PDs. More precisely, the values of the SECT are maps from the sphere $\mathbb{S}^{d-1}$ to a separable Banach space $\mathcal{B}\overset{\operatorname{def}}{=}C([0,T])$, the collection of continuous functions on a compact interval $[0,T]$ (values of $T$ will be given in Eq.~\eqref{eq: def of sublevel sets}). Hence, for any shape $K$, its SECT, denoted as $\{\operatorname{SECT}(K)(\nu)\}_{\nu\in\mathbb{S}^{d-1}}$, lies in $\mathcal{B}^{\mathbb{S}^{d-1}}= \{\text{maps }F: \mathbb{S}^{d-1}\rightarrow\mathcal{B}\}$. Specifically, $\operatorname{SECT}(K)(\nu)$ belongs to $\mathcal{B}$ for each $\nu\in\mathbb{S}^{d-1}$. As a result, the randomness of shapes $K$ is represented via the SECT by a collection of $\mathcal{B}$-valued random variables. Probability theory in separable Banach spaces is better developed than in $\mathscr{D}$ \citep[e.g.,][]{hairer2009introduction}. In particular, a $\mathcal{B}$-valued random variable is a stochastic process with its sample paths in $\mathcal{B}$. As we will demonstrate in Section \ref{The Definition of Smooth Euler Characteristic Transform}, $\mathcal{B}$ here can be replaced with a reproducing kernel Hilbert space (RKHS). The theory of stochastic processes has evolved over a century and FDA is a well-developed branch of statistics. Consequently, a myriad of tools are available to underpin both the randomness of shapes and the statistical inference on shapes.

From an application perspective, \cite{crawford2020predicting} applied the SECT to magnetic resonance images taken from tumors in a cohort of glioblastoma multiforme (GBM) patients. Using summary statistics derived from the SECT as predictors within Gaussian process regression, the authors demonstrated that the SECT can predict clinical outcomes more effectively than existing tumor shape quantification approaches and common molecular assays. The relative performance of the SECT in the GBM study suggests a promising future for its utility in medical imaging and broader statistical applications related to shape analyses. Similarly, \cite{wang2021statistical} utilized derivatives of the Euler characteristic transform (ECT) as predictors in statistical models for subimage analysis. This analysis is akin to variable selection, aiming to identify physical features that are important for distinguishing between two classes of shapes. Lastly, \cite{marsh2022detecting} highlighted that the SECT outperforms the standard measures employed in organoid morphology.

\subsection{Overview of Contributions and Paper Organization}\label{section: Overview of Contributions and Paper Orgainization}

Our goal is to address the hypothesis testing question posed in Section \ref{section: A Motivation Question} by employing a landmark-free and diffeomorphism-free approach, which opens up possibilities for further applications in the future. We formulate the question more generically here. Let $\mathbb{P}^{(1)}$ and $\mathbb{P}^{(2)}$ be two distributions that generate two collections of random shapes, $\{K_i^{(1)}\}_{i=1}^n$ and $\{K_i^{(2)}\}_{i=1}^n$. Detecting whether there is a significant difference between $\{K_i^{(1)}\}_{i=1}^n$ and $\{K_i^{(2)}\}_{i=1}^n$ is equivalent to rejecting the hypothesis $\mathbb{P}^{(1)} = \mathbb{P}^{(2)}$. Since each shape $K_i^{(j)}$ is random, $\operatorname{SECT}(K_i^{(j)})$ is a random variable taking values in a vector space (as discussed in Section \ref{section: Overview of The Shape and Topological Data Analysis}) and can be decomposed as follows (see Theorem \ref{thm: KL expansions of SECT} for a rigorous version)
\begin{align}\label{eq: decomposition for the main theme of HT}
    \operatorname{SECT}(K_i^{(j)}) = m^{(j)} + \text{random terms},\ \ \ \text{ for }j\in\{1,2\},
\end{align}
where $m^{(j)}$ denotes the mean of $\operatorname{SECT}(K_i^{(j)})$ with respect to the distribution $\mathbb{P}^{(j)}$. The random terms in Eq.~\eqref{eq: decomposition for the main theme of HT} can be characterized by the Karhunen–Loève (KL) expansion \citep[][Section 7.3]{hsing2015theoretical}. To reject $\mathbb{P}^{(1)} = \mathbb{P}^{(2)}$, it suffices to reject $m^{(1)}=m^{(2)}$. That is, the question posed in Section \ref{section: A Motivation Question} can be addressed by testing for the equality of two means. The important component of the test is the variance represented by the random terms in Eq.~\eqref{eq: decomposition for the main theme of HT}. In Section \ref{section: hypothesis testing}, we formulate this test as a two-sample problem in the fdANOVA literature \citep[][Section 5.2]{zhang2013analysis}. In addition, using the KL expansion, we provide a $\chi^2$-statistic in Section \ref{section: hypothesis testing} to test the hypothesis. Throughout the paper, our focus is on the two-sample problem. However, one may also consider employing the one-way fdANOVA to compare the means of three or more groups of shapes. The theoretical foundation and numerical experiments for this aspect are left for future research.

To develop our framework, we have to address the following mathematical foundation related questions: (i) \textit{What underlying probability spaces allow the randomness of shapes and their corresponding SECT?} and (ii) \textit{Are the conditions of the KL expansion satisfied in our setting?} We answer these questions in Sections \ref{The Definition of Smooth Euler Characteristic Transform} and \ref{section: distributions of Gaussian bridge} --- we model the randomness of shapes via the SECT using RKHS-valued random fields. The ``theory of random sets" is a well-established framework for characterizing set-valued random variables \citep{molchanov2005theory}. However, its application to persistent homology-based statistics (e.g., the SECT) remains underexplored. In this paper, we introduce a new probability space to characterize the randomness of shapes in a manner compatible with the SECT. 

We first propose a collection of shapes as our sample space on which the SECT is well-defined. We then demonstrate that every shape in this collection has its SECT in $C(\mathbb{S}^{d-1};\mathcal{H})=\{\mbox{continuous maps }F: \mathbb{S}^{d-1}\rightarrow\mathcal{H}\}$, where $\mathcal{H}  =  H_0^1([0,T])$ is not only a Sobolev space \citep{brezis2011functional} but also an RKHS (reasons for using $[0,T]$ instead of $(0,T)$ for $H^1_0([0,T])$ are in Appendix \ref{section: notation for closed vs. open}). Importantly, $C(\mathbb{S}^{d-1};\mathcal{H})$ is a separable Banach space (Theorem \ref{thm: the separability of C(Shere;H)}) and, hence, a Polish space. It helps construct a probability space to characterize the distributions of shapes. Building on this probability space, we define the mean and covariance of the SECT. Using the Sobolev embedding, we present some properties of the mean and covariance, which pave the way for the KL expansion of the SECT.

Traditionally, the statistical inference on shapes in TDA is conducted in the persistence diagram space $\mathscr{D}$, which is unsuitable for exponential family-based distributions and requires any corresponding statistical inference to be highly nonparametric \citep{fasy2014confidence, robinson2017hypothesis}. The PHT-based statistical inference in $C(\mathbb{S}^{d-1};\mathscr{D}^d)$ is even more difficult. With the KL expansion of the SECT, we propose a $\chi^2$-statistic for testing hypotheses on shapes. Beyond the mathematical foundations, we also provide simulation studies to illustrate the performance of our proposed hypothesis testing method. Lastly, we apply our proposed framework to answer the motivating question raised in Section \ref{section: A Motivation Question}.

We organize this paper as follows. In Section \ref{section: Notations and Mathematical Preparations}, we provide the mathematical preparations. In Section \ref{The Definition of Smooth Euler Characteristic Transform}, we define the SECT for a specific collection of shapes, highlighting its properties. In Section \ref{section: distributions of Gaussian bridge}, we construct a probability space to model shape distributions. In Section \ref{section: hypothesis testing}, we propose the KL expansion of the SECT, leading to a statistic for hypothesis testing. In Section \ref{section: Simulation experiments}, we conduct simulation studies to evaluate our method. In Section \ref{section: Applications}, we apply our method to real data. In Section \ref{Conclusions and Discussions}, we conclude the paper. The Appendix provides the proofs of theorems, further data analysis, and future research topics.

%One can use a morphology example from \cite{turner2014persistent} to illustrate a potential application of SECT-based hypothesis testing. The study of variation of complex traits and phenotypes is fundamental to understanding hypotheses in evolutionary biology. Suppose we observe heel bones from two groups of primates (i.e., heel bones $\{K_i^{(j)}\}_{i=1}^n$ are from the $j$-th group for $j\in\{1,2\}$). For each group $j$, suppose the underlying distribution of statistics $\{\operatorname{SECT}(K_i^{(j)})\}_{i=1}^n$ is $\mathbf{P}^{(j)}$, and $\mathbf{P}^{(j)}$ is a probability meansure on space $C(\mathbb{S}^{d-1};\mathcal{H})$. Then, assessing the evolutionary biological question of ``whether the two groups of primates are of different species" may be boiled down to testing the hypotheses $H_0$: ``$\mathbf{P}^{(1)}$ and $\mathbf{P}^{(2)}$ have the same mean" vs. $H_1$: ``$\mathbf{P}^{(1)}$ and $\mathbf{P}^{(2)}$ do not have the same mean" (the rigorous definition of ``the mean of $\mathbf{P}^{(j)}$" and the rigorous version of the hypotheses will be provided in Section \ref{section: distributions of H-valued GP} and Eq.~\eqref{eq: the main hypotheses}, respectively, of this paper). 

%%%%%%%%%%%%%%%%%%%%%%%%%%%%%%%%%%%%%%%%%%%%%%%%
%%%%%% Section: Relevant Background
%%%%%%%%%%%%%%%%%%%%%%%%%%%%%%%%%%%%%%%%%%%%%%%%

\section{Notations and Mathematical Preparations}\label{section: Notations and Mathematical Preparations}

To model the shapes discussed in our motivating question from Section \ref{section: A Motivation Question}, we need certain preparations regarding (i) topology and (ii) function spaces.

\noindent\textbf{Topology.} The first question we must address is: \textit{What are the ``shapes" in our framework?} \cite{ghrist2018persistent} and \cite{curry2022many} applied o-minimal structures \citep{van1998tame} to prove the injectivity of the PHT. Subsequent to this, o-minimal structures have been applied in many TDA studies to model shapes \citep[e.g.,][]{jiang2020weighted, kirveslahti2023representing}. To stay consistent with the existing literature, we also model shapes using o-minimal structures. An o-minimal structure is a sequence $\mathcal{S}=\{\mathcal{S}_n\}_{n\ge1}$ of subset collections $\mathcal{S}_n\subseteq 2^{\mathbb{R}^n}$ satisfying six set-theoretical axioms, where $2^{\mathbb{R}^n}$ denotes the power set of $\mathbb{R}^n$. The precise definition of o-minimal structures is available in \cite{van1998tame} and is provided in Appendix \ref{section: O-minimal Structures} for the reader's convenience.

A typical example of o-minimal structures is the collection of \textit{semialgebraic sets}. Specifically, a set $K\subseteq\mathbb{R}^n$ is semialgebraic if it can be expressed as a finite union of sets of the form $\{x\in\mathbb{R}^n \,\vert\ p(x)=0,\, q_1(x)>0,\,\ldots,\, q_k(x)>0\}$, where $p, q_1,\ldots,q_k$ are polynomial functions on $\mathbb{R}^n$. If we define $\mathcal{S}_n$ as the collection of semialgebraic subsets of $\mathbb{R}^n$, then $\mathcal{S}=\{\mathcal{S}_n\}_{n\ge1}$ is an o-minimal structure \citep[][Chapter 2]{van1998tame}. The unit sphere $\mathbb{S}^{d-1}$, open ball $B(0,R)=\{x\in\mathbb{R}^d \,\vert\,\Vert x\Vert^2<R^2\}$ for any $R>0$, and all polyhedra (e.g., polygon meshes in computer graphics) are semialgebraic, given that they can be represented using either the polynomial $\Vert x\Vert^2$ or affine functions. We assume the following:
\begin{assumption}\label{Assumption: basic requirements for o-minimal structures of interest}
    The o-minimal structure $\mathcal{S}$ of interest contains all semialgebraic sets.
\end{assumption}
\noindent Hereafter, a ``shape" refers to a compact set $K\in\bigcup_{n\ge1}\mathcal{S}_n$ for a prespecified o-minimal structure $\mathcal{S}=\{\mathcal{S}_n\}_{n\ge1}$ satisfying Assumption \ref{Assumption: basic requirements for o-minimal structures of interest}. Assumption \ref{Assumption: basic requirements for o-minimal structures of interest} incorporates many common shapes (e.g., balls and polyhedra) in our framework. More importantly, it implies the subsequent Theorem \ref{thm: finite triangularization} through the ``triangulation theorem" \citep[][Chapter 8]{van1998tame}. Although the definition of an o-minimal structure $\mathcal{S}$ is highly abstract (see Appendix \ref{section: O-minimal Structures}), each compact set in $\mathcal{S}$ resembles a polyhedron, which is precisely stated as follows.
\begin{theorem}\label{thm: finite triangularization}
Suppose $\mathcal{S}=\{\mathcal{S}_n\}_{n\ge1}$ is an o-minimal structure satisfying Assumption \ref{Assumption: basic requirements for o-minimal structures of interest} and $K\in\bigcup_{n\ge1}\mathcal{S}_n$. If $K$ is compact, there exists a finite simplicial complex $S$ such that the polyhedron $\vert S\vert \overset{\operatorname{def}}{=} \bigcup_{s\in S}s$ is homeomorphic to $K$, where each $s\in S$ denotes a simplex.
\end{theorem}
\noindent Herein, a finite simplicial complex $S$ is a finite collection of simplexes. Each face of a simplex $s\in S$ also belongs to $S$ (i.e., $S$ is a so-called ``closed complex" referred to in Chapter 8 of \cite{van1998tame}). Theorem \ref{thm: finite triangularization} directly results from the ``triangulation theorem" \citep{van1998tame}; hence, its proof is omitted. For the $d$-th component $\mathcal{S}_d$ of $\mathcal{S}=\{\mathcal{S}_n\}_{n\ge1}$, Theorem \ref{thm: finite triangularization} indicates that the compact sets $K\in\mathcal{S}_d$ are subsets of $\mathbb{R}^d$ that are homeomorphic to polyhedra. Theorem \ref{thm: finite triangularization} also implies that the homology groups of each compact $K\in\mathcal{S}_d$ are well-defined and finitely generated; hence, the Betti numbers and Euler characteristic of $K$ are well-defined and finite \citep[][Chapter 2]{hatcher2002algebraic}. 

\noindent\textbf{Function Spaces.} We apply the following notations throughout this paper:\\ 
(i) For any normed space $\mathcal{V}$, let $\Vert\cdot\Vert_{\mathcal{V}}$ denote its norm. Denote $\Vert\cdot\Vert_{\mathbb{R}^d}$ as $\Vert \cdot\Vert$ for succinctness. \\
(ii) Let $X$ be a compact metric space equipped with metric $d_X$, and let $\mathcal{V}$ denote a normed space. $C(X;\mathcal{V})$ is the collection of continuous maps from $X$ to $\mathcal{V}$. Furthermore, $C(X;\mathcal{V})$ is a normed space equipped with $\Vert f\Vert_{C(X;\mathcal{V})}  =  \sup_{x\in X}\Vert f(x)\Vert_{\mathcal{V}}$. The Hölder space $C^{0,\frac{1}{2}}(X;\mathcal{V})$ is defined as $\left\{f\in C(X;\mathcal{V}) \,\Big\vert\, \sup_{x,y\in X,\, x\ne y}\left(\frac{\Vert f(x)-f(y)\Vert_{\mathcal{V}}}{\sqrt{d_X(x,y)}}\right)<\infty\right\}$. Here, $C^{0,\frac{1}{2}}(X;\mathcal{V})$ is a normed space equipped with $\Vert f\Vert_{C^{0,\frac{1}{2}}(X;\mathcal{V})}  =  \Vert f\Vert_{C(X;\mathcal{V})}+\sup_{x,y\in X,\, x\ne y}\left(\frac{\Vert f(x)-f(y)\Vert_{\mathcal{V}}}{\sqrt{d_X(x,y)}}\right)$. Obviously, $C^{0,\frac{1}{2}}(X;\mathcal{V})\subseteq C(X;\mathcal{V})$. For simplicity, we denote $C(X)  =  C(X;\mathbb{R})$ and $C^{0,\frac{1}{2}}(X)  =  C^{0,\frac{1}{2}}(X;\mathbb{R})$. For a given $T>0$ (e.g., see Eq.~\eqref{eq: def of sublevel sets}), we denote $C([0,T])$ as $\mathcal{B}$. \\
(iii) The inner product of $\mathcal{H}  =  H_0^1([0,T]) = \{f\in L^2([0,1])\,\vert\, f'\in L^2([0,T]) \mbox{ and }f(0)=f(T)=0\}$ is defined as $\langle f, g \rangle = \int_0^T f'(t) g'(t)\, dt$ \citep[][Chapter 8.3, Remark 17]{brezis2011functional}. \\
(iv) Suppose $(Y, d_Y)$ is a metric space (not necessarily compact). Both $\mathscr{B}(Y)$ and $\mathscr{B}(d_Y)$ denote the Borel algebra generated by the metric topology corresponding to $d_Y$. \\
(v) $\{F(z)\}_{z\in Z}$ denotes a function $F$ defined on the set $Z$.\\
The following inequalities are useful for deriving many results presented in this paper
\begin{align}\label{eq: Sobolev embedding from Morrey}
    \Vert f\Vert_{\mathcal{B}}\le \Vert f\Vert_{C^{0,\frac{1}{2}}([0,T])}\le \Tilde{C}_T  \Vert f\Vert_{\mathcal{H}}, \ \ \mbox{ for all }f\in\mathcal{H},
\end{align}
where $\Tilde{C}_T $ is a constant depending only on $T$. The first inequality in Eq.~\eqref{eq: Sobolev embedding from Morrey} results from the definition of $\Vert\cdot\Vert_{C^{0,\frac{1}{2}}([0,T])}$, while the second inequality is from \cite{brezis2011functional} (Corollary 9.14; also see Appendix \ref{proof: simple proof of the Sobolev embedding}). Eq.~\eqref{eq: Sobolev embedding from Morrey} implies the following Sobolev embedding 
\begin{align}\label{eq: H, Holder, B embeddings}
    H_0^1([0,T]) \overset{\operatorname{def}}{=} \mathcal{H} \subseteq C^{0,\frac{1}{2}}([0,T]) \subseteq \mathcal{B} \overset{\operatorname{def}}{=} C([0,T]).
\end{align}

%%%%%%%%%%%%%%%%%%%%%%%%%%%%%%%%%%%%%%%%%%%%%%%%
%%%%%% Section: SECT and Its Gaussian Distributions
%%%%%%%%%%%%%%%%%%%%%%%%%%%%%%%%%%%%%%%%%%%%%%%%

\section{Smooth Euler Characteristic Transform}\label{The Definition of Smooth Euler Characteristic Transform}

In this section, we give the background on the SECT and propose corresponding mathematical foundations. Notably, we specify the ``sample space" --- a collection of shapes on which the SECT is well-defined. The SECT of the shapes in this sample space has properties that are suitable for the probability theory developed in Section \ref{section: distributions of Gaussian bridge}. The molars in the motivating question from Section \ref{section: A Motivation Question} will be modeled as elements of the sample space.

%\begin{figure}
%    \centering
%    \includegraphics[scale=0.5]{JASA/Untitled Notebook.pdf}
%    \put(-170,185){{\large $\nu$}}
%    \put(-195, 230){{\large $x\cdot\nu = t- R$}}
%    \put(-320, 70){{\large $K$}}
%    \put(-140, 70){{\large $K_t^\nu$}}
%    \caption{Caption}
%    \label{fig:enter-label}
%\end{figure}

Suppose an o-minimal structure $\mathcal{S}=\{\mathcal{S}_n\}_{n\ge 1}$ satisfying Assumption \ref{Assumption: basic requirements for o-minimal structures of interest} is given, and we focus on shapes in $d$-dimensional space $\mathbb{R}^d$. We assume the shape $K\in\mathcal{S}_d$ is compact and $K\subseteq B(0,R) = \{x\in\mathbb{R}^d: \Vert x\Vert< R\}$, e.g., the $K\subseteq\mathbb{R}^2$ in Figure \ref{fig: SECT_illustration} or the surfaces of the mandibular molars in $\mathbb{R}^3$ as presented by Figure \ref{fig: Teeth}. For each direction $\nu\in\mathbb{S}^{d-1}$, we define a filtration $\{K_t^\nu\}_{t\in[0,T]}$ of sublevel sets by the following (see Figure \ref{fig: SECT_illustration} for an illustration)
\begin{align}\label{eq: def of sublevel sets}
    K_t^\nu  \overset{\operatorname{def}}{=}  \left\{x\in K \,\vert\, x\cdot\nu \le t-R \right\},\ \ \mbox{ for all } t\in[0,T],\ \ \mbox{ where }T \overset{\operatorname{def}}{=} 2R.
\end{align}
We then have the following Euler characteristic curve (ECC, denoted as $\chi_t^\nu$) in direction $\nu$
\begin{align}\label{Eq: first def of Euler characteristic curve}
    \chi_t^\nu(K) \overset{\operatorname{def}}{=} \mbox{ the Euler characteristic of }K_{t}^\nu = \chi(K^\nu_t) =  \sum_{k=0}^{d-1} (-1)^{k}\cdot\beta_k(K_t^\nu),
\end{align}
for $t\in[0,T]$, where $\beta_k(K_t^\nu)$ is the $k$-th Betti number of $K_t^\nu$. The sum in Eq.~\eqref{Eq: first def of Euler characteristic curve} ends at $d-1$ because higher homology groups are trivial \citep[][Section 4]{curry2022many}. If $K_t^\nu$ is a triangle mesh, $\chi (K^\nu_t) =\#V-\#E+\#F$, where $\#V$, $\#E$, and $\#F$ denote the number of vertices, edges, and faces of the mesh, respectively. Due to Theorem \ref{thm: finite triangularization}, the compactness of $K$ guarantees that the Betti numbers in Eq.~\eqref{Eq: first def of Euler characteristic curve} are well-defined and finite.

\begin{figure}[h]
    \centering 
    \includegraphics[scale=0.185]{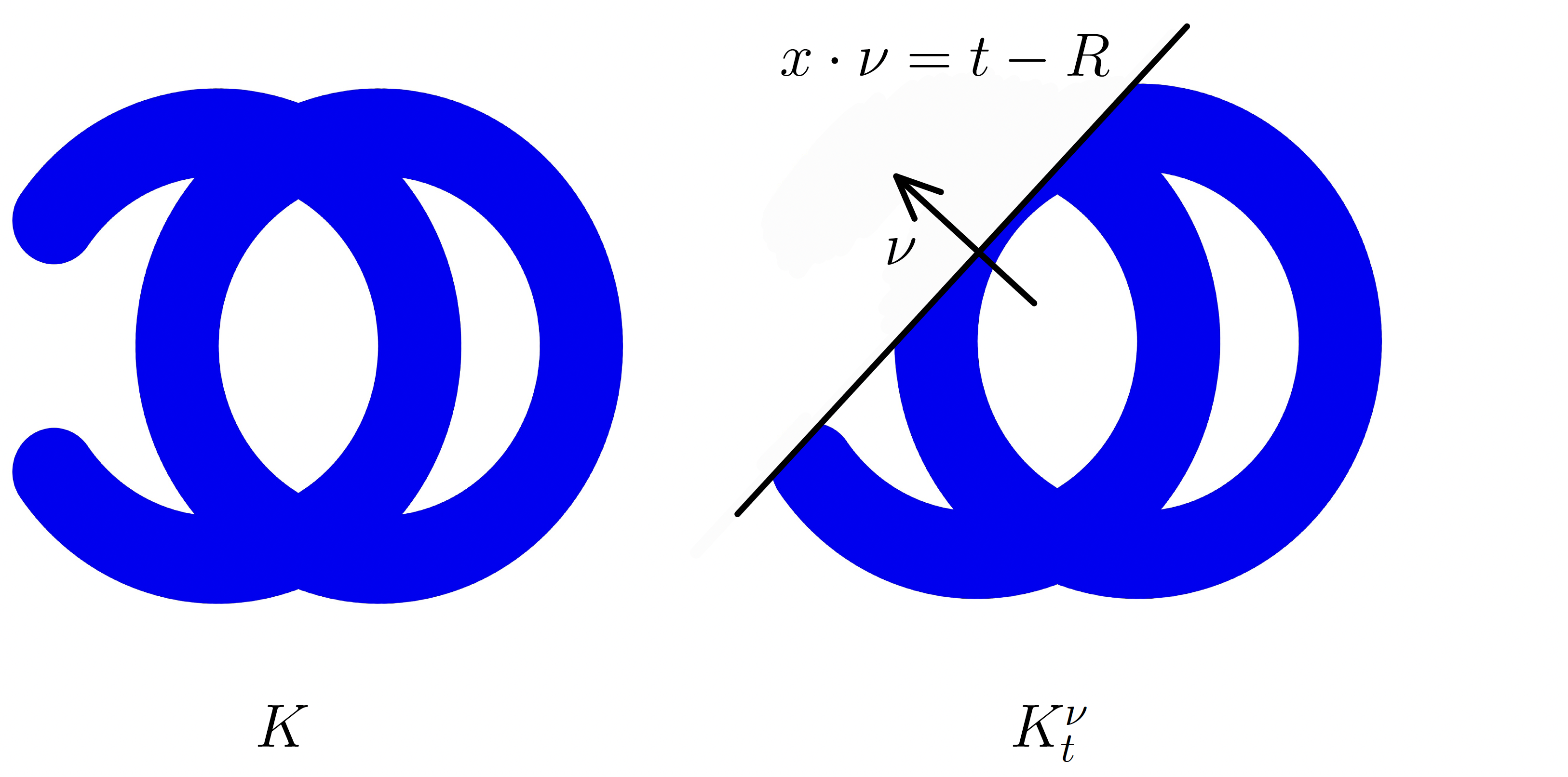}
    \includegraphics[scale=0.15]{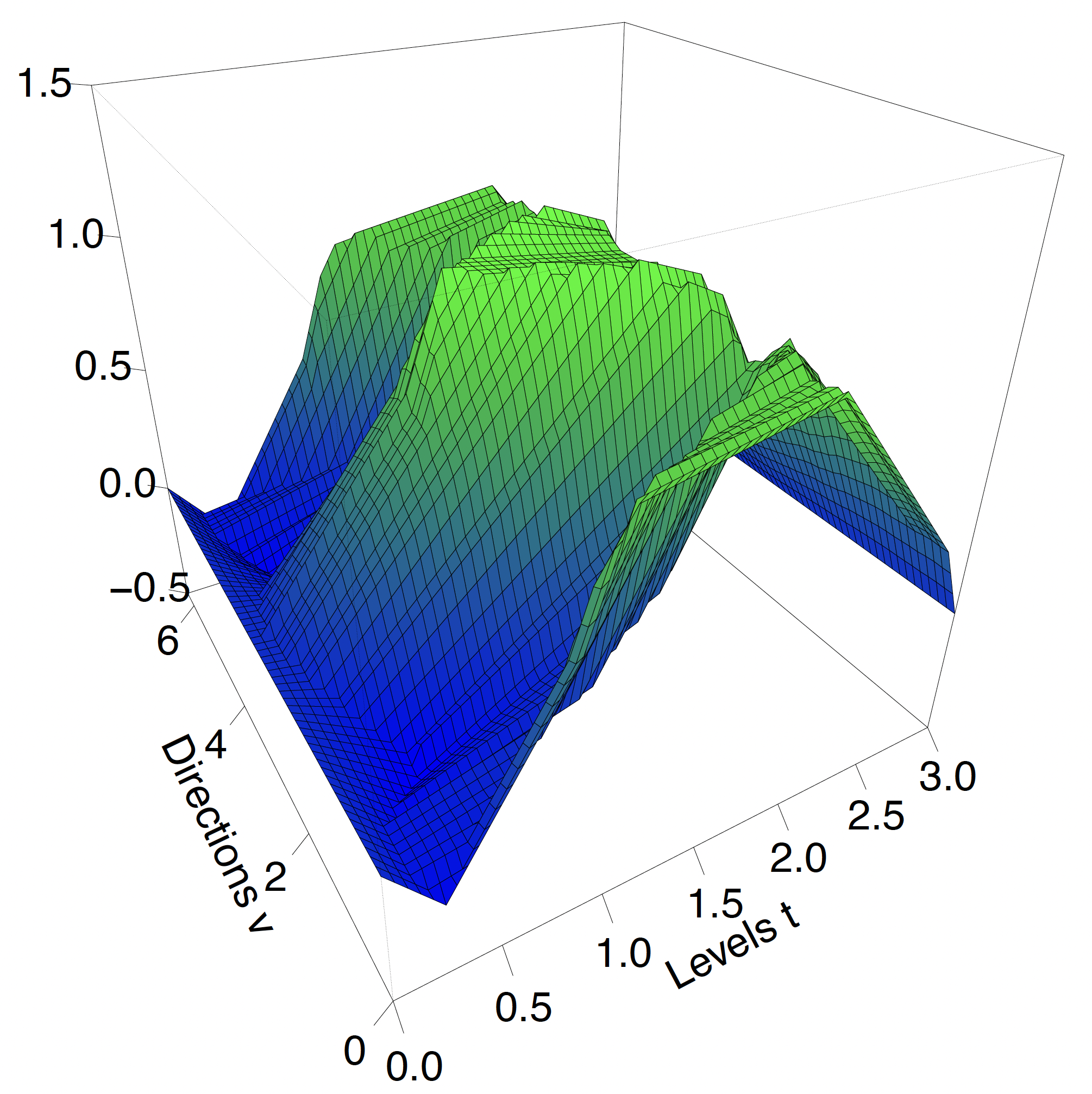}
    \caption{\footnotesize{Consider the 2-dimensional shape $K\in\mathcal{S}_2$ in the left panel. For each pair of $\nu$ and $t$, the equation $x\cdot\nu=t-R$ represents a straight line (or a hyperplane in a high-dimensional space). The subset $K_t^\nu$ denotes the region below this line. Let $\phi_\nu(x)=x\cdot\nu+R$, then $K_t^\nu=\{x\in  K\,\vert\,\phi_\nu(x)\le t\}$. The right panel presents the function $(\nu,t)\mapsto\operatorname{SECT}(K)(\nu,t)$, where $\nu\in\mathbb{S}^1$ is identified by $\theta\in[0,2\pi]$ through $\nu=(\cos\theta,\sin\theta)$. Procedures for generating the shape $K$ and the right panel are given in Appendix \ref{Proof-of-Concept Simulation Examples I: Deterministic Shapes}.}}
    \label{fig: SECT_illustration}
\end{figure}

The Euler characteristic transform (ECT) defined as $\operatorname{ECT}(K): \mathbb{S}^{d-1} \rightarrow \mathbb{Z}^{[0,T]},\, \nu \mapsto \{\chi_t^\nu(K) \}_{t\in[0,T]}$ was proposed by \cite{turner2014persistent} as an alternative to the PHT. Based on the ECT, \cite{crawford2020predicting} further proposed the SECT as follows
\begin{align}\label{Eq: definition of SECT}
\begin{aligned}
    & \operatorname{SECT}(K): \ \ \mathbb{S}^{d-1}\rightarrow\mathbb{R}^{[0,T]},\ \ \ \nu \mapsto \operatorname{SECT}(K)(\nu)  =  \left\{\operatorname{SECT}(K)(\nu, t) \right\}_{t\in[0,T]}, \\
    & \mbox{ where}\ \  \operatorname{SECT}(K)(\nu, t) \overset{\operatorname{def}}{=} \int_0^t \chi_{\tau}^\nu(K) \,d\tau-\frac{t}{T}\int_0^T \chi_{\tau}^\nu(K) \,d\tau.
\end{aligned}
\end{align}
A visualization of the function $(\nu,t)\mapsto\operatorname{SECT}(K)(\nu, t)$ is presented in Figure \ref{fig: SECT_illustration}. The following lemma implies that the Lebesgue integrals in Eq.~\eqref{Eq: definition of SECT} are well-defined.
\begin{lemma}\label{thm: tameness property}
For any fixed $K\in\mathcal{S}_d$ and $\nu\in\mathbb{S}^{d-1}$, the function $t\mapsto\chi(K_t^\nu)$ is piecewise constant with only finitely many discontinuities.
\end{lemma}
\noindent Through the ``cell decomposition theorem" \citep[][Chapter 3]{van1998tame}, Lemma \ref{thm: tameness property} directly follows from either Lemma 3.4 of \cite{curry2022many} or ``(2.10) Proposition" in Chapter 4 of \cite{van1998tame}. Hence, the proof of Lemma \ref{thm: tameness property} is omitted.

%The discontinuities referred to in Lemma \ref{thm: tameness property} are called homeomorphism critical points (HCPs) in direction $\nu$. We may refine the definition of the ECC $\chi_t^\nu$ in Eq.~\eqref{Eq: first def of Euler characteristic curve} as follows to get nice sample paths: define $\chi_t^\nu(K)=\chi(K_t^\nu)$ if $t$ is not an HCP in direction $\nu$, and $\chi_t^\nu(K) \overset{\operatorname{def}}{=} \lim_{t'\rightarrow t+}\chi(K_{t'}^\nu)$ if $t$ is an HCP. Through the refinement, $\{\chi_t^\nu(K) \}_{t\in[0,T]}$ is a right continuous with left limits (RCLL) function for each fixed $\nu$. This refinement does not influence the SECT defined in Eq.~\eqref{Eq: definition of SECT} because Lemma \ref{thm: tameness property} indicates that only finitely many HCPs exist. We can define and refine the $k$-th Betti number curves $\{\beta_k(K_t^{\nu})\}_{t\in[0,T]}$ to be RCLL functions of $t\in[0,T]$ in the same way. 

To investigate the distribution of $\operatorname{SECT}(K)$ over different shapes $K$, we introduce the following condition to restrict our attention to a subset of $\mathcal{S}_d$.
\begin{condition}\label{condition: the condition for defining S_{R,d}^M}
     Let $K\in\mathcal{S}_d$. The condition is that $K$ satisfies the following inequality
        \begin{align}\label{Eq: topological invariants boundedness condition}
    \sup_{k\in\{0,\,\cdots,\, d-1\}}\left[\sup_{\nu\in\mathbb{S}^{d-1}}\left(\#\Big\{\xi\in \operatorname{Dgm}_k(K;\phi_{\nu}) \, \Big\vert \, \operatorname{pers}(\xi)>0\Big\}\right)\right] \le\frac{M}{d},
\end{align}
where $\operatorname{Dgm}_k(K;\phi_{\nu})$ is the PD of $K$ associated with the function $\phi_\nu(x)=x\cdot \nu+R$ (also see Figure \ref{fig: SECT_illustration}), $\operatorname{pers}(\xi)$ is the persistence of the homology feature $\xi$, $\#\{\cdot\}$ denotes the cardinality of a multiset, and $M>0$ is a sufficiently large prespecified number.
\end{condition}
\noindent Condition \ref{condition: the condition for defining S_{R,d}^M} involves technicalities from computational topology \citep{edelsbrunner2010computational}. To maintain the flow of the paper, we relegate the details of this condition, as well as the definitions of $\operatorname{Dgm}_k(K;\phi_{\nu})$ and $\operatorname{pers}(\xi)$, to Appendix \ref{The Relationship between PHT and SECT}. Heuristically, Condition \ref{condition: the condition for defining S_{R,d}^M} implies the existence of a uniform upper bound on the number of nontrivial homology features of $K$ across all directions $\nu$. Hereafter, we focus on shapes in the following collection
\begin{align*}
    \mathcal{S}_{R,d}^M \overset{\operatorname{def}}{=} \left\{K \in\mathcal{S}_d \,\big\vert\, K\subseteq  B(0,R) \mbox{ is compact and satisfies Condition \ref{condition: the condition for defining S_{R,d}^M} with fixed }M>0 \right\}.
\end{align*}

Our proposed collection $\mathcal{S}_{R,d}^M$ is suitable for modeling shapes in many applications. For example, the surfaces of the molars in Figure \ref{fig: Teeth} are compact subsets of $\mathbb{R}^3$, bounded by a common open ball, and can be approximately represented by triangle meshes (hence, modeled by an o-minimal structure satisfying Assumption \ref{Assumption: basic requirements for o-minimal structures of interest}). In addition, the four genera of primates in Figure \ref{fig: Teeth} share a phylogentic relationship which implies that their molars have common baseline features and satisfy Condition \ref{condition: the condition for defining S_{R,d}^M} with a shared upper bound $M$. In each application, the dimension $d$ and radius $R$ of the ball $B(0,R)$ can easily be determined based on observed shapes. Although our mathematical framework requires the existence of such an $M$ in Eq.~\eqref{Eq: topological invariants boundedness condition}, the value of $M$ is not needed for our statistical methodology (see Section \ref{section: hypothesis testing}). Thus, Condition \ref{condition: the condition for defining S_{R,d}^M} does not hinder our proposed statistical methodology. 

Lemma \ref{thm: tameness property} implies that the function $\{\chi_t^\nu(K)\}_{t\in[0,T]}$ of $t$ belongs to $L^1([0,T])$. Therefore, the function $\operatorname{SECT}(K)(\nu)=\{\operatorname{SECT}(K)(\nu,t)\}_{t\in[0,T]}$ of $t$ is absolutely continuous on $[0,T]$. Furthermore, we have the following regularity result of the Sobolev type. 
\begin{lemma}\label{thm: Sobolev function paths}
For any $K\in\mathcal{S}_{R,d}^M$ and $\nu\in\mathbb{S}^{d-1}$, the function $\operatorname{SECT}(K)(\nu)$ belongs to $\mathcal{H}$.
\end{lemma}
\noindent Lemma \ref{thm: Sobolev function paths} is a special case of Lemma \ref{thm: Sobolev function paths; general, appendix}. It indicates $\operatorname{SECT}(\mathcal{S}_{R,d}^M) \subseteq \mathcal{H}^{\mathbb{S}^{d-1}}  =  \{\mbox{all maps }F:\mathbb{S}^{d-1}\rightarrow\mathcal{H}\}$, which is enhanced by the following result.
\begin{theorem}\label{lemma: The continuity lemma}
For each $K\in\mathcal{S}_{R,d}^M$, we have: (i) There exists a constant $C^*_{M,R,d}$ depending only on $M$, $R$, and $d$ such that the following inequality holds for any directions $\nu_1,\nu_2\in\mathbb{S}^{d-1}$, 
\begin{align}\label{Eq: continuity inequality}
    \begin{aligned}
       % & \left( \int_0^T \Big\vert\chi_\tau^{\nu_1}(K)-\chi_\tau^{\nu_2}(K)\Big\vert^2 d\tau \right)^{1/2} \le C^*_{M,R,d} \cdot \sqrt{\Vert \nu_1-\nu_2\Vert}, \\
     & \left\Vert\, \operatorname{SECT}(K)(\nu_1) - \operatorname{SECT}(K)(\nu_2) \,\right\Vert_{\mathcal{H}} \le C^*_{M,R,d} \cdot \sqrt{ \Vert \nu_1 - \nu_2\Vert + \Vert 
    \nu_1-\nu_2 \Vert^2 };
    \end{aligned}
\end{align}
and (ii) $\operatorname{SECT}(K) \in C^{0,\frac{1}{2}}(\mathbb{S}^{d-1};\mathcal{H})$, where $\mathbb{S}^{d-1}$ is equipped with the geodesic distance $d_{\mathbb{S}^{d-1}}$.
%\\ (iii) The constant $\Tilde{C}_T$ in Eq.~\eqref{eq: Sobolev embedding from Morrey} provides the inequality 
%\begin{align}\label{eq: bivariate Holder continuity}
%    \begin{aligned}
%        & \Big\vert \operatorname{SECT}(K)(\nu_1; t_1)-\operatorname{SECT}(K)(\nu_2; t_2)\Big\vert\\ 
%    & \le \Tilde{C}_T \left\{\Vert \operatorname{SECT}(K)\Vert_{C(\mathbb{S}^{d-1};\mathcal{H})}\cdot \sqrt{\vert t_1-t_2\vert} + C^*_{M,R,d} \cdot \sqrt{ \Vert \nu_1 - \nu_2\Vert + \Vert 
%    \nu_1-\nu_2 \Vert^2 } \right\},
%    \end{aligned}
%\end{align}
%for all $\nu_1, \nu_2\in\mathbb{S}^{d-1}$ and $t_1, t_2\in[0,T]$, which implies that $(\nu,t)\mapsto \operatorname{SECT}(K)(\nu, t)$, as a function on $\mathbb{S}^{d-1}\times[0,T]$, belongs to $C^{0,\frac{1}{2}}(\mathbb{S}^{d-1}\times[0,T];\mathbb{R})$.
\end{theorem}
\noindent Results complementary to Theorem \ref{lemma: The continuity lemma} can be found in Theorem \ref{lemma: The continuity lemma; Appendix}, which imply that the function $(\nu,t)\mapsto \operatorname{SECT}(K)(\nu, t)$ belongs to $C^{0,\frac{1}{2}}(\mathbb{S}^{d-1}\times[0,T])$. Theorem \ref{lemma: The continuity lemma}(i) is an analog of Lemma 2.1 in \cite{turner2014persistent}. Theorem \ref{lemma: The continuity lemma}(ii) implies $\operatorname{SECT}(\mathcal{S}_{R,d}^M) \subseteq C^{0,\frac{1}{2}}(\mathbb{S}^{d-1}; \mathcal{H}) \subseteq C(\mathbb{S}^{d-1}; \mathcal{H}) \subseteq \mathcal{H}^{\mathbb{S}^{d-1}}$. As a result, Eq.~\eqref{Eq: definition of SECT} defines the following map
\begin{align}\label{Eq: final def of SECT}
    \operatorname{SECT}: \, \mathcal{S}_{R,d}^M \rightarrow C(\mathbb{S}^{d-1};\, \mathcal{H}),\ \ \ K \mapsto \operatorname{SECT}(K).
\end{align}
%Part (iii) of Theorem \ref{lemma: The continuity lemma} will be implemented in the mathematical foundation of the hypothesis testing in Section \ref{section: hypothesis testing} (specifically, see Theorems \ref{thm: lemma for KL expansions} and \ref{thm: KL expansions of SECT}).
In Appendix \ref{Proof-of-Concept Simulation Examples I: Deterministic Shapes}, we provide detailed proof-of-concept examples (similar to Figure \ref{fig: SECT_illustration}) to visually illustrate the $\operatorname{SECT}$ and support the regularity results in Theorems \ref{lemma: The continuity lemma} and \ref{lemma: The continuity lemma; Appendix}.

Corollary 1 of \cite{ghrist2018persistent} implies the following result, which shows that the SECT preserves all the information of shapes $K\in \mathcal{S}_{R,d}^M$.
\begin{theorem}\label{thm: invertibility}
The map $\operatorname{SECT}$ defined in Eq.~\eqref{Eq: final def of SECT} is injective for all dimensions $d$.
\end{theorem}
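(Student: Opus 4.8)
\section*{Proof proposal}

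The plan is to factor $SECT$ through the Euler characteristic transform and to reduce its invertibility to that of the ECT, which is the content of Corollary~6 of \cite{ghrist2018persistent}. Concretely, it suffices to establish two things: (a) that the whole family of Euler characteristic curves $\{\chi_\tau^\nu(K)\}_{\tau\in[0,T],\,\nu\in\mathbb{S}^{d-1}}$ can be reconstructed from $SECT(K)\in C(\mathbb{S}^{d-1};\mathcal{H})$, and (b) that this family determines $K\in\mathscr{S}_{R,d}^M$ uniquely. Step (b) is precisely the injectivity of the ECT on finitely triangulable sets supplied by Corollary~6 of \cite{ghrist2018persistent} (via an Euler-integral inversion formula), so the real work lies in Step (a): inverting, direction by direction, the linear integral operator $\{\chi_\tau^\nu(K)\}_\tau \mapsto SECT(K)(\nu;\cdot)$ of Eq.~\eqref{Eq: definition of SECT}.

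For Step (a), fix $\nu$ and set $g_\nu(t)=SECT(K)(\nu;t)$. By Theorem~\ref{thm: Sobolev function paths}, $g_\nu\in\mathcal{H}=H_0^1([0,T])$ with weak derivative $g_\nu'(t)=\chi_t^\nu(K)-c_\nu$, where $c_\nu=\tfrac{1}{T}\int_0^T\chi_\tau^\nu(K)\,d\tau$. Thus differentiation recovers the Euler characteristic curve up to the single additive constant $c_\nu$; equivalently, it recovers every increment $\chi_t^\nu(K)-\chi_s^\nu(K)$, hence the locations and the integer jump sizes of the RCLL step function $\tau\mapsto\chi_\tau^\nu(K)$. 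It remains to pin down $c_\nu$, and here I would exploit two facts. First, integrality of $\chi_\tau^\nu(K)$ forces $g_\nu'$ to take values in a single coset of $\mathbb{Z}$, which fixes $c_\nu$ modulo $1$. Second, since $K\subset\overline{B(0,R)}$, the sublevel set $K_t^\nu$ is empty whenever $t<R+\min_{x\in K}x\cdot\nu$, so on that initial interval $\chi_t^\nu(K)=0$ and $g_\nu$ is linear with slope $-c_\nu$; reading off the one-sided derivative $g_\nu'(0^+)$ from the known continuous function $g_\nu$ then gives $c_\nu=-g_\nu'(0^+)$. For the exceptional directions admitting no empty initial interval (those $\nu$ with $-R\nu\in K$), I would recover $c_\nu$ by continuity: Theorem~\ref{lemma: The continuity lemma}(i) together with Cauchy--Schwarz shows $\nu\mapsto\int_0^T\chi_\tau^\nu(K)\,d\tau=Tc_\nu$ is continuous on $\mathbb{S}^{d-1}$, so $c_\nu$ is the continuous extension of the values already determined on the non-exceptional directions.

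Having reconstructed every $c_\nu$, I recover the full ECC $\chi_t^\nu(K)=g_\nu'(t)+c_\nu$ for all $\nu$, i.e.\ the entire ECT of $K$. Feeding this into Corollary~6 of \cite{ghrist2018persistent} returns $K$, exhibiting an explicit left inverse of $SECT$ on $\mathscr{S}_{R,d}^M$ and hence injectivity; invertibility onto the image $SECT(\mathscr{S}_{R,d}^M)\subset C(\mathbb{S}^{d-1};\mathcal{H})$ follows.

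The main obstacle, and the step needing genuine care, is the recovery of the constants $\{c_\nu\}$: the centering in Eq.~\eqref{Eq: definition of SECT} annihilates exactly the direction-wise average of the ECC, so the integral operator is not injective on its own and must be supplemented by the geometric normalization from $K\subset\overline{B(0,R)}$ and the integrality of Euler characteristics. The delicate point is controlling the exceptional set of directions where $K$ meets the bounding sphere at the antipodal point $-R\nu$: one must verify this set is small enough for $\nu\mapsto c_\nu$ to be determined by continuity, and that the reconstructed integer-valued curve is genuinely realized as an ECC. I would expect the argument to be cleanest under the standing assumption that $K$ lies in the open ball (so the empty-initial-interval normalization is available in \emph{every} direction), with the general case handled by the continuity extension above.
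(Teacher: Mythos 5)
Your proposal takes a genuinely different route from the paper's, and the two arguments are almost exactly complementary. The paper's entire proof is what you call Step (b): it observes that a finitely triangulable $K$ is a finite union of simplices, that each simplex is locally closed, hence that $K$ is constructible in the sense of Definition~\ref{Definition: constructible sets}, and it then cites Corollary~6 of \cite{ghrist2018persistent} --- an injectivity statement for the ECT (and PHT) on constructible sets. The paper says nothing at all about your Step (a), the recovery of the Euler characteristic curves from $SECT(K)$, even though the centering in Eq.~\eqref{Eq: definition of SECT} visibly destroys information; conversely, you gloss over the triangulable-implies-constructible bridge, which is the one thing the paper actually verifies (a minor omission, since that bridge is easy). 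So the step you correctly identify as ``the real work'' is precisely the step the paper omits.

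There is, however, a genuine gap in your handling of the exceptional directions, and it cannot be closed. Your recovery $c_\nu=-g_\nu'(0^+)$ is correct whenever $-R\nu\notin K$, and the continuity of $\nu\mapsto c_\nu$ does follow from Eq.~\eqref{Eq: continuity inequality} and Cauchy--Schwarz; but continuity determines $c_\nu$ only on the \emph{closure} of the set of non-exceptional directions. If $K$ contains a spherical cap of $\partial B(0,R)$, the exceptional set $\{\nu\in\mathbb{S}^{d-1}:-R\nu\in K\}$ has nonempty interior, and on that open set of directions there are no nearby non-exceptional values to extend. In this regime the theorem itself fails on $\mathscr{S}_{R,d}^M$ as defined: take $K_1=\overline{B(0,R)}$ and $K_2=\partial B(0,R)$. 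Both belong to $\mathscr{S}_{R,d}^M$ (each is finitely triangulable and tame, with at most two HCPs and a single positive-persistence feature per direction). In every direction $\nu$, the sublevel sets of either shape are nonempty contractible caps for all $t\in(0,T)$, so both ECCs equal $1$ on $[0,T)$ and can differ only at the single level $t=T$; consequently $SECT(K_1)=SECT(K_2)\equiv 0$ while $K_1\neq K_2$. (The constant-annihilation mechanism you isolate is realized even more starkly by $K_1$ versus the empty set, whose ECCs differ by the constant $1$.) Thus no argument can determine $c_\nu$ there, and your proof is valid exactly under the standing assumption you yourself suggest: $K\subset B(0,R)$, the open ball, in which case every direction admits an initial empty interval, the ECC is recovered in full, and Corollary~6 of \cite{ghrist2018persistent} finishes the proof. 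Under that restriction your argument is complete --- and notably more rigorous than the paper's own proof, whose one-line appeal to Corollary~6 silently presumes that the SECT determines the ECT, which is precisely the point at which injectivity can break down.
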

\noindent The map \(\operatorname{SECT}: \mathcal{S}_{R,d}^M \rightarrow C(\mathbb{S}^{d-1}; \mathcal{H})\) is injective, but not surjective. Specifically, Theorem \ref{lemma: The continuity lemma} suggests that the image of \(\operatorname{SECT}\) does not lie outside of \(C^{0,\frac{1}{2}}(\mathbb{S}^{d-1}; \mathcal{H})\). An explicit characterization of the image \(\operatorname{SECT}(\mathcal{S}_{R,d}^M)\) remains a topic for future research. 

Inspired by Theorem \ref{thm: invertibility}, one may consider reconstructing a shape \(K\) from either the \(\operatorname{SECT}(K)\) or the \(\operatorname{ECT}(K)\). From a theoretical standpoint, a shape \(K\) can be reconstructed using the ``Schapira's inversion formula'' \citep{schapira1995tomography}. Further details are available in \cite{ghrist2018persistent}. From an algorithmic perspective, the proof of Theorem 3.1 in \cite{turner2014persistent} offers an algorithm to reconstruct low-dimensional meshes from their ECT. Nevertheless, effective algorithmic approaches to reconstructing shapes are still underdeveloped. Challenges in reconstructing shapes are extensively discussed in \cite{fasy2018challenges}. A comprehensive exploration of the reconstruction using SECT is also left for future research.

Together with Eq.~\eqref{Eq: final def of SECT}, Theorem \ref{thm: invertibility} allows us to represent each $K\in\mathcal{S}_{R,d}^M$ by $\operatorname{SECT}(K) \in C(\mathbb{S}^{d-1}; \mathcal{H})$. This perspective aids us in modeling the randomness of shapes using probability measures on the separable Banach space $C(\mathbb{S}^{d-1}; \mathcal{H})$. Here, we prefer $C(\mathbb{S}^{d-1}; \mathcal{H})$ over $\frac{1}{2}$-Hölder space $C^{0,\frac{1}{2}}(\mathbb{S}^{d-1}; \mathcal{H})$. This is because $\frac{1}{2}$-Hölder spaces are typically not separable \citep[][Remark 4.21]{hairer2009introduction}. The separability condition is essential for probability measures on Banach spaces to exhibit non-pathological behavior \citep[][Section 4]{hairer2009introduction}.

%If the shapes of interest are represented as polygon meshes, their ECC defined in Eq.~\eqref{Eq: first def of Euler characteristic curve} can easily be computed (e.g., \cite{wang2021statistical}). The SECT is then derived directly from the computed ECC (see Eq.~\eqref{Eq: definition of SECT}). The mandibular molars in Figure \ref{fig: Teeth} are represented as polygon meshes, and we will use the mesh structure to compute the SECT of the molars in Section \ref{section: Applications}. In Appendix \textcolor{red}{G}, we briefly introduce an approach for approximating the ECC of shapes in $\mathbb{R}^d$ in scenarios where the polygon mesh representation of shapes is not available, which in turn provides a corresponding approximation to SECT. This approach applies Čech complexes and the nerve theorem (see Chapter III of \cite{edelsbrunner2010computational}), which are implemented in proof-of-concept examples and simulations in this paper (Section \ref{section: Simulation experiments}, Appendix \textcolor{red}{A.1}, and Appendix \textcolor{red}{A.2}).

\section{Probabilistic Distributions over the SECT}\label{section: distributions of Gaussian bridge}

To address the motivating question outlined in Section \ref{section: A Motivation Question} using hypothesis testing, we need to view the observed shapes (e.g., the molars in Figure \ref{fig: Teeth}) as shape-valued random variables. In this section, we construct a probability space to model the randomness of shapes and make the SECT a random variable (in the measurable sense) taking values in $C(\mathbb{S}^{d-1}; \mathcal{H})$. More importantly, this probability space helps justify the KL expansion of the SECT, which lays the foundations for our hypothesis testing method in Section \ref{section: hypothesis testing}.

%\subsection{Formulation of the SECT as a Random Variable}\label{Distributions of SECT in Each Direction}
\noindent\textbf{Probability Space.} Suppose $\mathcal{S}_{R,d}^M$ is equipped with a $\sigma$-algebra $\mathscr{F}$. A distribution of shapes $K$ across $\mathcal{S}_{R,d}^M$ is represented by a probability measure $\mathbb{P}=\mathbb{P}(dK)$ on $\mathscr{F}$. Then, $(\mathcal{S}_{R,d}^M, \mathscr{F}, \mathbb{P})$ is a probability space. For each fixed $(\nu,t)$, the integer-valued map $\chi^\nu_t: K \mapsto \chi^\nu_t(K)$ is defined on $\mathcal{S}_{R,d}^M$. Hereafter, we assume the following:

\begin{assumption}\label{assumption: the measurability of ECC}
For each fixed $(\nu,t)\in\mathbb{S}^{d-1}\times [0,T]$, the map $\chi^\nu_t:\, (\mathcal{S}^M_{R,d},\, \mathscr{F} ) \rightarrow \left(\mathbb{R}, \, \mathscr{B}(\mathbb{R}) \right)$ is a measurable function and, hence, a real-valued random variable.
\end{assumption}

A $\sigma$-algebra $\mathscr{F}$ satisfying Assumption \ref{assumption: the measurability of ECC} exists. Here, we construct a metric $\rho$ on $\mathcal{S}_{R,d}^M$ and show that the Borel algebra $\mathscr{B}(\rho)$ induced by $\rho$ satisfies Assumption \ref{assumption: the measurability of ECC}. We define
\begin{align}\label{Eq: distance between shapes}
 \rho(K_1,\, K_2) \overset{\operatorname{def}}{=} \sup_{\nu\in\mathbb{S}^{d-1}} \left\{ \left( \int_0^T \Big\vert \, \chi_\tau^{\nu}(K_1)-\chi_\tau^{\nu}(K_2) \, \Big\vert^2 d\tau \right)^{1/2} \right\},\ \ \mbox{ for all }K_1, K_2 \in \mathcal{S}_{R,d}^M.
 \end{align}
\begin{theorem}\label{Thm: metric theorem for shapes}
The map $\rho$ is a metric on $\mathcal{S}_{R,d}^M$. Assumption \ref{assumption: the measurability of ECC} is satisfied if $\mathscr{F}=\mathscr{B}(\rho)$.
\end{theorem}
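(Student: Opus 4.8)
The plan is to verify the four metric axioms in part (i) and then, in part (ii), to deduce measurability of each evaluation map $\chi^\nu_t$ from the $\rho$-continuity of its primitive. I start with the routine axioms. Non-negativity and symmetry are immediate, since each integrand $|\chi_\tau^{\nu}(K_1)-\chi_\tau^{\nu}(K_2)|^2$ is non-negative and symmetric in $(K_1,K_2)$. Finiteness is where Theorem \ref{thm: boundedness topological invariants theorem} enters: part (ii) of that theorem gives $|\chi_\tau^\nu(K)|\le M$ uniformly in $\nu$ and $\tau$, so the integrand is bounded by $4M^2$ and $\rho(K_1,K_2)\le 2M\sqrt{T}<\infty$. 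For the triangle inequality I fix $\nu$, apply Minkowski's inequality in $L^2([0,T])$ to the three ECCs, and bound each resulting term by the corresponding supremum:
\[
\left(\int_0^T \big|\chi_\tau^{\nu}(K_1)-\chi_\tau^{\nu}(K_3)\big|^2\, d\tau\right)^{1/2} \le \rho(K_1,K_2)+\rho(K_2,K_3),
\]
after which I take the supremum over $\nu$ on the left-hand side.

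The only substantive part of (i) is the identity of indiscernibles, and this is the step I expect to be the main obstacle, since it is where the topological content lives. Suppose $\rho(K_1,K_2)=0$. Because the supremum over $\nu$ of a family of non-negative numbers vanishes, $\int_0^T|\chi_\tau^{\nu}(K_1)-\chi_\tau^{\nu}(K_2)|^2\, d\tau=0$ for every $\nu$, i.e. the two ECCs agree for Lebesgue-almost every $\tau$ in each direction. The key observation is that the SECT in Eq.~\eqref{Eq: definition of SECT} depends on the ECC \emph{only} through the integrals $\int_0^t \chi_\tau^\nu(K)\, d\tau$, which are insensitive to modification of the integrand on a null set; hence almost-everywhere agreement of the ECCs in every direction already forces $SECT(K_1)(\nu;t)=SECT(K_2)(\nu;t)$ for all $(\nu,t)$, so $SECT(K_1)=SECT(K_2)$ in $C(\mathbb{S}^{d-1};\mathcal{H})$. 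Invertibility of the SECT map (Theorem \ref{thm: invertibility}) then forces $K_1=K_2$. It is precisely this invertibility — ultimately Corollary 6 of Ghrist et al. — that promotes the a priori semi-distance $\rho$ to a genuine metric; this is the crux of (i).

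For part (ii) the goal is measurability of $\chi^\nu_t:(\mathscr{S}^M_{R,d},\mathscr{B}(\rho))\to(\mathbb{R},\mathscr{B}(\mathbb{R}))$ for each fixed $\nu$ and $t$. The point-evaluation $\chi^\nu_t$ is not $\rho$-continuous (altering $K$ on a null set of levels leaves $\rho$ unchanged yet can change the value at $t$), so I work instead with the primitive $G^\nu_t(K):=\int_0^t \chi_\tau^\nu(K)\, d\tau$. By Cauchy–Schwarz,
\[
\big|G^\nu_t(K_1)-G^\nu_t(K_2)\big| \le \sqrt{t}\left(\int_0^T \big|\chi_\tau^{\nu}(K_1)-\chi_\tau^{\nu}(K_2)\big|^2\, d\tau\right)^{1/2} \le \sqrt{t}\,\rho(K_1,K_2),
\]
so $G^\nu_t$ is $\rho$-Lipschitz, hence $\mathscr{B}(\rho)$-continuous and measurable, for every $t$. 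Since the refined ECC is right-continuous, the forward difference quotients of its primitive recover its value, $\chi^\nu_t(K)=\lim_{n\to\infty} n\big[G^\nu_{t+1/n}(K)-G^\nu_t(K)\big]$ for $t\in[0,T)$; each quotient is $\mathscr{B}(\rho)$-measurable and a pointwise limit of measurable functions is measurable, giving measurability of $\chi^\nu_t$ on $[0,T)$. The endpoint $t=T$, where the forward average is unavailable, is dispatched by a brief separate argument exploiting $K^\nu_T=K$ together with the finiteness of HCPs from Definition \ref{def: HCP and tameness}, which controls the single possible jump at $T$. With every $\chi^\nu_t$ thus measurable, $\mathscr{F}=\mathscr{B}(\rho)$ satisfies Assumption \ref{assumption: the measurability of ECC}, completing (ii).
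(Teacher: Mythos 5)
Your proposal tracks the paper's own proof almost step for step. For part (i), the paper likewise disposes of symmetry and the triangle inequality by norm properties, and rests the identity of indiscernibles on Theorem \ref{thm: invertibility}: it routes the argument through an auxiliary ``primitive ECT'' $K\mapsto\{\int_0^t\chi_\tau^\nu(K)\,d\tau\}_{t\in[0,T]}$ valued in the Hilbert space of $L^2$ functions with $L^2$ weak derivative vanishing at $0$, and uses the Sobolev embedding to conclude that $\rho(K_1,K_2)=0$ forces the primitives, hence the SECTs, to coincide; your observation that a.e.\ agreement of the ECCs already makes the defining integrals of the SECT equal is the same argument with the functional-analytic packaging stripped away. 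For part (ii) on $[0,T)$, the paper proves measurability of exactly your $G^\nu_t$ (the PECT evaluated at $(\nu,t)$) by factoring it into an isometry, two projections, and a Sobolev embedding, then recovers $\chi^\nu_t$ by forward difference quotients and pointwise limits --- your Cauchy--Schwarz Lipschitz bound $\vert G^\nu_t(K_1)-G^\nu_t(K_2)\vert\le\sqrt{t}\,\rho(K_1,K_2)$ is a more direct route to the same continuity, but the mechanism is identical.

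The one genuine divergence is the endpoint $t=T$, and there your promised patch does not exist in the form you describe. Under the paper's conventions $\chi^\nu_T(K)=\chi(K)$ (the sublevel set at $T$ is all of $K$), and this value can be invisible to every primitive: take $K_1=\partial \overline{B(0,R)}$ (the circle) and $K_2=\overline{B(0,R)}$ (the disk) in $\mathbb{R}^2$; both lie in $\mathscr{S}_{R,2}^M$, in every direction the ECC of $K_1$ equals $1$ on $[0,T)$ (points and arcs) and $0$ at $t=T$, while the ECC of $K_2$ is identically $1$ on $[0,T]$. Thus $G^\nu_t(K_1)=G^\nu_t(K_2)$ for all $(\nu,t)$, so no pointwise limit of functions built from the $G^\nu_t$ can output $\chi^\nu_T$; knowing from tameness that there is at most one jump at $T$ gives no handle on its size, because that jump sits on a Lebesgue-null set. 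You should know, however, that the paper's proof has exactly the same hole, only hidden: its difference-quotient formula at $t=T$ requires $PECT(K)(\nu;T+\delta_n)$, which is outside the domain, and extending the ECC past $T$ by $\chi(K)$ makes the argument circular. Worse, the example above gives two distinct shapes with $\rho(K_1,K_2)=0$ and equal SECTs, so the boundary-touching case is a blind spot of Theorem \ref{thm: invertibility} itself (the cited injectivity result concerns the integer-valued ECT including its value at every $t$, not its a.e.\ equivalence class), and hence of part (i) as well. The clean repairs are to require $K\subset B(0,R)$ (open ball), in which case the ECC equals $\chi(K)$ on an interval of positive length below $T$ and backward difference quotients settle $t=T$, or to run the filtration and the integral in Eq.~\eqref{Eq: distance between shapes} up to some $T'>2R$. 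In short: wherever the paper's proof actually works, yours works by the same argument; at $t=T$ both fail in the same way, which your write-up at least has the virtue of flagging.
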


Under Assumption \ref{assumption: the measurability of ECC}, the ECC $\{\chi^\nu_t\}_{t\in[0,T]}$, for each $\nu\in\mathbb{S}^{d-1}$, is a stochastic process defined on the probability space $(\mathcal{S}_{R,d}^M, \mathscr{F}, \mathbb{P})$. Since each sample path $\{\chi^\nu_t(K)\}_{t\in[0,T]}$ has finitely many discontinuities (Lemma \ref{thm: tameness property}), $\int_0^t \chi_\tau^\nu(K) \,d\tau$ for each $t\in[0,T]$ is a Riemann integral, which is equal to the limit of Riemann sum $\int_0^t \chi_\tau^\nu(K) \,d\tau = \lim_{n\rightarrow\infty} \left\{\frac{t}{n} \sum_{l=1}^n \chi^\nu_{\frac{tl}{n}}(K)\right\}$. Given that each $\chi^\nu_{\frac{tl}{n}}$ is a random variable under Assumption \ref{assumption: the measurability of ECC}, the limit of the Riemann sum for each $t\in[0,T]$ is a random variable as well. Therefore, for each $\nu\in\mathbb{S}^{d-1}$, $\{\int_0^t \chi_\tau^\nu d\tau\}_{t\in[0,T]}$ with $\int_0^t \chi_\tau^\nu d\tau: K \mapsto \int_0^t \chi_\tau^\nu(K) d\tau$ is a stochastic process. Then, under Assumption \ref{assumption: the measurability of ECC}, Eq.~\eqref{Eq: definition of SECT} defines the following stochastic process on $(\mathcal{S}_{R,d}^M, \mathscr{F}, \mathbb{P})$ for each $\nu\in\mathbb{S}^{d-1}$
\begin{align}\label{Eq: def SECTs as stochastic processes}
\begin{aligned}
    & \operatorname{SECT}(\nu) \overset{\operatorname{def}}{=} \left\{\int_0^t\chi_\tau^\nu d\tau - \frac{t}{T} \int_0^T \chi_\tau^\nu d\tau   \overset{\operatorname{def}}{=} \operatorname{SECT}(\nu, t) \right\}_{t\in[0,T]}.
    \end {aligned}
\end{align}
Precisely, for each fixed $\nu$, we have the stochastic process $\operatorname{SECT}(\nu):K \mapsto\operatorname{SECT}(K)(\nu)=\{\operatorname{SECT}(K)(\nu,t)\}_{t\in[0,T]}$ defined on $(\mathcal{S}_{R,d}^M, \mathscr{F}, \mathbb{P})$; and, for each fixed $(\nu,t)$, we have the real-valued random variable $\operatorname{SECT}(\nu,t): K\mapsto\operatorname{SECT}(K)(\nu,t)$ defined on $(\mathcal{S}_{R,d}^M, \mathscr{F}, \mathbb{P})$.

Since $\mathcal{H}$ is an RKHS (Appendix \ref{section: notation for closed vs. open}), Lemma \ref{thm: Sobolev function paths} and Theorem \ref{lemma: The continuity lemma}, together with Theorem 7.1.2 of \cite{hsing2015theoretical}, imply the following. Its proof is omitted.
\begin{theorem}\label{thm: SECT distribution theorem in each direction}
(i) For each $\nu\in\mathbb{S}^{d-1}$, $\operatorname{SECT}(\nu)$ is a real-valued stochastic process with sample paths in $\mathcal{H}$. Equivalently, $\operatorname{SECT}(\nu)$ is a random variable taking values in $(\mathcal{H}, \mathscr{B}(\mathcal{H}))$. (ii) The map $\operatorname{SECT}$ defined in Eq.~\eqref{Eq: final def of SECT} is a random variable taking values in $C(\mathbb{S}^{d-1};\mathcal{H})$.
\end{theorem}
\noindent Using Theorem \ref{thm: SECT distribution theorem in each direction} in conjunction with Theorem \ref{thm: invertibility}, we can represent random shapes (which model the surfaces of the mandibular molars in Figure \ref{fig: Teeth}) as $C(\mathbb{S}^{d-1},\mathcal{H})$-valued random variables. This representation through the SECT has no loss of information.

In Appendix \ref{section: Proof-of-Concept Simulation Examples II: Random Shapes}, we provide proof-of-concept examples to illustrate random shapes and their SECT representations visually. These examples relate the SECT to Fréchet regression \citep{petersen2019frechet}, Wasserstein regression \citep{chen2021wasserstein}, and manifold learning \citep{dunson2021inferring, meng2021principal, li2022efficient}.

\noindent\textbf{Mean and Covariance of the SECT.} For deriving the KL expansion in Section \ref{section: hypothesis testing}, we define the mean and covariance of the SECT. To do so, we need the following lemma.
\begin{lemma}\label{assumption: existence of second moments}
For any probability measure $\mathbb{P}$ defined on the measurable space $(\mathcal{S}_{R,d}^M, \mathscr{F})$, we have $\mathbb{E}\left\{ \sup_{\nu\in\mathbb{S}^{d-1}} \Vert \operatorname{SECT}(\nu)\Vert^2_{\mathcal{H}} \right\}=\int_{\mathcal{S}_{R,d}^M} \left\{ \sup_{\nu\in\mathbb{S}^{d-1}} \Vert \operatorname{SECT}(K)(\nu)\Vert^2_{\mathcal{H}} \right\} \,\mathbb{P}(dK)<\infty.$
\end{lemma}
\noindent Lemma \ref{assumption: existence of second moments}, together with Eq.~\eqref{eq: Sobolev embedding from Morrey}, implies that $\mathbb{E}\vert \operatorname{SECT}(\nu, t)\vert^2 \le \Tilde{C}^2_T \cdot \mathbb{E}\Vert \operatorname{SECT}(\nu)\Vert^2_{\mathcal{H}}<\infty$ for all $(\nu,t)\in\mathbb{S}^{d-1}\times[0,T]$. Then, we define the mean and covariance functions as follows
\begin{align}\label{Eq: mean and covariance functions}
    \begin{aligned}
        & m_\nu(t)  =  \mathbb{E}\left\{ \operatorname{SECT}(\nu, t) \right\} = \int_{\mathcal{S}_{R,d}^M} \operatorname{SECT}(K)(\nu,t) \,\mathbb{P}(dK),\\
        & \Xi_\nu(s,t)  =  \operatorname{Cov}\Big(\operatorname{SECT}(\nu, s), \, \operatorname{SECT}(\nu, t)\Big),\ \ \mbox{ for }s,t\in[0,T] \text{ and }\nu\in\mathbb{S}^{d-1}.
    \end{aligned}
\end{align}
Lemma \ref{thm: mean is in H} provides several properties of the mean $m_\nu$ and covariance $\Xi_\nu$ that validate our KL expansion of $\operatorname{SECT}(\nu)$ in Section \ref{section: hypothesis testing}. Additionally, Lemma \ref{thm: mean is in H} demonstrates that the mean $\pmb{m}\overset{\operatorname{def}}{=}\{m_\nu\}_{\nu\in\mathbb{S}^{d-1}}$ of SECT belongs to $C(\mathbb{S}^{d-1};\mathcal{H})$. A tentative discussion on the ``pseudo-inverse" $\operatorname{SECT}^{-1}(\pmb{m})$ is provided after Lemma \ref{thm: mean is in H} in Appendix \ref{section: Further Theorems}.

In most shape analysis studies, data are preprocessed by alignment. In Appendix \ref{section: ECT Alignment}, we introduce the ``ECT alignment" as a preprocessing step before any statistical inference. Throughout the manuscript, we assume that the data have been aligned using this method. The ECT alignment exploits rigid motions, does not rely on landmarks, and is equivalent to the alignment approach outlined in \cite{wang2021statistical} (Supplementary Section 4). The primary objective of the ECT alignment is to minimize the differences between two shapes that arise from rigid motions. For instance, the molars in Figure \ref{fig: Teeth} were aligned using the ECT alignment. Furthermore, the ECT alignment is compatible with our SECT framework. Appendix \ref{section: ECT Alignment} demonstrates that the ECT alignment does not alter the qualitative properties of SECT (e.g., the measurability, Sobolev-regularity, and $\frac{1}{2}$-Hölder continuity).

In applications, it is infeasible to sample infinitely many directions $\nu \in \mathbb{S}^{d-1}$ and levels $t \in [0,T]$. For given shapes $K$, we compute $\operatorname{SECT}(K)(\nu, t)$ for finitely many directions $\{\nu_1, \cdots, \nu_\Gamma\} \subseteq \mathbb{S}^{d-1}$ and levels $\{t_1, \cdots, t_\Delta\} \subseteq [0,T]$. To retain information about shapes $K$, one needs to properly set the numbers of directions and levels (i.e., $\Gamma$ and $\Delta$). From a theoretical viewpoint, \cite{curry2022many} comprehensively discussed the number $\Gamma$ of directions needed to recover shapes $K$ from $\operatorname{ECT}(K)$ when $K$ are ``piecewise linearly embedded shapes with plausible geometric bounds.'' From the numerical perspective, we note the following: (i) \cite{wang2021statistical} provided detailed simulation studies on the choices of $\Gamma$ and $\Delta$ in sub-image analysis, and a general guideline for setting $\Gamma$ and $\Delta$ in practice was presented in Supplementary Table 1 therein; and (ii) in our Appendix \ref{section: Runtime}, we provide detailed numerical experiments on the trade-offs between the choices of $\Gamma$ and $\Delta$, the statistical power of our proposed algorithms (Algorithms \ref{algorithm: testing hypotheses on mean functions; Appendix} and \ref{algorithm: permutation-based testing hypotheses on mean functions}), and computational cost.

\section{Testing Hypotheses on Shapes}\label{section: hypothesis testing}

In this section, we apply the probabilistic formulation from Section \ref{section: distributions of Gaussian bridge} and Lemma \ref{thm: mean is in H} to test hypotheses on shapes. Suppose $\mathbb{P}^{(1)}$ and $\mathbb{P}^{(2)}$ are two distributions on the measurable space $(\mathcal{S}_{R,d}^M, \mathscr{F})$. Let $\mathbb{P}^{(1)}\otimes \mathbb{P}^{(2)}$ be the product probability measure defined on the product $\sigma$-algebra $\mathscr{F}\otimes\mathscr{F}$, satisfying $\mathbb{P}^{(1)}\otimes \mathbb{P}^{(2)}(A\times B)=\mathbb{P}^{(1)}(A)\cdot \mathbb{P}^{(2)}(B)$ for all $A,B\in\mathscr{F}$. To address the motivating question from Section \ref{section: A Motivation Question}, we test the following hypotheses
\begin{align}\label{eq: original hypotheses}
    H_0^*:\ \ \mathbb{P}^{(1)} = \mathbb{P}^{(2)},\ \ \ vs. \ \ \ H_1^*: \ \ \mathbb{P}^{(1)} \ne \mathbb{P}^{(2)},
\end{align}
e.g., suppose $\mathbb{P}^{(1)}$ and $\mathbb{P}^{(2)}$ model the distributions of molars from two genera of primates (Figure \ref{fig: Teeth}). Rejecting the $H_0^*$ in Eq.~\eqref{eq: original hypotheses} helps distinguish the two genera of primates.

Define $m_\nu^{(j)}(t)  =  \int_{\mathcal{S}_{R,d}^M} \operatorname{SECT}(K)(\nu, t) \,\mathbb{P}^{(j)}(dK)$ for $j\in\{1,2\}$ as the mean functions corresponding to $\mathbb{P}^{(1)}$ and $\mathbb{P}^{(2)}$. To reject the null $H_0^*$ in Eq.~\eqref{eq: original hypotheses} (equivalently, distinguish two collections of shapes), it suffices to reject the null hypothesis $H_0$ in the following
\begin{align}\label{eq: the main hypotheses}
    \begin{aligned}
        & H_0: m_\nu^{(1)}(t)=m_\nu^{(2)}(t)\mbox{ for all }(\nu,t),\ \ \ vs.\ \ \ H_1:  m_\nu^{(1)}(t)\ne m_\nu^{(2)}(t)\mbox{ for some }(\nu,t).
    \end{aligned}
\end{align}

\noindent\textbf{Analysis of Variance for Functional Data (fdANOVA).} Considering the hypotheses in Eq.~\eqref{eq: the main hypotheses} for all directions $\nu\in\mathbb{S}^{d-1}$ results in simultaneous multiple-comparisons and inflation of the type I error. To address this issue, we focus on a specific direction, motivated by the observation that the null hypothesis $H_0$ in Eq.~\eqref{eq: the main hypotheses} is equivalent to $\sup_{\nu\in\mathbb{S}^{d-1}}\{\Vert m_{\nu}^{(1)}-m_{\nu}^{(2)} \Vert_{\mathcal{B}}\}=0$. Hence, the direction of interest is defined as
\begin{align}\label{eq: def of distinguishing direction}
    \nu^* \overset{\operatorname{def}}{=} \argmax_{\nu\in\mathbb{S}^{d-1}} \left\{\Vert m_{\nu}^{(1)}-m_{\nu}^{(2)} \Vert_{\mathcal{B}} \right\}.
\end{align}
Lemma \ref{thm: mean is in H} and Eq.~\eqref{eq: H, Holder, B embeddings} imply $\{m_\nu^{(j)}\}_{j=1}^2\subseteq\mathcal{B}$ for all $\nu$. Lemma \ref{thm: mean is in H}, together with Eq.~\eqref{eq: Sobolev embedding from Morrey}, confirms the existence of a maximizer in Eq.~\eqref{eq: def of distinguishing direction}. The maximizer in Eq.~\eqref{eq: def of distinguishing direction} may not be unique. If there are multiple maximizers, we arbitrarily choose one, as this choice does not influence our framework. The null hypothesis \( H_0 \) in Eq.~\eqref{eq: the main hypotheses} is then equivalent to \( \Vert m_{\nu^*}^{(1)} - m_{\nu^*}^{(2)} \Vert_{\mathcal{B}} = 0 \), where the \( \nu^* \) defined in Eq.~\eqref{eq: def of distinguishing direction} is called a \textit{distinguishing direction}. Hereafter, we investigate the distribution of \( \operatorname{SECT}(\nu^*) \).

Based on the discussion above, testing the hypotheses in Eq.~\eqref{eq: the main hypotheses} is equivalent to testing $m_{\nu^*}^{(1)}(t) = m_{\nu^*}^{(2)}(t)$ for $t\in[0,T]$ using $\operatorname{SECT}(\nu^*)$, which is a fdANOVA problem that has been well-studied in the literature \citep[e.g.,][Section 5.2]{zhang2013analysis}. However, many state-of-the-art fdANOVA approaches are incompatible with $\operatorname{SECT}(\nu^*)$. For example, the Gaussianity of $\operatorname{SECT}(\nu^*)$ is not guaranteed (Remark \ref{remark: The Gaussianity of the SECT is not guaranteed.}), and the ``two-sample problem assumptions" in Section 5.2 of \cite{zhang2013analysis} may not be satisfied. Besides, the $L^2$-norm-based test \citep{zhang2007statistical} and F-type test \citep{shen2004f} are not preferred when the functional data are not Gaussian \citep[][Chapter 5]{zhang2013analysis}. Additionally, many fdANOVA methods are time-consuming. For example, tests based on random projections \citep[TRP,][]{cuesta2010simple} require the computation of (at least 30) $L^2$-projections for each observed function, followed by the application of appropriate ANOVA tests to these projections. To address the Gaussianity issue and achieve computational efficiency, we propose a method for fdANOVA using the KL expansion. Our test has a foundation that aligns with the probabilistic framework of $\operatorname{SECT}(\nu^*)$ in Section \ref{section: distributions of Gaussian bridge}; it is comparable with the existing methods in terms of size and power (see Appendix \ref{appendix: Numerical Experiments on One-way ANOVA --- Existing Methods vs. Our Proposed Methods}); and it is also computationally efficient, allowing for the permutation test used with our method.

\noindent\textbf{Karhunen–Loève Expansion.} Let $\Xi_{\nu^*}^{(j)}(s,t)$ be the covariance function of the stochastic process $\operatorname{SECT}(\nu^*)$ corresponding to $\mathbb{P}^{(j)}$, for $j\in\{1,2\}$ (see Eq.~\eqref{Eq: mean and covariance functions}). Hereafter, we assume the following, which is true under the null hypothesis \( H_0^*: \mathbb{P}^{(1)} = \mathbb{P}^{(2)} \) in Eq.~\eqref{eq: original hypotheses}.
\begin{assumption}[Homoscedasticity]\label{assumption: equal covariance assumption}
$\Xi_{\nu^*} \overset{\operatorname{def}}{=}\Xi_{\nu^*}^{(1)}=\Xi_{\nu^*}^{(2)}$, where $\nu^*$ is defined in Eq.~\eqref{eq: def of distinguishing direction}.
\end{assumption}
\noindent This is a standard assumption in the fdANOVA literature \citep[e.g.,][Section 5.2]{zhang2013analysis} and can be tested using the methods proposed by \cite{guo2018new}. %We will derive a $\chi^2$-test (Algorithm \ref{algorithm: testing hypotheses on mean functions; Appendix}) via the KL expansion under Assumption \ref{assumption: equal covariance assumption}. 

We define an integral operator on $L^2([0,T]^2)$ as $f\mapsto \int_0^T f(s)\cdot\Xi_{\nu^*}(s,\cdot\,) \,ds$. This operator is compact and self-adjoint \citep[][Theorems 4.6.2 and Example 3.3.4]{hsing2015theoretical}. Moreover, the Hilbert-Schmidt theorem \citep[][Theorem VI.16]{reed2012methods} suggests that there is a complete orthonormal basis $\{\phi_l\}_{l=1}^\infty$ for $L^2([0,T])$ so that (i) each $\phi_l$ is an eigenfunction with eigenvalue $\lambda_l$, (ii) $\lambda_1\ge\lambda_2\ge\cdots\ge0$, and (iii) $\lim_{l\rightarrow\infty}\lambda_l=0$. Lemma \ref{thm: mean is in H} and Theorem 7.3.5 of \cite{hsing2015theoretical} imply the following KL expansion:
\begin{theorem}[Karhunen–Loève expansion]\label{thm: KL expansions of SECT}
(i) For each fixed $j\in\{1,2\}$, we have
\begin{align}\label{eq: rigorous KL expansions of SECT}
    \begin{aligned}
        \lim_{L\rightarrow\infty}\sup_{t\in[0,T]}\mathbb{E}^{(j)}\left[ \, \operatorname{SECT}(\nu^*,\,t) - m^{(j)}_{\nu^*}(t) - \sum_{l=1}^L \sqrt{\lambda_l} \cdot Z_{l}^{(j)} \cdot \phi_l(t) \, \right]^2 =0,
    \end{aligned}
\end{align}
where $Z_{l}^{(j)}(K) \overset{\operatorname{def}}{=} \frac{1}{\sqrt{\lambda_l}}\int_0^T \{\operatorname{SECT}(K)(\nu^*,\, t)-m_{\nu^*}^{(j)}(t) \}\cdot \phi_l(t) \,dt$ for $l=1,2,\ldots$, and $\mathbb{E}^{(j)}$ is the expectation associated with $\mathbb{P}^{(j)}$. For each $j\in\{1,2\}$, the random variables $\{Z_{l}^{(j)}\}_{l=1}^\infty$ are defined on the probability space $(\mathcal{S}_{R,d}^M, \mathscr{F}, \mathbb{P}^{(j)})$, are mutually uncorrelated, and have mean 0 and variance 1. (ii)
There exists $\mathcal{N}\in\mathscr{F}\otimes \mathscr{F}$ so that $\mathbb{P}^{(1)}\otimes \mathbb{P}^{(2)}(\mathcal{N})=0$ and
\begin{align}\label{eq: KL expansions of SECT}
    \begin{aligned}
        &  \delta_l\left(K^{(1)}, \, K^{(2)}\right) \overset{\operatorname{def}}{=} \frac{1}{\sqrt{2 \lambda_l}} \int_0^T \left\{ \operatorname{SECT}(K^{(1)})(\nu^*,\,t)-\operatorname{SECT}(K^{(2)})(\nu^*,\,t)\right\}\cdot \phi_l(t) \,dt \\
        &\ \ \ \ \ \ \ \ \ \ \ \ \ \ \ \ \ \, = \theta_l + \left( \frac{Z_{l}^{(1)}(K^{(1)})-Z_{l}^{(2)}(K^{(2)})}{\sqrt{2}} \right),\\ & \mbox{where }\theta_l  \overset{\operatorname{def}}{=} \frac{1}{\sqrt{2 \lambda_l}} \int_0^T \left\{ m_{\nu^*}^{(1)}(t) - m_{\nu^*}^{(2)}(t) \right\}\cdot \phi_l(t) \,dt,
    \end{aligned}
\end{align}
for any $(K^{(1)},\, K^{(2)})\notin \mathcal{N}$ and each fixed $l=1,2,\ldots$. The null set $\mathcal{N}$ is allowed to be empty.
\end{theorem}

\noindent Using the KL expansion in Eq.~\eqref{eq: rigorous KL expansions of SECT}, the random sampling of shapes may be considered, which is discussed in Appendix \ref{appendix: Generative Models for Complex Shapes} and left for future research.

\noindent\textbf{Our Approach.} Consider two independent collections of random shapes $\{K_i^{(j)}\}_{i=1}^n\overset{\operatorname{i.i.d.}}{\sim}\mathbb{P}^{(j)}$, for $j\in\{1,2\}$ (i.e., $\{(K_i^{(1)}, K_i^{(2)})\}_{i=1}^n\overset{\operatorname{i.i.d.}}{\sim}\mathbb{P}^{(1)}\otimes \mathbb{P}^{(2)}$). The pairing in $(K_i^{(1)}, K_i^{(2)})$ is arbitrary for the following reasons: (i) pairs $(K_i^{(1)}, K_i^{(2)})$ and $(K_i^{(1)}, K_{i'}^{(2)})$ with $i\ne i'$ have the same distribution $\mathbb{P}^{(1)}\otimes \mathbb{P}^{(2)}$, and (ii) numerical experiments in Sections \ref{section: Simulation experiments} and \ref{section: Applications} demonstrate that the performance of our proposed algorithms is numerically invariant to shuffling the index $i$ within each collection $\{K_i^{(j)}\}_{i=1}^n$. Without loss of generality, we assume that all the shapes have been aligned using the ``ECT alignment" (Appendix \ref{section: ECT Alignment}). Here, we present the 
theoretical foundation for employing $\{(K_i^{(1)},\, K_i^{(2)})\}_{i=1}^n$ to test the hypotheses in Eq.~\eqref{eq: the main hypotheses}. This foundation helps address the motivating question from Section \ref{section: A Motivation Question}. 

Without loss of generality, we assume $(K_i^{(1)}, K_i^{(2)})\notin \mathcal{N}$, for all $i=1,2,\ldots,n$, where $\mathcal{N}$ is the null set in Theorem \ref{thm: KL expansions of SECT} satisfying $\mathbb{P}^{(1)}\otimes \mathbb{P}^{(2)}(\mathcal{N})=0$. Then, we have
\begin{align}\label{eq: def of the xi statistic}
    \begin{aligned}
        \xi_{l,i} \overset{\operatorname{def}}{=} \delta_l \left(K_i^{(1)}, \, K_i^{(2)}\right) = \theta_l + \left( \frac{Z_{l}^{(1)}(K_i^{(1)})-Z_{l}^{(2)}(K_i^{(2)})}{\sqrt{2}} \right),
    \end{aligned}
\end{align}
where $\delta_l$ and $\theta_l$ are defined in Eq.~\eqref{eq: KL expansions of SECT}. Theorem \ref{thm: KL expansions of SECT} implies that, for each fixed $l$, the random variables $\{\xi_{l,i}\}_{i=1}^n$ are i.i.d. across $i=1, \ldots,n$ with mean $\theta_l$ and variance 1; for each fixed $i$, the random variables $\{\xi_{l,i}\}_{l=1}^\infty$ are mutually uncorrelated across $l=1,2,3,\ldots$. The following lemma represents the null $H_0$ in Eq.~\eqref{eq: the main hypotheses} using the means $\{\theta_l\}_{l=1}^\infty$.
\begin{lemma}\label{lemma: representing H0}
The null $H_0$ in Eq.~\eqref{eq: the main hypotheses} is equivalent to $\theta_l=0$ for all positive integers $l$.
\end{lemma}

Recall that $\lim_{l\rightarrow\infty}\lambda_l=0$. When eigenvalues $\lambda_l$ in the denominator of Eq.~\eqref{eq: KL expansions of SECT} are close to zero for large $l$, the estimated $\theta_l$ becomes unstable. Specifically, even if $ m_{\nu^*}^{(1)}(t) \approx m_{\nu^*}^{(2)}(t)$, an extremely small $\lambda_l$ can move the corresponding estimated $\theta_l$ far away from zero. Using the standard approach in principal component analysis, we focus on $\{\theta_l\}_{l=1}^L$ with 
\begin{align}\label{eq: def of L}
    L\overset{\operatorname{def}}{=}\max\{1,\, \Tilde{L}\},\ \ \ \text{ where } \Tilde{L} \overset{\operatorname{def}}{=} \min \left\{ l\in\mathbb{N}\, \bigg\vert\, \frac{\sum_{l'=1}^l \lambda_{l'}}{\sum_{l^{''}=1}^\infty \lambda_{l^{''}}} >0.95\right\}.
\end{align}
Hence, to test the hypotheses in Eq.~\eqref{eq: the main hypotheses} via Lemma \ref{lemma: representing H0}, we test the following
\begin{align}\label{eq: approximate hypotheses}
    \begin{aligned}
        & \widehat{H}_0: \theta_1=\cdots=\theta_L=0, \ \ \ vs. \ \ \ \widehat{H}_1: \mbox{ there exists } l'\in\{1,\cdots,L\} \mbox{ such that }\theta_{l'}\ne0.
    \end{aligned}
\end{align}
Under the null $\widehat{H}_0$ in Eq.~\eqref{eq: approximate hypotheses}, for each $l\in\{1,\cdots,L\}$, the central limit theorem indicates that $\frac{1}{\sqrt{n}}\sum_{i=1}^n \xi_{l,i}$ is asymptotically $N(0,1)$ when $n$ is large. The mutual uncorrelation in Theorem \ref{thm: KL expansions of SECT} and the asymptotic normality of $\frac{1}{\sqrt{n}}\sum_{i=1}^n \xi_{l,i}$ provide the asymptotic independence of $\{\frac{1}{\sqrt{n}}\sum_{i=1}^n \xi_{l,i} \}_{l=1}^L$ across $l=1, \ldots,L$. Then, $\sum_{l=1}^L (\frac{1}{\sqrt{n}}\sum_{i=1}^n \xi_{l,i} )^2$ is asymptotically $\chi_L^2$ under the $\widehat{H}_0$ in Eq.~\eqref{eq: approximate hypotheses}. At the asymptotic significance $\alpha\in(0,1)$, we reject the $\widehat{H}_0$ if
\begin{align}\label{eq: rejection region}
    \sum_{l=1}^L \left(\frac{1}{\sqrt{n}}\sum_{i=1}^n \xi_{l,i}\right)^2 > \chi^2_{L, 1-\alpha} = \mbox{ the $1-\alpha$ lower quantile of the $\chi^2_L$ distribution}.
\end{align}

In applications, neither the mean $m_{\nu}^{(j)}(t)$ nor the covariance $\Xi_{\nu}(s,t)$ is known. Hence, the KL expansions in Eq.~\eqref{eq: KL expansions of SECT} cannot be directly used and must be estimated. In Appendix \ref{The Numerical foundation for Hypothesis Testing}, we propose a numerical foundation for conducting the asymptotic $\chi^2$-test in Eq.~\eqref{eq: rejection region} and encapsulate the numerical procedures for the test in \textbf{Algorithm \ref{algorithm: testing hypotheses on mean functions; Appendix}}. In all our analyses in Sections \ref{section: Simulation experiments} and \ref{section: Applications}, the numerical estimates $\widehat{L}$ (see Eq.~\eqref{eq: estimated L} in Appendix \ref{The Numerical foundation for Hypothesis Testing}) of the $L$ in Eq.~\eqref{eq: def of L} are smaller than 10. When the $\widehat{L}$ values are large (e.g., several hundred), one may also consider applying the adaptive Neyman test proposed by \cite{fan1996test}.

In the simulation studies presented in Tables \ref{table: epsilon vs. rejection rates} and \ref{table: comparison with the existing ANOVA methods}, our Algorithm \ref{algorithm: testing hypotheses on mean functions; Appendix} has comparable performance with more than ten existing state-of-the-art fdANOVA methods. Nonetheless, both Algorithm \ref{algorithm: testing hypotheses on mean functions; Appendix} and the existing methods exhibit type I error inflation (e.g., the rejection rate of Algorithm \ref{algorithm: testing hypotheses on mean functions; Appendix} is $0.118$, whereas the significance is $0.05$). To mitigate this inflation, we may consider applying the permutation test using one of these methods that is computationally efficient. For example, \cite{gorecki2015comparison} proposed a permutation test based on an F-type statistic (FP). Specifically, \cite{gorecki2015comparison} approximated each observed function by basis functions via information criteria, and the F-type statistic was approximated by a form conducive to efficiently computing permutation-based p-values. However, the FP also exhibits type I error inflation (see Tables \ref{table: epsilon vs. rejection rates} and \ref{table: comparison with the existing ANOVA methods}). Motivated by the FP, we apply the permutation test to the $\chi^2$-statistic defined in Eq.~\eqref{eq: rejection region} in the following way: we first apply Algorithm \ref{algorithm: testing hypotheses on mean functions; Appendix} to our original shapes $K_i^{(j)}$ and then repeatedly re-apply Algorithm \ref{algorithm: testing hypotheses on mean functions; Appendix} to the shapes with shuffled group labels $j$. The $\chi^2$-test statistic derived from the original data is then compared to that from the shuffled data. A detailed description of our permutation-based approach is presented in \textbf{Algorithm \ref{algorithm: permutation-based testing hypotheses on mean functions}} in Appendix \ref{The Numerical foundation for Hypothesis Testing}. Simulations in Section \ref{section: Simulation experiments} demonstrate that our permutation-based approach eliminates the type I error inflation encountered by Algorithm \ref{algorithm: testing hypotheses on mean functions; Appendix}. The permutation nature of Algorithm \ref{algorithm: permutation-based testing hypotheses on mean functions} is also advantageous for small sample sizes. Note, however, that the power of Algorithm \ref{algorithm: permutation-based testing hypotheses on mean functions} under the alternative is moderately weaker than that of Algorithm \ref{algorithm: testing hypotheses on mean functions; Appendix}. Lastly, the runtimes of Algorithms \ref{algorithm: testing hypotheses on mean functions; Appendix} and \ref{algorithm: permutation-based testing hypotheses on mean functions}, when applied to simulations, are studied in Appendix \ref{section: Runtime}. We present the runtimes when applying the algorithms to real data in Table \ref{tab: Mandibular Molars Database}.

\section{Experiments Using Simulations}\label{section: Simulation experiments}

We present simulations showing the performance of our Algorithms \ref{algorithm: testing hypotheses on mean functions; Appendix} and \ref{algorithm: permutation-based testing hypotheses on mean functions}. In addition, we compare our algorithms with the ``randomization-style null hypothesis significance test (NHST)" \citep{robinson2017hypothesis}, the TRP using Wald-type permutation statistic \citep[TRP-WTPS,][]{cuesta2010simple, pauly2015asymptotic}, and the FP. Details of the randomization-style NHST are given in Appendix \ref{Randomization-style Null Hypothesis Significance Test} and referred to as Algorithm \ref{algorithm: randomization-style NHST}. The application of the FP and TRP to the SECT is described in Section \ref{section: hypothesis testing}. We implement the FP and TRP-WTPS using the \texttt{R} package \texttt{fdANOVA} with its default parameters as recommended by \cite{gorecki2019fdanova}. Additional simulations comparing our proposed algorithms and other existing fdANOVA methods are presented in Appendix \ref{appendix: Numerical Experiments on One-way ANOVA --- Existing Methods vs. Our Proposed Methods}.

We focus on a family of distributions $\{\mathbb{P}^{(\varepsilon)}\}_{0\le\varepsilon\le0.1}$ with shapes $\{K_i^{(\varepsilon)}\}_{i=1}^n \overset{\operatorname{i.i.d.}}{\sim} \mathbb{P}^{(\varepsilon)}$ via
\begin{align}\label{eq: explicit P varepsilon}
\begin{aligned}
        K_i^{(\varepsilon)} \overset{\operatorname{def}}{=} \left\{x\in\mathbb{R}^2 \, \Bigg\vert\, \inf_{y\in S_i^{(\varepsilon)}}\Vert x-y\Vert\le 0.2\right\}, \ \ \text{ where}
\end{aligned}
\end{align}
$S_i^{(\varepsilon)} \overset{\operatorname{def}}{=} \left\{\left(\frac{2}{5}+a_{1,i}\cdot\cos t,\ b_{1,i}\cdot\sin t\right) \Big\vert \frac{1-\varepsilon}{5}\pi\le t\le\frac{9+\varepsilon}{5}\pi\right\}
         \bigcup\left\{\left(-\frac{2}{5}+a_{2,i}\cdot\cos t,\ b_{2,i}\cdot\sin t\right) \Big\vert \frac{6\pi}{5}\le t\le\frac{14\pi}{5}\right\}$ and $\{a_{1,i}, a_{2,i}, b_{1,i}, b_{2,i}\}_{i=1}^n \overset{\operatorname{i.i.d.}}{\sim} N(1, 0.05^2)$. The $\varepsilon$ denotes the dissimilarity between $\mathbb{P}^{(\varepsilon)}$ and $\mathbb{P}^{(0)}$. For each $\varepsilon\in[0,\,0.1]$, through the discussion in Section \ref{section: hypothesis testing}, we test the following hypotheses via fdANOVA methods (i.e., FP, TRP-WTPS, Algorithms \ref{algorithm: testing hypotheses on mean functions; Appendix}, and \ref{algorithm: permutation-based testing hypotheses on mean functions})
\begin{align*}%\label{eq: null in the simulation section}
    \begin{aligned}
        & H_0: m_\nu^{(0)}(t)=m_\nu^{(\varepsilon)}(t)\mbox{ for all }(\nu,t)\in\mathbb{S}^{d-1}\times[0,T]\ \ \ vs. \ \ \ H_1:  m_\nu^{(0)}(t)\ne m_\nu^{(\varepsilon)}(t)\mbox{ for some }(\nu,t),
    \end{aligned}
\end{align*}
where the mean $m_\nu^{(\varepsilon)}(t) \overset{\operatorname{def}}{=} \int_{\mathcal{S}_{R,d}^M} \operatorname{SECT}(K)(\nu, t) \,\mathbb{P}^{(\varepsilon)}(dK)$, and the null hypothesis $H_0$ is true when $\varepsilon=0$. We also test $H^*_0:\mathbb{P}^{(0)}=\mathbb{P}^{(\epsilon)}\ vs.\ \mathbb{P}^{(0)} \ne \mathbb{P}^{(\epsilon)}$ using Algorithm \ref{algorithm: randomization-style NHST}.

We set $T=3$, directions $\nu_p=(\cos\frac{p-1}{4}\pi, \sin\frac{p-1}{4}\pi)^\T$ for $p\in\{1,2,3,4\}$, levels $t_q=\frac{T}{50}q$ for $q\in\{1,\cdots,50\}$ (i.e., $\Gamma=4$ and $\Delta=50$ in Algorithms \ref{algorithm: testing hypotheses on mean functions; Appendix}, \ref{algorithm: permutation-based testing hypotheses on mean functions}, and \ref{algorithm: randomization-style NHST}), the confidence level $95\%$ (i.e., $\alpha=0.05$), and the number of permutations $\Pi=1000$. For each $\varepsilon\in$ \{0, 0.01, 0.02, 0.03, 0.04, 0.05, 0.06, 0.08, 0.1\}, we independently generate two collections $\{K_i^{(0)}\}_{i=1}^n\overset{\operatorname{i.i.d.}}{\sim} \mathbb{P}^{(0)}$ and $\{K_i^{(\varepsilon)}\}_{i=1}^n\overset{\operatorname{i.i.d.}}{\sim} \mathbb{P}^{(\varepsilon)}$ through Eq~\eqref{eq: explicit P varepsilon} with the number of shape pairs set to $n=100$, and we compute the SECT of each generated shape in directions $\{\nu_p\}_{p=1}^4$ and at levels $\{t_q\}_{q=1}^{50}$. We then implement the fdANOVA methods and Algorithm \ref{algorithm: randomization-style NHST} to these computed SECT statistics and get the corresponding \texttt{Accept}/\texttt{Reject} outputs. We repeat this procedure 1000 times and report the rejection rates across all 1000 replicates for each $\varepsilon$ in Table \ref{table: epsilon vs. rejection rates}. The rejection rates are also visually presented in Figure \ref{fig: simulation visualizations}. We choose $\Gamma=4$ as the number of directions in our simulations based on the following observation: in Appendix \ref{section: Runtime}, we experiment with all combinations of $\Gamma\in\{2,4,8\}$, $\Delta\in\{25, 50, 100\}$, and $n\in\{25, 50, 100\}$. When $\Delta=50$ and $n=100$, the number $\Gamma=4$ is sufficiently large for our Algorithms \ref{algorithm: testing hypotheses on mean functions; Appendix} and \ref{algorithm: permutation-based testing hypotheses on mean functions} to distinguish $\mathbb{P}^{(0)}$ from $\mathbb{P}^{(\varepsilon)}$ with $\varepsilon>0$ using the significance level $\alpha=0.05$. Moreover, this choice allows us to demonstrate that even a relatively small number of directions (e.g., $\Gamma=4$) is sufficient for implementing our Algorithm \ref{algorithm: testing hypotheses on mean functions; Appendix} and \ref{algorithm: permutation-based testing hypotheses on mean functions}.

\begin{table}[h]
\centering
\caption{\footnotesize{Rejection rates (from 1000 experiments) for different indices $\varepsilon$ (significance $\alpha=0.05$). Appendix \ref{appendix: Numerical Experiments on One-way ANOVA --- Existing Methods vs. Our Proposed Methods} provides a comparison of Algorithms \ref{algorithm: testing hypotheses on mean functions; Appendix}, \ref{algorithm: permutation-based testing hypotheses on mean functions}, and \ref{algorithm: randomization-style NHST} to other existing fdANOVA methods.}}
    \label{table: epsilon vs. rejection rates}
    \vspace*{0.5em}
\def\arraystretch{0.8}
\small
\begin{tabular}{|c|c|c|c|c|c|c|c|c|c|}
\hline
Indices $\varepsilon$ & 0.00 & 0.01 & 0.02 & 0.03 & 0.04 & 0.05 & 0.06 & 0.08 & 0.10 \\ \hline
Algorithm \ref{algorithm: testing hypotheses on mean functions; Appendix} & 0.118 & 0.161 & 0.315 & 0.519 & 0.785 & 0.910 &  0.975  & 0.990  & 1.000   \\
Algorithm \ref{algorithm: permutation-based testing hypotheses on mean functions} & 0.046  & 0.054 & 0.162 & 0.343  & 0.612   & 0.789  & 0.931   & 0.994 & 1.000 \\
Algorithm \ref{algorithm: randomization-style NHST} & 0.050 & 0.050 & 0.111 & 0.185 & 0.335 & 0.535 & 0.739 & 0.983 & 0.999 \\
FP         & 0.136 & 0.153 & 0.308  & 0.539  & 0.810  & 0.924 & 0.986 & 0.997 & 1.000  \\
TRP-WTPS   & 0.075 & 0.091 & 0.261  & 0.515  & 0.790  & 0.929 & 0.980 & 0.997  & 1.000  \\ \hline
\end{tabular}
\end{table}

\begin{figure}[h]
    \centering
    \includegraphics[scale=0.5]{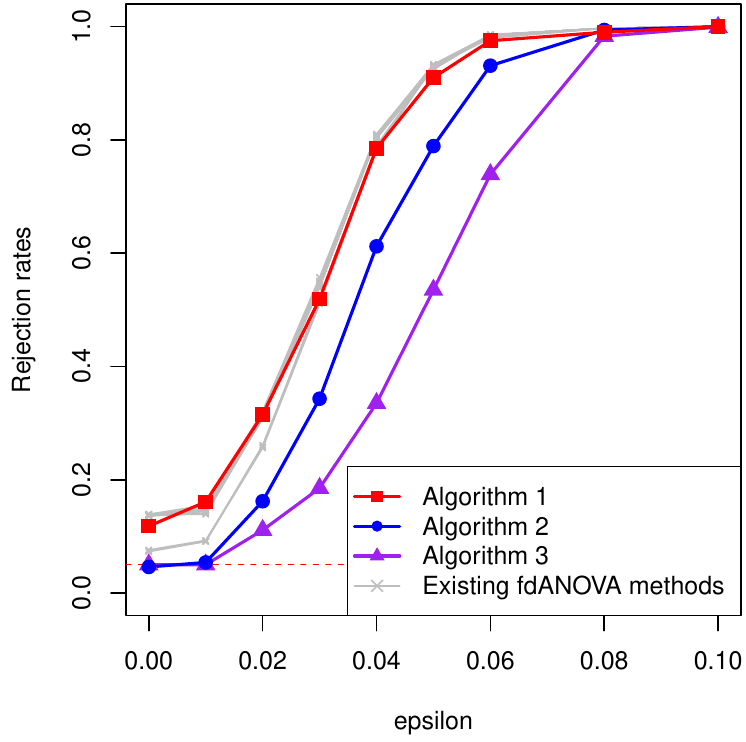}
    \includegraphics[scale=0.42]{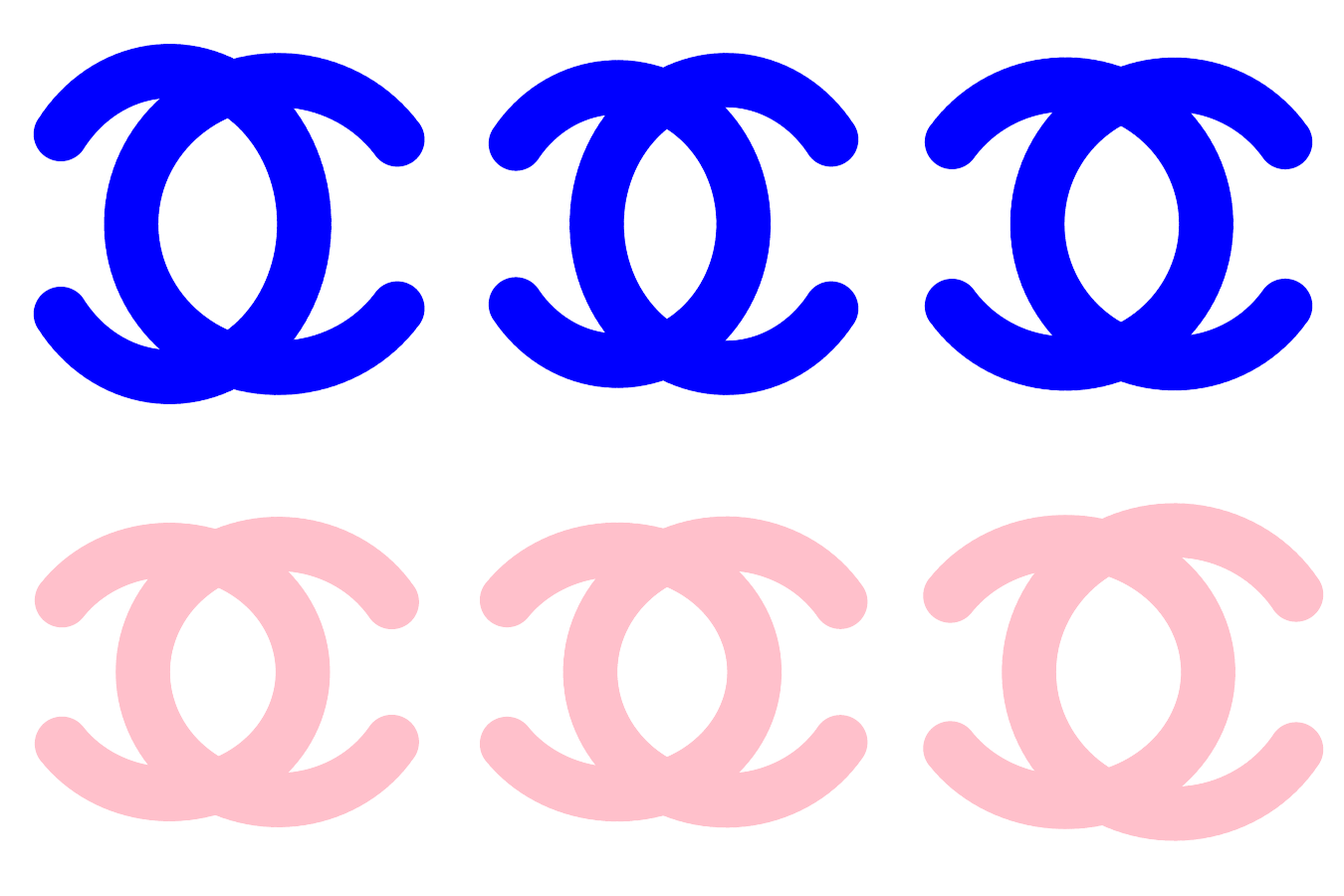}
\caption{\footnotesize{(Left panel) The relationship between $\varepsilon$ and the rejection rates computed via Algorithms \ref{algorithm: testing hypotheses on mean functions; Appendix}, \ref{algorithm: permutation-based testing hypotheses on mean functions}, \ref{algorithm: randomization-style NHST} (see Table \ref{table: epsilon vs. rejection rates}), and twelve existing fdANOVA methods (see Table \ref{table: comparison with the existing ANOVA methods} in Appendix \ref{appendix: Numerical Experiments on One-way ANOVA --- Existing Methods vs. Our Proposed Methods} for details on the existing fdANOVA methods). The (red) dashed line presents the significance level $\alpha=0.05$. (Right panel) The shapes in the first row are from $\mathbb{P}^{(0)}$, and the shapes in the second row are from $\mathbb{P}^{(0.08)}$.}}
    \label{fig: simulation visualizations}
\end{figure}

The results in Table \ref{table: epsilon vs. rejection rates} and Figure \ref{fig: simulation visualizations} demonstrate that our proposed algorithms are effective at detecting the difference between $\mathbb{P}^{(\varepsilon)}$ and $\mathbb{P}^{(0)}$ in terms of distinguishing their mean functions. Notably, our algorithms (especially Algorithm \ref{algorithm: permutation-based testing hypotheses on mean functions}) tend to avoid falsely detecting differences between shape-generating distributions under the null hypothesis (i.e., $\varepsilon=0$). As $\varepsilon$ increases, $\mathbb{P}^{(\varepsilon)}$ deviates from $\mathbb{P}^{(0)}$, and the power of our algorithms in detecting the deviation increases. When $\varepsilon\ge 0.08$, the power of Algorithms \ref{algorithm: testing hypotheses on mean functions; Appendix} and \ref{algorithm: permutation-based testing hypotheses on mean functions} exceeds 0.99. For all the $\varepsilon$, it is difficult to see the deviation of $\mathbb{P}^{(\varepsilon)}$ from $\mathbb{P}^{(0)}$ visually. For instance, by merely observing the shapes in Figure \ref{fig: simulation visualizations}, one might find it hard to differentiate between the shape collections generated by $\mathbb{P}^{(0)}$ (blue) and $\mathbb{P}^{(0.08)}$ (pink). However, in more than $99\%$ of the simulations, our algorithms detect the difference between the two distributions. We also randomly shuffle the index $i$ within each collection $\{K_i^{(\varepsilon)}\}_{i=1}^n$ and apply Algorithms \ref{algorithm: testing hypotheses on mean functions; Appendix} and \ref{algorithm: permutation-based testing hypotheses on mean functions} to the shuffled collections. The results obtained from the unshuffled and shuffled shape collections, respectively, are nearly identical. Algorithm \ref{algorithm: randomization-style NHST} performs well in detecting the discrepancy between $\mathbb{P}^{(0)}$ and $\mathbb{P}^{(\varepsilon)}$. However, its power under the alternative hypotheses (i.e., $\varepsilon>0$) is weaker than that of our Algorithms \ref{algorithm: testing hypotheses on mean functions; Appendix} and \ref{algorithm: permutation-based testing hypotheses on mean functions}. Moreover, Algorithms \ref{algorithm: testing hypotheses on mean functions; Appendix} and \ref{algorithm: permutation-based testing hypotheses on mean functions} exhibit performance comparable to twelve existing state-of-the-art fdANOVA methods (see Table \ref{table: epsilon vs. rejection rates}, Figure \ref{fig: simulation visualizations}, and Table \ref{table: comparison with the existing ANOVA methods} in Appendix \ref{appendix: Numerical Experiments on One-way ANOVA --- Existing Methods vs. Our Proposed Methods}).

\section{Applications}\label{section: Applications}

We first apply our proposed Algorithms \ref{algorithm: testing hypotheses on mean functions; Appendix} and \ref{algorithm: permutation-based testing hypotheses on mean functions} to the MPEG-7 shape silhouette database \citep{sikora2001mpeg} as a toy example. Details of this are provided in Appendix \ref{section: Silhouette Database}. This analysis shows that our proposed algorithms can distinguish between shape classes in the silhouette database and do not falsely identify signals when there are no differences between groups.

In this section, we apply our algorithms to address the motivating question in Section \ref{section: A Motivation Question}. Specifically, we utilize Algorithms \ref{algorithm: testing hypotheses on mean functions; Appendix} and \ref{algorithm: permutation-based testing hypotheses on mean functions} to distinguish between the four categories of mandibular molars in Figure \ref{fig: Teeth} that are from four genera of primates. The shapes in Figure \ref{fig: Teeth} come from two suborders of primates: Haplorhini
and Strepsirrhini (see Figure \ref{fig: Teeth}). In the haplorhine suborder collection, 29 molars came from the genus \textit{Tarsius} (yellow panels in Figure \ref{fig: Teeth}), and 9 molars came from the genus \textit{Saimiri} (grey panels in Figure \ref{fig: Teeth}). In the strepsirrhine collection, 11 molars came from the genus \textit{Microcebus} (blue panels in Figure \ref{fig: Teeth}), and 6 molars came from the genus \textit{Mirza} (green panels in Figure \ref{fig: Teeth}).

Before applying Algorithms \ref{algorithm: testing hypotheses on mean functions; Appendix} and \ref{algorithm: permutation-based testing hypotheses on mean functions}, we preprocess the raw triangle mesh data of the surfaces of the molars by aligning them through the ECT alignment approach detailed in Appendix \ref{section: ECT Alignment}. The aligned molars are presented in Figure \ref{fig: Teeth}. We apply our Algorithms \ref{algorithm: testing hypotheses on mean functions; Appendix} and \ref{algorithm: permutation-based testing hypotheses on mean functions} to the preprocessed molars. For each aligned molar, we compute its SECT for 2918 directions; in each direction, we use 200 sublevel sets. To compare any pair of molar groups, as a proof of concept, we select the smaller size of the two groups as the sample size input $n$ in our algorithms. For example, when comparing the \textit{Tarsius} and \textit{Microcebus} groups, we choose $n=11$; that is, we compare the first 11 molars of the \textit{Tarsius} group to all the molars in the \textit{Microcebus} group. We apply our algorithms to the four groups of molars and present the results in Table \ref{tab: Mandibular Molars Database}. The p-values in Table \ref{tab: Mandibular Molars Database} are either  $\chi^2$-test p-values (Algorithm \ref{algorithm: testing hypotheses on mean functions; Appendix}) or permutation-test p-values (Algorithm \ref{algorithm: permutation-based testing hypotheses on mean functions} with $1000$ permutations). The small p-values ($P<0.05$) in Table \ref{tab: Mandibular Molars Database} show that our proposed algorithms can distinguish the four different genera of primates. Since the genera \textit{Microcebus} and \textit{Mirza} belong to the same suborder Strepsirrhini (see Figure \ref{fig: Teeth}), the p-value from Algorithm \ref{algorithm: permutation-based testing hypotheses on mean functions} is comparatively large when comparing molars from these two groups. In comparison, although the \textit{Tarsius} and \textit{Saimiri} both belong to the suborder Haplorhini, the molars of the two genera look different. Specifically, the paraconids (i.e., the cusp highlighted in red in Figure \ref{fig: Teeth}) are only retained by the genus \textit{Tarsius} and, thus, are a key reason for the small p-values ($P<10^{-3}$) when comparing with molars from the \textit{Saimiri}. Other small p-values ($P<10^{-3}$) in our analyses are a result of the corresponding genera belonging to different suborders.

\begin{table}[h]
\caption{\footnotesize{P-values of Algorithms \ref{algorithm: testing hypotheses on mean functions; Appendix}, \ref{algorithm: permutation-based testing hypotheses on mean functions}, and \ref{algorithm: Permutation test using landmark-based Procrustes distances} for the data set of mandibular molars. In the last row, we present the overall runtime for conducting all hypothesis testing tasks using each algorithm.}}
\label{tab: Mandibular Molars Database}
\vspace*{0.5em}
\centering
\small
\def\arraystretch{0.8}
\begin{tabular}{lllllll}
\hline
& & Algorithm \ref{algorithm: testing hypotheses on mean functions; Appendix} & &  Algorithm \ref{algorithm: permutation-based testing hypotheses on mean functions} & & Algorithm \ref{algorithm: Permutation test using landmark-based Procrustes distances} \\[2pt] \hline
\textit{Tarsius} vs. \textit{Microcebus}            &           & $<10^{-3}$ & &  $<10^{-3}$ & & $<10^{-3}$ \\[2pt]
\textit{Tarsius} vs. \textit{Mirza}          &          & $<10^{-3}$  & &  $<10^{-3}$ & & $0.001$              \\ [2pt]
\textit{Tarsius} vs. \textit{Saimiri} & & $<10^{-3}$ & & $<10^{-3}$ & & $<10^{-3}$\\[2pt]
\textit{Microcebus} vs. \textit{Mirza} & &     $<10^{-3}$  & &  $0.009$ & & $0.004$\\[2pt]
\textit{Microcebus} vs. \textit{Saimiri} & &     $<10^{-3}$ & & $<10^{-3}$  & &  $<10^{-3}$ \\[2pt]
\textit{Mirza} vs. \textit{Saimiri} & & $<10^{-3}$ & & $<10^{-3}$ & & $<10^{-3}$\\[2pt]
\hline
\textit{Tarsius} vs. \textit{Tarsius} & & 0.196 (0.220) & & 0.496 (0.294) & & 0.527 (0.273)\\[2pt]
\hline
Overall runtimes (in hours) & & $\approx 3$ & & $\approx 3$ & & $\approx 20$\\[2pt]
\hline
\end{tabular}
\end{table}

% Tarsius vs. Tarsius: (i) Algorithm 1, the largest p-value is 0.890, and the smallest p-value is 0.04; (ii) Algorithm 2, TBD

In addition to testing the difference between genera, we apply our algorithms within the genus \textit{Tarsius}. Specifically, we focus on the first 28 molars in the \textit{Tarsius} group. We randomly split the 28 molars into two halves and apply Algorithms \ref{algorithm: testing hypotheses on mean functions; Appendix} and \ref{algorithm: permutation-based testing hypotheses on mean functions} to test the difference between the two halves. We repeat the random splitting procedure 100 times and present the corresponding p-values in Table \ref{tab: Mandibular Molars Database}. The results are summarized by their mean and standard deviation (in parentheses). These p-values show that our proposed Algorithm \ref{algorithm: permutation-based testing hypotheses on mean functions} tends to avoid the type I error for the molars from the genus \textit{Tarsius}.

Landmark methods are widely used in geometric morphometrics. One state-of-the-art approach is the ``Gaussian process landmarking (GPL)'' algorithm \citep{gao2019gaussianmorphometrics,gao2019gaussian} which can automatically sample landmarks on the surfaces of the molars in Figure \ref{fig: Teeth}. \cite{gao2019gaussianmorphometrics} showed that these sampled landmarks could induce a continuous Procrustes distance to measure the dissimilarity between molars. A permutation test can be derived using the Procrustes distance induced by the GPL algorithm. This test is detailed in Appendix \ref{section: Landmark-based Permutation Test} and is encapsulated by Algorithm \ref{algorithm: Permutation test using landmark-based Procrustes distances}. We use the GPL-based Algorithm \ref{algorithm: Permutation test using landmark-based Procrustes distances} to differentiate the four collections of molars. For this, we utilize the \texttt{MATLAB} code from the \texttt{GitHub} repository provided by \cite{gao2019gaussianmorphometrics} to compute the Procrustes distance. Performance of Algorithm \ref{algorithm: Permutation test using landmark-based Procrustes distances} is in Table \ref{tab: Mandibular Molars Database}, which shows that the GPL-based method and our Algorithm \ref{algorithm: permutation-based testing hypotheses on mean functions} have comparable performance. However, repeatedly computing the Procrustes distance is time-consuming. Hence, Algorithm \ref{algorithm: permutation-based testing hypotheses on mean functions} is more computationally efficient than Algorithm \ref{algorithm: Permutation test using landmark-based Procrustes distances} while achieving similar performance (see the last row of Table \ref{tab: Mandibular Molars Database}). 

We want to note that, in addition to the GPL algorithm, many other existing methods can be applied to measure dissimilarity between molars, including parameterized surfaces \citep{kurtek2010parameterization, kurtek2011elastic} and the approaches from computational anatomy \citep{grenander1998computational}. Similarly, the parameterized curves \citep{kurtek2012statistical} can also be used to analyze the silhouette database in Appendix \ref{section: Silhouette Database}. An even more comprehensive comparison of our algorithms with the entire edifice of existing methods is left for future research.

\section{Conclusions and Discussions}\label{Conclusions and Discussions}

In this paper, we established the mathematical foundations for the randomness of shapes via the SECT. Specifically, (i) $(\mathcal{S}_{R,d}^M, \mathscr{B}(\rho), \mathbb{P})$ was constructed as the underlying probability space; (ii) the SECT was modeled as a $C(\mathbb{S}^{d-1};\mathcal{H})$-valued random variable. We further demonstrated several properties of the SECT ensuring its KL expansion, which led to a $\chi^2$-statistic for testing hypotheses on random shapes. We bridged the fdANOVA and TDA. Simulation studies corroborated our mathematical derivations and showed the performance of our hypothesis testing algorithms. Our approach was shown to be powerful in detecting the difference between two shape-generating distributions. We applied our proposed algorithms to silhouette and primate molar datasets. Importantly, our simulations when $\varepsilon=0$, together with the applications to the molars and the silhouette database, indicate that our algorithms tend to avoid falsely detecting differences between shape-generating distributions when there are none. Using the molars in Figure \ref{fig: Teeth}, we compared the performance of our algorithms to a permutation test based on a state-of-the-art landmarking algorithm \citep{gao2019gaussianmorphometrics,gao2019gaussian}, underscoring the efficiency of our algorithms. %The corresponding results are compatible with the visualizations in Supplementary Figure \ref{fig: Silhouette Database} and morphology of teeth.
We enumerate potential future research areas in Appendix \ref{section: potential future research areas}, e.g., the fdANOVA methods can be utilized for functional connectivity \citep{chen2024gradient, meng2024population} via topological summaries.

%%%%%%%%%%%%%%%%%%%%%%%%%%%%%%%%%%%%%%%%%%%%%%%%%%%%%%%%%%%%%%%%%%

\if0\blind
{
\section*{Software Availability}
The source code for implementing the simulation studies and applications is publicly available online at \url{https://github.com/JinyuWang123/TDA.git}.
} \fi

\if1\blind
{
  \bigskip
} \fi

\if0\blind
{
\section*{Acknowledgments}

We are grateful to the Editor, Associate Editor, and three Referees of the \textit{Journal of the American Statistical Association} for their thorough review of our article and the insightful suggestions that have tremendously improved its quality. We want to thank Dr. Matthew T.~Harrison from the Division of Applied Mathematics at Brown University for useful comments and suggestions. KM wants to thank Mattie Ji from the Department of Mathematics at Brown University for her insightful comments. LC would like to acknowledge the support of a David \& Lucile Packard Fellowship for Science and Engineering. Research reported in this publication was partially supported by the National Institute On Aging of the National Institutes of Health under Award Number R01AG075511. The content is solely the responsibility of the authors and does not necessarily represent the official views of the National Institutes of Health.

\section*{Disclosure Statement}

The authors report there are no competing interests to declare.
} \fi

\if1\blind
{
  \bigskip
} \fi

%%%%%%%%%%%%%%%%%%%%%%%%%%%%%%%%%%%%%%%%%%%%%%%%%%%%%%%%%%%%%%%%%%%%%%%%%%%%%%%%%%%%%%

\newpage

\begin{appendix}

%\cleardoublepage
%\setcounter{page}{4}

\begin{center}
    \Large{\textbf{Appendix}}
\end{center}

\tableofcontents

\newpage

\section{Mathematical Remarks}\label{section: mathematical remarks}

In this section, we provide some mathematical remarks.

\subsection{Remark on $H^1_0([0,T])$ vs. $H^1_0((0,T))$}\label{section: notation for closed vs. open}

Strictly speaking, the functions in Sobolev space $\mathcal{H}$ are defined on the open interval $(0,T)$ instead of the closed interval $[0,T]$ \citep[][Chapter 8.2]{brezis2011functional}. Hence, the rigorous notation of $\mathcal{H}$ should be $H_0^1((0,T))$. However, Theorem 8.8 of \cite{brezis2011functional} indicates that each function in $H_0^1((0,T))$ can be uniquely represented by a continuous function defined on the closed interval $[0,T]$, which implies that functions in $H_0^1((0,T))$ can be viewed as being defined on the closed interval $[0,T]$. Therefore, to implement the boundary values on $\partial (0,T)=\{0,T\}$, we use the notation $H_0^1([0,T])$ throughout this paper to indicate that all functions in $\mathcal{H}$ are viewed as defined on $[0,T]$. The same reasoning is also applied for the space $W_0^{1,p}([0,T])$ implemented in Appendix \ref{section: Further Theorems} \citep[also see][Theorem 8.8 and Remark 8 after Proposition 8.3]{brezis2011functional}. 

Notably, $\mathcal{H}=H^1_0([0,T])$ is the RKHS generated by the kernel $\kappa(s,t)=\min\{s,t\}-\frac{st}{T}$ \citep[][Example 4.9]{lifshits2012lectures}.

Sobolev spaces have traditionally been used to study partial differential equations \citep{lu1996embedding, hadac2009well, li2016global, lv2019well, wei2021transition}. In statistics, the Sobolev spaces most commonly used are those that are also RKHSs \citep{duchon1977splines, wahba1990spline}. \cite{hairer2009introduction} offers a detailed exploration of the relationship between Sobolev spaces and RKHSs through Gaussian measures, noting that RKHSs can be broadly understood as equivalent to compact Sobolev embeddings.

\subsection{Homeomorphism Critical Points}\label{section: Homeomorphism Critical Points}

To introduce the concept of \textit{homeomorphism critical points} (HCPs), we need the following lemma
\begin{lemma}\label{lemma: Lemma 3.4 of Curry}
    For any $K\in\mathcal{S}_{R,d}^M$ and any direction $\nu\in\mathbb{S}^{d-1}$, the homeomorphism type of $K_t^\nu$  can only change finitely many times as a function of $t$.
\end{lemma}
\noindent Lemma \ref{lemma: Lemma 3.4 of Curry} is a direct consequence of Lemma 3.4 from \cite{curry2022many}; hence, its proof is omitted.

For clarity of the presentation in the rest of the Appendix, we define the HCPs as follows: \textit{For any $K\in\mathcal{S}_{R,d}^M$ and any direction $\nu\in\mathbb{S}^{d-1}$, each value of $t$ where the homeomorphism type of $K_t^\nu$ changes is called an HCP of $K$ in direction $\nu$.} 

Lemma \ref{lemma: Lemma 3.4 of Curry} implies that there are only finitely many HCPs of $K$ in each direction. Furthermore, due to the homeomorphic invariance of the Euler characteristic and Betti numbers, Lemma \ref{lemma: Lemma 3.4 of Curry} implies that the functions $t \mapsto \beta_k(K_t^\nu)$ and $t \mapsto \chi(K_t^\nu)$ are piecewise constant functions. The discontinuities of these functions are HCPs in direction $\nu$.

\subsection{O-minimal Structures}\label{section: O-minimal Structures}

The definition of o-minimal structures is available in \cite{van1998tame} and rephrased as follows.

\begin{definition}[\cite{van1998tame}]\label{def: definability}
An o-minimal structure is a sequence $\mathcal{S}=\{\mathcal{S}_n\}_{n\ge1}$ satisfying the following: 
\begin{enumerate}
    \item for each $n$, $\mathcal{S}_n$ is a Boolean algebra of subsets of $\mathbb{R}^n$;

    \item $A\in\mathcal{S}_n$ implies $A\times \mathbb{R}\in\mathcal{S}_{n+1}$ and $\mathbb{R}\times A\in\mathcal{S}_{n+1}$;

    \item $\{(x_1,\ldots,x_n)\in\mathbb{R}^n \vert\, x_i=x_j\}\in\mathcal{S}_n$ for all $1\le i < j\le n$;

    \item $A\in\mathcal{S}_{n+1}$ implies $\pi(A)\in\mathcal{S}_{n}$, where $\pi:\mathbb{R}^{n+1}\rightarrow\mathbb{R}^n$ is the usual projection map;

    \item $\{r\}\in\mathcal{S}_1$ for all $r\in\mathbb{R}$, and $\{(x,y)\in\mathbb{R}^2 \vert\, x<y\}\in\mathcal{S}_2$; and

    \item the only sets in $\mathcal{S}_1$ are the finite unions of open intervals (with $\pm\infty$ endpoints allowed) and points. 
\end{enumerate}
\end{definition}

\section{Overview of Persistence Diagrams}\label{The Relationship between PHT and SECT}

In Appendix \ref{The Relationship between PHT and SECT}, we provide an overview of PDs in the literature. The overview is provided for the following purposes: 
\begin{itemize}
    \item We provide the details of the definition of $\mathcal{S}_{R,d}^M$, particularly Condition \ref{condition: the condition for defining S_{R,d}^M}.
    \item The PD framework is the necessary tool for several proofs in Appendix \ref{section: appendix, proofs}.
\end{itemize}
Most of the materials in the overview come from or are modified from \cite{mileyko2011probability} and \cite{turner2013means}.

\subsection{Definition of Persistence Diagrams}

Let $\mathbb{K}$ be a compact topological space and $\varphi$ be a real-valued continuous function defined on $\mathbb{K}$. Because of the compactness of $\mathbb{K}$ and continuity of $\varphi$, we assume $\varphi(\mathbb{K})\subseteq[0,T]$ without loss of generality. For each $t\in [0,T]$, denote 
\begin{align*}
    \mathbb{K}^\varphi_t \overset{\operatorname{def} }{=} \{x\in\mathbb{K} \,\vert \, \varphi(x)\le t\}.
\end{align*}
Then $\mathbb{K}^\varphi_{t_1} \subseteq\mathbb{K}^\varphi_{t_2}$ for all $0\le t_1\le t_2 \le T$, and $i_{t_1 \rightarrow t_2}$ denotes the corresponding inclusion map. Hereafter, we assume the following
\begin{assumption}\label{assumption: tameness; Appendix}
    $\mathbb{K}^\varphi_{t}$ falls into finitely many homeomorphism types as $t$ ranges over $[0,T]$.
\end{assumption}
If we take $\mathbb{K}=K\in\mathcal{S}_{R,d}^M$ and
\begin{align}\label{Eq: Morse function 1}
    \varphi(x)=x\cdot \nu+R \overset{\operatorname{def}}{=} \phi_\nu(x),\ \ \ x\in K,\ \ \nu\in\mathbb{S}^{d-1},
\end{align}
we have the scenario discussed in Section \ref{The Definition of Smooth Euler Characteristic Transform}. Lemma \ref{lemma: Lemma 3.4 of Curry} implies that Assumption \ref{assumption: tameness; Appendix} is satisfied by $\phi_\nu$ for every $\nu\in\mathbb{S}^{d-1}$.

The inclusion maps $i_{t_1 \rightarrow t_2}: \mathbb{K}_{t_1}^\varphi \rightarrow \mathbb{K}_{t_2}^\varphi$ induces the group homomorphisms
\begin{align*}
    i^{\#}_{t_1 \rightarrow t_2}: H_k(\mathbb{K}_{t_1}^\varphi) \rightarrow H_k(\mathbb{K}_{t_2}^\varphi), \ \ \mbox{ for all }k\in\mathbb{Z},
\end{align*}
where $H_k(\cdot)=H_k(\,\cdot\,;\mathbb{Z}_2)$ denotes the $k$-th homology group with respect to field $\mathbb{Z}_2$, and $\mathbb{Z}_2$ is omitted for succinctness. Under Assumption \ref{assumption: tameness; Appendix}, for any $t_1 \le t_2$, we have that the image
\begin{align*}
    \operatorname{im} \left(i^{\#}_{(t_1-\delta) \rightarrow t_2} \right)=\operatorname{im} \left(i^{\#}_{(t_1-\delta) \rightarrow t_1} \circ i^{\#}_{t_1 \rightarrow t_2} \right)
\end{align*}
does not depend on $\delta>0$ when $\delta$ is sufficiently small, and then this constant image is denoted as $\operatorname{im}(i^{\#}_{(t_1-) \rightarrow t_2})$. For any $t$, the $k$-th \textit{birth group} at $t$ is defined as the quotient group
\begin{align*}
    B_k^{t} \overset{\operatorname{def}}{=} H_k(\mathbb{K}_t^\varphi)/\operatorname{im}(i^{\#}_{(t-)\rightarrow t}),
\end{align*}
and $\pi_{B_k^t}: H_k(\mathbb{K}_t^\varphi) \rightarrow B_k^{t}$ denotes the corresponding quotient map. For any $\alpha\in H_k(\mathbb{K}_t^\varphi)$, we say $\alpha$ is born at $t$ if $\pi_{B_k^t}(\alpha)\ne 0$ in $B_k^t$. Assumption \ref{assumption: tameness; Appendix} implies that $B_k^t$ is a nontrivial group only for finitely many $t$. For any $t_1<t_2$, we denote the quotient group
\begin{align*}
    E_k^{t_1, t_2} \overset{\operatorname{def}}{=} H_k(\mathbb{K}_{t_2}^\varphi)/\operatorname{im}(i^{\#}_{(t_1-)\rightarrow t_2})
\end{align*}
and the corresponding quotient map $\pi_{E_k^{t_1, t_2}}: H_k(\mathbb{K}_{t_2}^\varphi) \rightarrow E_k^{t_1, t_2}$. Furthermore, we define the following map 
\begin{align*}
    g_k^{t_1, t_2}:\ \  B_k^{t_1} \rightarrow E_k^{t_1, t_2},\ \ \ \ \pi_{B_k^{t_1}}(\alpha) \mapsto \pi_{E_k^{t_1, t_2}}\left(i^{\#}_{t_1 \rightarrow t_2}(\alpha)\right),
\end{align*}
for all $\alpha \in H_k(\mathbb{K}_{t_1}^\varphi)$. Then, we define the \textit{death group} 
\begin{align*}
    D_k^{t_1, t_2} \overset{\operatorname{def}}{=} \operatorname{ker}(g_k^{t_1, t_2}).
\end{align*}
We say a homology class $\alpha\in H_k(\mathbb{K}_{t_1}^\varphi)$ is born at $t_1$ and dies at $t_2$ if
\begin{enumerate}
    \item $\pi_{B_k^{t_1}}(\alpha)\ne 0$,

    \item $\pi_{B_k^{t_1}}(\alpha)\in D_{k}^{t_1, t_2}$, and

    \item $\pi_{B_k^{t_1}}(\alpha)\notin D_{k}^{t_1, t_2-\delta}$ for any $\delta\in(0, t_2-t_1)$.
\end{enumerate}
If $\alpha$ does not die, we artificially say that it dies at $T$ as $K_T^\varphi=\mathbb{K}$. Then we denote $\operatorname{birth}(\alpha)=t_1$ and $\operatorname{death}(\alpha)=t_2$, and the persistence of $\alpha$ is defined as 
\begin{align}\label{eq: definition of pers}
    \operatorname{pers}(\alpha) \overset{\operatorname{def}}{=} \operatorname{death}(\alpha) - \operatorname{birth}(\alpha).
\end{align}

With the notions of $\operatorname{death}(\alpha)$ and $\operatorname{birth}(\alpha)$, the $k$-th PD of $\mathbb{K}$ with respect to $\varphi$ is defined as the following multiset of 2-dimensional points \citep[][Definition 2]{mileyko2011probability}. 
\begin{align}\label{Eq: def of PD}
    \begin{aligned}
        & \operatorname{Dgm}_k(\mathbb{K};\varphi) \\
        & \overset{\operatorname{def}}{=} \bigg\{\big( \operatorname{birth}(\alpha), \operatorname{death}(\alpha)\big) \,\bigg\vert\, \alpha\in H_k(\mathbb{K}_t) \mbox{ for some }t\in[0,T] \mbox{ with }\operatorname{pers}(\alpha)>0 \bigg\}\bigcup \mathfrak{D},
    \end{aligned}
\end{align}
where $\big( \operatorname{birth}(\alpha_1), \operatorname{death}(\alpha_1) \big)$ and $\big( \operatorname{birth}(\alpha_2), \operatorname{death}(\alpha_2) \big)$ for $\alpha_1\ne\alpha_2$ are counted as two points even if $\alpha_1$ and $\alpha_2$ are born and die at the same times, respectively; that is, the multiplicity of the point $\big( \operatorname{birth}(\alpha_1), \operatorname{death}(\alpha_1) \big) = \big( \operatorname{birth}(\alpha_2), \operatorname{death}(\alpha_2) \big)$ is at least $2$; $\mathfrak{D}$ denotes the diagonal $\{(t,t)\,|\, t\in\mathbb{R}\}$ with the multiplicity of each point on this diagnal is the cardinality of $\mathbb{Z}$. Since $\operatorname{birth}(\alpha)$ is no later than $\operatorname{death}(\alpha)$, the PD $\operatorname{Dgm}_k(\mathbb{K};\varphi)$ is contained in the triangular region $\{(s,t)\in\mathbb{R}^2 \,\vert\, 0\le s\le t\le T\}$.

\subsection{Eq.~\eqref{Eq: topological invariants boundedness condition} in the definition of $\mathcal{S}_{R,d}^M$}

The following ingredients give the details of  Condition \ref{condition: the condition for defining S_{R,d}^M}.
\begin{itemize}
    \item The function $\phi_\nu$ defined in Eq.~\eqref{Eq: Morse function 1},
    \item the corresponding PDs defined by Eq.~\eqref{Eq: def of PD}, and 
    \item the definition of $\operatorname{pers}(\cdot)$ given in Eq.~\eqref{eq: definition of pers}.
    \item The notation $\#\{\cdot\}$ counts the multiplicity of the corresponding multiset.
\end{itemize}

\subsection{Bottleneck Stability}

Generally, a persistence diagram is a countable multiset of points in triangular region $\{(s,t)\in\mathbb{R}^2 \,\vert\, 0\le s, t\le T \mbox{ and }s\le t\}$ along with $\mathfrak{D}$ \citep[][Definition 2]{mileyko2011probability}. The collection of all persistence diagrams is denoted as $\mathscr{D}$. Obviously, all the $\operatorname{Dgm}_k(\mathbb{K};\varphi)$ defined in Eq.~\eqref{Eq: def of PD} is in $\mathscr{D}$. The following definition and stability result of the \textit{bottleneck distance} are from \cite{cohen2007stability}, and they play important roles in the proofs of Theorems \ref{lemma: The continuity lemma} and \ref{lemma: The continuity lemma; Appendix}.
\begin{definition}\label{def: bottleneck distance}
Let $\mathbb{K}$ be a compact topological space. $\varphi_1$ and $\varphi_2$ are two continuous real-valued functions on $\mathbb{K}$ such that $\mathbb{K}$ is tame with respect to both  $\varphi_1$ and $\varphi_2$. The bottleneck distance between PDs $\operatorname{Dgm}_k(\mathbb{K};\varphi_1)$ and $\operatorname{Dgm}_k(\mathbb{K};\varphi_2)$ is defined as 
\begin{align*}
    W_\infty \Big(\operatorname{Dgm}_k(\mathbb{K};\varphi_1), \, \operatorname{Dgm}_k(\mathbb{K};\varphi_2) \Big) \overset{\operatorname{def}}{=} \inf_{\gamma} \Big(\sup \left\{\Vert \xi - \gamma(\xi) \Vert_{l^\infty} \, \Big\vert \, \xi\in \operatorname{Dgm}_k(\mathbb{K};\varphi_1)\right\} \Big),
\end{align*}
where $\gamma$ ranges over bijections from $\operatorname{Dgm}_k(\mathbb{K};\varphi_1)$ to $\operatorname{Dgm}_k(\mathbb{K};\varphi_2)$, and
\begin{align}\label{eq: def of l infinity norm}
    \Vert \xi\Vert_{l^\infty} \overset{\operatorname{def}}{=} \max\{\vert \xi_1\vert , \vert \xi_2\vert\},\ \ \mbox{ for all }\xi=(\xi_1, \xi_2)^\T\in\mathbb{R}^2.
\end{align}
\end{definition}
\begin{theorem}\label{thm: bottleneck stability}
Let $\mathbb{K}$ be a compact and finitely triangulable topological space. $\varphi_1$ and $\varphi_2$ are two continuous real-valued functions on $\mathbb{K}$ such that $\mathbb{K}$ is tame with respect to both  $\varphi_1$ and $\varphi_2$. Then, we have the bottleneck stability as follows 
\begin{align*}
    W_\infty \Big(\operatorname{Dgm}_k(\mathbb{K};\varphi_1), \, \operatorname{Dgm}_k(\mathbb{K};\varphi_2) \Big) \le \sup_{x\in\mathbb{K}} \left\vert \varphi_1(x) - \varphi_2(x) \right\vert.
\end{align*}
\end{theorem}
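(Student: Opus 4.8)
The plan is to realize this bound as an instance of the interleaving-to-bottleneck stability principle, which is precisely the content of the stability theorem of \cite{cohen2007stability}; I sketch the argument for completeness. Write $\delta\overset{\operatorname{def}}{=}\sup_{x\in\mathbb{K}}|\varphi_1(x)-\varphi_2(x)|$, which is finite since $|\varphi_1-\varphi_2|$ is continuous on the compact space $\mathbb{K}$. If $\delta=0$ the functions coincide and there is nothing to prove, so assume $\delta>0$.

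First I would establish an interleaving of the two sublevel-set filtrations. For any $t$ and any $x\in\mathbb{K}$ with $\varphi_1(x)\le t$, the defining inequality gives $\varphi_2(x)\le\varphi_1(x)+\delta\le t+\delta$, so that $\mathbb{K}^{\varphi_1}_t\subseteq\mathbb{K}^{\varphi_2}_{t+\delta}$; symmetrically $\mathbb{K}^{\varphi_2}_t\subseteq\mathbb{K}^{\varphi_1}_{t+\delta}$. These inclusions are compatible with the internal inclusions within each filtration, and their pairwise composites recover the internal inclusions shifted by $2\delta$, e.g.
\begin{align*}
\mathbb{K}^{\varphi_1}_t\subseteq\mathbb{K}^{\varphi_2}_{t+\delta}\subseteq\mathbb{K}^{\varphi_1}_{t+2\delta}.
\end{align*}

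Next I would pass to homology. Applying the functor $H_k(\cdot)=H_k(\cdot;\mathbb{Z}_2)$ to the diagram above produces two persistence modules $\{H_k(\mathbb{K}^{\varphi_1}_t)\}_t$ and $\{H_k(\mathbb{K}^{\varphi_2}_t)\}_t$, together with families of homomorphisms shifting the index by $\delta$ in each direction whose composites equal the internal maps shifted by $2\delta$. This is exactly a $\delta$-interleaving of the two modules. The tameness of $\mathbb{K}$ with respect to each $\varphi_j$ guarantees that both modules have only finitely many distinct birth/death events, so the associated diagrams are the finite multisets $\operatorname{Dgm}_k(\mathbb{K};\varphi_1)$ and $\operatorname{Dgm}_k(\mathbb{K};\varphi_2)$ (together with the diagonal $\mathfrak{D}$) defined in Eq.~\eqref{Eq: def of PD}.

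The final step is the algebraic stability input: a $\delta$-interleaving of tame persistence modules forces the bottleneck distance between their diagrams to be at most $\delta$. This is the nontrivial core of the argument --- the interleaving and homology steps above are purely formal, whereas this implication requires the combinatorial machinery of \cite{cohen2007stability}, in particular the ``Box Lemma'' bounding the number of diagram points inside any rectangle by an alternating sum of ranks of the interleaving maps, from which one constructs the required matching $\gamma$ between off-diagonal points (unmatched points being sent to the diagonal $\mathfrak{D}$). Granting this, we obtain $W_\infty(\operatorname{Dgm}_k(\mathbb{K};\varphi_1),\operatorname{Dgm}_k(\mathbb{K};\varphi_2))\le\delta=\sup_{x\in\mathbb{K}}|\varphi_1(x)-\varphi_2(x)|$, as claimed. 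I expect the matching construction in this last step to be the main obstacle, since everything preceding it is a direct consequence of the sublevel-set definitions.
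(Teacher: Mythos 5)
The paper does not prove this statement at all: it is quoted directly from \cite{cohen2007stability} as a known stability result, which is why no proof appears in Appendix \ref{section: appendix, proofs}. Your sketch --- the $\delta$-interleaving of sublevel sets, functoriality of $H_k(\cdot;\mathbb{Z}_2)$, tameness giving finite diagrams, and then deferring the matching construction to the Box-Lemma machinery of that same reference --- is the standard argument and is correct, with the only nontrivial step resting on exactly the source the paper itself cites, so it is consistent with the paper's treatment.
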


\section{Additional Theorems}\label{section: Further Theorems}

In this section, we provide some further theorems and lemmas which may deepen our understanding of the SECT framework. The proofs of these theorems and lemmas are given in Appendix \ref{section: appendix, proofs}.
\begin{itemize}
\item The following theorem shows that $C(\mathbb{S}^{d-1};H_0^1([0,T]))$ is a separable Banach space, hence, a Polish space.
\begin{theorem}\label{thm: the separability of C(Shere;H)}
(i) Let $\mathcal{H}$ be a separable Hilbert space. Then, $C(\mathbb{S}^{d-1}; \, \mathcal{H})$ is separable.\\
(ii) Let $\mathcal{H}$ be a separable Hilbert space. Then, $C(\mathbb{S}^{d-1};\mathcal{H})$ is a Banach space.\\
(iii) $C(\mathbb{S}^{d-1};H_0^1([0,T]))$ is a separable Banach space.
\end{theorem}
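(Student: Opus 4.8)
The plan is to treat parts (i) and (ii) as the substance and obtain (iii) as an immediate corollary. For (ii), I would run the standard completeness argument. Given a Cauchy sequence $\{f_n\}$ in $C(\mathbb{S}^{d-1};\mathcal{H})$ with the sup-norm $\Vert f\Vert_{C(\mathbb{S}^{d-1};\mathcal{H})}=\sup_{x}\Vert f(x)\Vert_{\mathcal{H}}$, I observe that $\{f_n(x)\}$ is Cauchy in $\mathcal{H}$ for each fixed $x$; since a separable Hilbert space is in particular complete, the pointwise limit $f(x)\overset{\operatorname{def}}{=}\lim_n f_n(x)$ exists. Uniform (Cauchy) convergence then transfers continuity to $f$ and gives $\Vert f_n-f\Vert_{C(\mathbb{S}^{d-1};\mathcal{H})}\to0$, so the space is complete; since it is obviously a normed vector space under pointwise operations, it is Banach.

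The content lies in (i). I would show that the countable collection
\[
\mathcal{D}=\left\{\sum_{j=1}^n g_{m_j}\cdot h_{k_j}\ \Big\vert\ n\in\mathbb{N},\ m_j,k_j\in\mathbb{N}\right\}
\]
is dense, where $\{g_m\}\subset C(\mathbb{S}^{d-1})$ is a countable dense set (which exists because $\mathbb{S}^{d-1}$ is compact metric, so the scalar space $C(\mathbb{S}^{d-1})$ is separable) and $\{h_k\}\subset\mathcal{H}$ is a countable dense set from the separability of $\mathcal{H}$. The argument has two steps. First, fixing $f\in C(\mathbb{S}^{d-1};\mathcal{H})$ and $\epsilon>0$, I would use compactness together with the uniform continuity of $f$ to choose a finite open cover $\{U_i\}_{i=1}^N$ with $\Vert f(x)-f(y)\Vert_{\mathcal{H}}<\epsilon/2$ whenever $x,y$ lie in a common $U_i$, pick $x_i\in U_i$ with values $v_i=f(x_i)$, and take a continuous partition of unity $\{\psi_i\}$ subordinate to this cover; then $g=\sum_i\psi_i v_i$ satisfies $\Vert f(x)-g(x)\Vert_{\mathcal{H}}\le\sum_i\psi_i(x)\Vert f(x)-f(x_i)\Vert_{\mathcal{H}}<\epsilon/2$, since every surviving term forces $x\in U_i$, so $\Vert f-g\Vert_{C(\mathbb{S}^{d-1};\mathcal{H})}\le\epsilon/2$. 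Second, I would approximate $g$ by an element of $\mathcal{D}$: replacing each $\psi_i$ by a nearby $g_{m_i}$ and each $v_i$ by a nearby $h_{k_i}$, the estimate $\Vert\psi_i v_i-g_{m_i}h_{k_i}\Vert\le\Vert\psi_i-g_{m_i}\Vert_{C(\mathbb{S}^{d-1})}\Vert v_i\Vert_{\mathcal{H}}+\Vert g_{m_i}\Vert_{C(\mathbb{S}^{d-1})}\Vert v_i-h_{k_i}\Vert_{\mathcal{H}}$ lets me drive the total error below $\epsilon/2$, producing an element of $\mathcal{D}$ within $\epsilon$ of $f$.

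For (iii), I would verify that $\mathcal{H}=H_0^1([0,T])$ is a separable Hilbert space and then invoke (i) and (ii). It is a Hilbert space under the inner product $\langle f,g\rangle_{\mathcal{H}}=\int_0^T f'g'\,dt$ declared in the preliminaries, and it is separable because the Dirichlet sine system $\{t\mapsto\sin(n\pi t/T)\}_{n\ge1}$, suitably normalized to be orthonormal in the $\mathcal{H}$-inner product, forms a countable orthonormal basis (equivalently, $H_0^1([0,T])=W_0^{1,2}([0,T])$ is separable as a Sobolev space on a bounded interval). Applying (i) and (ii) with this choice of $\mathcal{H}$ yields that $C(\mathbb{S}^{d-1};H_0^1([0,T]))$ is a separable Banach space.

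The hard part will be the first step of (i): establishing that the algebraic tensor product $C(\mathbb{S}^{d-1})\otimes\mathcal{H}$ is dense in $C(\mathbb{S}^{d-1};\mathcal{H})$ through the partition-of-unity approximation. Everything else — the completeness in (ii), the countable-combination bookkeeping in the second step of (i), and the reduction in (iii) — is routine. The one point requiring care is the uniform continuity of $f$, which is precisely what allows a single finite cover to control $\Vert f(x)-f(x_i)\Vert_{\mathcal{H}}$ on all pieces simultaneously; this is guaranteed by the compactness of $\mathbb{S}^{d-1}$.
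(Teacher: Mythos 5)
Your proposal is correct, but your argument for the separability claim (i) takes a genuinely different route from the paper's. The paper exploits the Hilbert structure of $\mathcal{H}$: it fixes an orthonormal basis $\{\boldsymbol{e}_j\}_{j=1}^\infty$, observes that each coordinate $\nu\mapsto\langle f(\nu),\boldsymbol{e}_j\rangle_{\mathcal{H}}$ is a scalar continuous function on $\mathbb{S}^{d-1}$, approximates the $j$-th coordinate by an element $g_j$ of a countable dense set $D\subset C(\mathbb{S}^{d-1})$ to accuracy $\epsilon/2^{j+1}$, and uses $\sup_\nu\Vert\sum_j\big(\langle f(\nu),\boldsymbol{e}_j\rangle-g_j(\nu)\big)\boldsymbol{e}_j\Vert_{\mathcal{H}}\le\sup_\nu\sum_j\vert\langle f(\nu),\boldsymbol{e}_j\rangle-g_j(\nu)\vert<\epsilon$, so that the finite sums $\sum_{j\le n}g_j\boldsymbol{e}_j$ from a countable family are dense. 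You instead prove density of the algebraic tensor product $C(\mathbb{S}^{d-1})\otimes\mathcal{H}$ via uniform continuity, a finite cover, and a partition of unity, and only afterwards discretize both factors. Your route is more general: it never touches the inner product, so it establishes separability of $C(X;\mathcal{V})$ for any compact metric $X$ and any separable Banach space $\mathcal{V}$, at the cost of the partition-of-unity machinery. The paper's route is shorter because the basis expansion replaces the cover argument, though to make its last step fully airtight one must check that the tails $\sup_\nu\Vert\sum_{j>n}\langle f(\nu),\boldsymbol{e}_j\rangle\boldsymbol{e}_j\Vert_{\mathcal{H}}$ vanish uniformly in $\nu$ (a Dini-type argument on the compact sphere); your construction sidesteps this entirely since your approximants are finite sums from the outset. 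For (ii) the paper merely remarks that the ``same trick'' applies, whereas you spell out the standard completeness argument (pointwise limits plus uniform convergence), which is what is intended; for (iii) both proofs reduce to the fact that $H_0^1([0,T])$ is a separable Hilbert space, which the paper cites from the literature and you justify by exhibiting the normalized Dirichlet sine system as an orthonormal basis for the inner product $\int_0^T f'g'\,dt$. All of this is sound.
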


\noindent\textbf{Remark:} The results in Theorem \ref{thm: the separability of C(Shere;H)} are well-known. The proof of Theorem \ref{thm: the separability of C(Shere;H)} is provided in Appendix \ref{section: appendix, proofs} for the convenience of the reader.

\item The following boundedness can be derived from Condition \ref{condition: the condition for defining S_{R,d}^M}.
\begin{theorem}\label{thm: boundedness topological invariants theorem}
If $K\in\mathcal{S}_{R,d}^M$, we have 
\begin{align*}
    \sup_{\nu\in\mathbb{S}^{d-1}}\left\{\sup_{0\le t\le T}\left\vert\chi_t^\nu(K) \right\vert\right\} \le M.
\end{align*}
\end{theorem}

\item Lemma \ref{thm: tameness property} states that $t\mapsto \chi(K_t^\nu)$, for each $\nu\in\mathbb{S}^{d-1}$, is a piecewise constant function with only finitely many discontinuities. The following result that follows directly from \cite{ji2023euler} clarifies the behaviors of $t\mapsto \chi(K_t^\nu)$ at the discontinuities.
\begin{lemma}\label{lemma: right continuity of the ECT}
    For any shape $K\in\mathcal{S}_d$ and fixed direction $\nu\in\mathbb{S}^{d-1}$, the function $t\mapsto\chi(K_t^\nu)$ is right continuous.
\end{lemma}

\item The following lemma indicates that the SECT is uniformly bounded.
\begin{lemma}\label{lemma: uniform boundedness of the SECT}
    For any $K\in\mathcal{S}_{R,d}^M$, we have 
    \begin{align*}
        \left\Vert \operatorname{SECT}(K) \right\Vert_{C(\mathbb{S}^{d-1};\,\mathcal{H})} \le 2M \cdot\sqrt{T}.
    \end{align*}
\end{lemma}

\item Lemma \ref{thm: Sobolev function paths} is a special case of the following lemma.
\begin{lemma}\label{thm: Sobolev function paths; general, appendix}
For any fixed $K\in\mathcal{S}_{R,d}^M$ and $\nu\in\mathbb{S}^{d-1}$, the function $\operatorname{SECT}(K)(\nu)$ (i.e., $t\mapsto\operatorname{SECT}(K)(\nu,t)$ with fixed $\nu$) belongs to $W^{1,p}_0([0,T]) \subseteq \mathcal{B}$ for all $p\in[1,\infty)$.
\end{lemma}

Here, $W^{1,p}_0([0,T])$ is a Sobolev space defined as 
\begin{align*}
    W^{1,p}_0([0,T]) \overset{\operatorname{def}}{=} \left\{f\in L^p([0,T]) \, \Big\vert \, \mbox{weak derivative }f'\in L^p([0,T]) \mbox{ and }f(0)=f(T)=0 \right\}
\end{align*}
\citep[see][Theorem 8.12]{brezis2011functional}. We focus on the case $p=2$ where $\mathcal{H}=H_0^1([0,T])=W^{1,2}_0([0,T])$, which implies Lemma \ref{thm: Sobolev function paths}. 

    \item The following theorem is a companion result of Theorem \ref{lemma: The continuity lemma}.
    \begin{theorem}\label{lemma: The continuity lemma; Appendix}
        For each $K\in\mathcal{S}_{R,d}^M$, we have the following:\\ 
There exists a constant $C^*_{M,R,d}$ depending only on $M$, $R$, and $d$ such that the following inequality holds for any two directions $\nu_1,\nu_2\in\mathbb{S}^{d-1}$, 
\begin{align}\label{Eq: lemma for the continuity inequality}
    \begin{aligned}
       \left( \int_0^T \Big\vert\chi_\tau^{\nu_1}(K)-\chi_\tau^{\nu_2}(K)\Big\vert^2 d\tau \right)^{1/2} \le C^*_{M,R,d} \cdot \sqrt{\Vert \nu_1-\nu_2\Vert}.
    \end{aligned}
\end{align}
Furthermore, the constant $\Tilde{C}_T$ in Eq.~\eqref{eq: Sobolev embedding from Morrey} provides the following inequality 
\begin{align}\label{eq: bivariate Holder continuity}
    \begin{aligned}
        & \Big\vert \operatorname{SECT}(K)(\nu_1; t_1)-\operatorname{SECT}(K)(\nu_2; t_2)\Big\vert\\ 
    & \le \Tilde{C}_T \left\{2M\sqrt{T}\cdot \sqrt{\vert t_1-t_2\vert} + C^*_{M,R,d} \cdot \sqrt{ \Vert \nu_1 - \nu_2\Vert + \Vert 
    \nu_1-\nu_2 \Vert^2 } \right\},
    \end{aligned}
\end{align}
for all $\nu_1, \nu_2\in\mathbb{S}^{d-1}$ and $t_1, t_2\in[0,T]$, which implies that $(\nu,t)\mapsto \operatorname{SECT}(K)(\nu, t)$, as a function on $\mathbb{S}^{d-1}\times[0,T]$, belongs to $C^{0,\frac{1}{2}}(\mathbb{S}^{d-1}\times[0,T]; \, \mathbb{R})$.
    \end{theorem}

\item For any given o-minimal structure $\mathcal{S}$, elements of $\mathcal{S}$ are called definable sets. Furthermore, compact definable sets are called constructible sets. The collection of constructible subsets of $\mathbb{R}^d$ is denoted by $\operatorname{CS}(\mathbb{R}^d)$ \citep[e.g., see][Section 2]{curry2022many}. If the o-minimal structure $\mathcal{S}$ satisfies Assumption \ref{Assumption: basic requirements for o-minimal structures of interest}, it is straightforward that our proposed $\mathcal{S}_{R,d}^M$ is a subset of $ \operatorname{CS}(\mathbb{R}^d)$.

Using the results originated from Euler calculus \citep{schapira1988cycles, viro1988some, schapira1991operations, schapira1995tomography, van1998tame}, particularly the ``Schapira's inversion formula" \citep{schapira1995tomography}, \cite{ghrist2018persistent} and \cite{curry2022many} independently proved the following injectivity of the ECT.
\begin{theorem}[Theorem 1 of \cite{ghrist2018persistent} or Theorem 3.5 of \cite{curry2022many}]\label{thm: injectivity of the ECT}
    The following map is injective for all dimensions $d$
    \begin{align*}
        \operatorname{ECT}:\ \ & \operatorname{CS}(\mathbb{R}^d) \rightarrow \mathbb{Z}^{\mathbb{S}^{d-1}\times\mathbb{R}}, \\
        & K \mapsto \{\chi(K_t^\nu)\}_{(\nu,t) \in \mathbb{S}^{d-1}\times\mathbb{R}}.
    \end{align*}
\end{theorem}
\noindent The following result from \cite{ghrist2018persistent} is a corollary of Theorem \ref{thm: injectivity of the ECT}, and it guarantees the injectivity of the SECT.
\begin{corollary}[Corollary 1 of \cite{ghrist2018persistent}]\label{corollary: Corollary 1 of Ghrist et all.(2018)}
    The smooth Euler characteristic transform of \cite{crawford2020predicting} is injective on constructible subsets of $\mathbb{R}^d$ for all dimensions $d$.
\end{corollary}

\item We introduce another topological summary --- the primitive Euler characteristic transform (PECT) --- which is related to the SECT. 

The PECT is defined as follows
\begin{align}\label{Eq: def of PECT}
\begin{aligned}
    & \operatorname{PECT}: \ \mathcal{S}_{R,d}^M \rightarrow C(\mathbb{S}^{d-1};\mathcal{H}_{BM}),\ \ \ K \mapsto \operatorname{PECT}(K) \overset{\operatorname{def}}{=} \{\operatorname{PECT}(K)(\nu)\}_{\nu\in\mathbb{S}^{d-1}}, \\
    & \mbox{where }\ \operatorname{PECT}(K)(\nu) \overset{\operatorname{def}}{=} \left\{\int_0^t \chi_\tau^\nu(K) \,d\tau\right\}_{t\in[0,T]},
    \end{aligned}
\end{align}
and $\mathcal{H}_{BM} \overset{\operatorname{def}}{=} \{f\in L^2([0,T]) \,\vert\, \mbox{weak derivative }f' \mbox{ exists, }f'\in L^2([0,T]), \mbox{ and } f(0)=0 \}$ is a separable Hilbert space equipped with the inner product 
\begin{align*}
    \langle f, g \rangle_{\mathcal{H}_{BM}}  =  \int_0^T f'(t)g'(t) dt.
\end{align*}
Eq.~\eqref{Eq: lemma for the continuity inequality}, together with Lemma \ref{lemma: weak derivative formula}, implies that
\begin{align*}
    \left\Vert \operatorname{PECT}(K)(\nu_1)-\operatorname{PECT}(K)(\nu_2)\right\Vert_{\mathcal{H}_{BM}}\le C^*_{M,R,d} \cdot \sqrt{\Vert \nu_1-\nu_2\Vert},
\end{align*}
for any $\nu_1, \nu_2\in\mathbb{S}^{d-1}$. Therefore, we have $\operatorname{PECT}(K)\in C(\mathbb{S}^{d-1}; \mathcal{H}_{BM})$ in Eq.~\eqref{Eq: def of PECT}. Section \ref{proof: measurability of SECT and PECT; appendix} shows that the following map is continuous, hence, measurable
\begin{align*}
    \operatorname{PECT}:\ \ & \mathcal{S}_{R,d}^M\rightarrow\mathbb{R}, \\
    & K\mapsto \operatorname{PECT}(K)(\nu,t).
\end{align*}
Then, for each fixed direction $\nu\in\mathbb{S}^{d-1}$, $\operatorname{PECT}(\nu) \overset{\operatorname{def}}{=}\{\operatorname{PECT}(\nu,t)\}_{t\in[0,T]}$ is a stochastic process.

The following theorem regarding the PECT indicates that the stochastic process $\operatorname{SECT}(\nu)=\{\operatorname{SECT}(\nu,t)\}_{t\in[0,T]}$ tends to be non-Gaussian for every fixed direction $\nu$.
\begin{theorem}\label{eq: the SECT is non-Gaussian}
    Let $\nu\in\mathbb{S}^{d-1}$ be a fixed direction. If the stochastic processes $\{\chi_t^\nu\,\vert\, t_0<t\le t_1\},\,\{\chi_t^\nu\,\vert\, t_1<t\le t_2\},\,\ldots,\, \{\chi_t^\nu\,\vert\, t_{l-1}<t\le t_l\}$ are independent for any partition $0=t_0<t_1<\cdots<t_l=T$ of $[0,T]$, the stochastic process $\operatorname{PECT}(\nu)=\{\operatorname{PECT}(\nu,t)\}_{t\in[0,T]}$ is a Gaussian process.
\end{theorem}
\begin{remark}\label{remark: The Gaussianity of the SECT is not guaranteed.}
    The indepedence condition in Theorem \ref{eq: the SECT is non-Gaussian} cannot hold, because the sample paths of the stochastic process $\{\chi_t^\nu\}_{t\in[0,T]}$ are piecewise constant functions (see Lemmas \ref{thm: tameness property} and \ref{lemma: right continuity of the ECT}). Hence, there is no guarantee for the Gaussianity of the PECT. Note that the PECT relates to the SECT via the following
\begin{align}\label{eq: relationship between PECT and SECT}
    \operatorname{SECT}(K)(\nu, t)=\operatorname{PECT}(K)(\nu, t)-\frac{t}{T}\operatorname{PECT}(K)(\nu, T),
\end{align}
for all $\nu\in\mathbb{S}^{d-1}$ and $t\in[0,T]$. The Gaussianity of the SECT is not guaranteed.
\end{remark}

\item The following lemma provides several properties of the mean $m_\nu$ and covariance $\Xi_\nu$ that validate our KL expansion of $\operatorname{SECT}(\nu)$ in Section \ref{section: hypothesis testing} \citep[also see][Section 7.2]{hsing2015theoretical}.
\begin{lemma}\label{thm: mean is in H}
\begin{enumerate}
    \item For each $\nu$, the function $m_\nu \overset{\operatorname{def}}{=} \{m_\nu(t)\}_{t\in[0,T]}$ of $t$ is in $\mathcal{H}$.

    \item The map $(K, t)\mapsto \operatorname{SECT}(K)(\nu, t)$ is in 
    \begin{align*}
        L^2 \Big(\mathcal{S}_{R,d}^M\times[0,T],\, \mathscr{F}\otimes \mathscr{B}([0,T]),\, \mathbb{P}(dK)\otimes dt \Big),
    \end{align*}
    where $\mathbb{P}(dK)\otimes dt$ denotes the product measure generated by $\mathbb{P}(dK)$ and the Lebesgue measure $dt$.

    \item $\operatorname{SECT}(\nu)$ is mean-square continuous, i.e.,
\begin{align*}
    \lim_{\epsilon\rightarrow0}\mathbb{E}\vert \operatorname{SECT}(\nu,t+\epsilon)-\operatorname{SECT}(\nu, t)\vert^2=0.
\end{align*}

\item $(s,t)\mapsto\Xi_\nu(s,t)$ is continuous on $[0,T]^2$.

\item The map $\nu\mapsto m_\nu$ is an element of $C(\mathbb{S}^{d-1};\mathcal{H})$. 
\end{enumerate}
\end{lemma}

\textbf{Remark:} Although the mean $\pmb{m}\overset{\operatorname{def}}{=}\{m_\nu\}_{\nu\in\mathbb{S}^{d-1}}$ of SECT belongs to $C(\mathbb{S}^{d-1};\mathcal{H})$ (see Lemma \ref{thm: mean is in H}), it may not belong to $\operatorname{SECT}(\mathcal{S}_{R,d}^M)$ (note that the SECT is not surjective). Hence, there does not necessarily exist a shape in $\mathcal{S}_{R,d}^M$ whose SECT is $\pmb{m}$. Under some topological conditions, the ``Schapira's inversion formula" \citep{schapira1995tomography} can be applied to obtain a pseudo-inverse $\operatorname{SECT}^{-1}(\pmb{m})$. However, the $\operatorname{SECT}^{-1}(\pmb{m})$ can be a ``grayscale image" (e.g., shape with blurred boundary) instead of a shape in $\mathcal{S}_{R,d}^M$. Specifically, \cite{meng2023Inference} generalized the SECT from shapes to grayscale images, enlarging the image of the SECT. The mean $\pmb{m}$ may correspond to a grayscale image under some conditions. Detailed discussions on the relationship between the SECT and grayscale images are provided in \cite{meng2023Inference}. In addition, a relevant discussion from the Fréchet mean viewpoint is given in Appendix \ref{section: Proof-of-Concept Simulation Examples II: Random Shapes}. Complete studies on the pseudo-inverse $\operatorname{SECT}^{-1}(\pmb{m})$ are left for future research.
    
\end{itemize}

\section{Proof-of-Concept Simulation Examples}

In Section \ref{Proof-of-Concept Simulation Examples I: Deterministic Shapes}, we provide proof-of-concept examples to visually illustrate the SECT of deterministic shapes and support the $\frac{1}{2}$-Hölder regularity stated in Theorems \ref{lemma: The continuity lemma} and \ref{lemma: The continuity lemma; Appendix}.  Then, in Section \ref{section: Proof-of-Concept Simulation Examples II: Random Shapes}, we provide proof-of-concept examples to illustrate random shapes and their SECT representations visually; the examples in Section \ref{section: Proof-of-Concept Simulation Examples II: Random Shapes} provide potential relationships between the Fréchet mean \citep{frechet1948elements}, Fréchet regression \citep{petersen2019frechet}, Wasserstein regression \citep{chen2021wasserstein}, and manifold learning \citep{dunson2021inferring, meng2021principal, li2022efficient}. Lastly, in Section \ref{section: Computation of SECT Using the Čech Complexes}, we provide an approach to approximately computing the SECT, which is implemented in Sections \ref{section: Simulation experiments}, \ref{Proof-of-Concept Simulation Examples I: Deterministic Shapes}, and \ref{section: Proof-of-Concept Simulation Examples II: Random Shapes} to compute the SECT values.

\subsection{Proof-of-Concept Simulation Examples I: Deterministic Shapes}\label{Proof-of-Concept Simulation Examples I: Deterministic Shapes}

In this subsection, we compute the SECT of two simulated shapes $K^{(1)}$ and $K^{(2)}$ of dimension $d=2$. These shapes are defined as the following and presented in Supplementary Figures \ref{fig: SECT visualizations, deterministic}(a) and (g), respectively:
\begin{align}\label{eq: example shapes K1 and K2}
\begin{aligned}
    & K^{(j)}  =  \left\{x\in\mathbb{R}^2 \,\Bigg\vert \, \inf_{y\in S^{(j)}}\Vert x-y\Vert\le \frac{1}{5}\right\},\ \ \mbox{ where }j\in\{1,2\}, \\
    & S^{(1)}  =  \left\{\left(\frac{2}{5}+\cos t, \sin t\right) \,\Bigg\vert \, \frac{\pi}{5}\le t\le\frac{9\pi}{5}\right\}\bigcup\left\{\left(-\frac{2}{5}+\cos t, \sin t\right) \,\Bigg\vert \, \frac{6\pi}{5}\le t\le\frac{14\pi}{5}\right\},\\
        & S^{(2)}  =  \left\{\left(\frac{2}{5}+\cos t, \sin t\right) \,\Bigg\vert \, 0\le t\le 2\pi\right\}\bigcup\left\{\left(-\frac{2}{5}+\cos t, \sin t\right) \,\Bigg\vert \, \frac{6\pi}{5}\le t\le\frac{14\pi}{5}\right\}.
\end{aligned}
\end{align}
We compute $\operatorname{SECT}(K^{(j)})(\nu, t)$ across directions $\nu\in\mathbb{S}^{1}$ and sublevel sets $t\in[0,T]$ with $T=3$. For the following visualization, we identify $\nu\in\mathbb{S}^1$ through the parametrization $\nu=(\cos\vartheta, \sin\vartheta)$ with $\vartheta\in[0,2\pi)$. 

\begin{itemize}
    \item The surfaces of the bivariate maps $(\vartheta, t)\mapsto \operatorname{SECT}(K^{(j)})(\nu, t)$, for $j\in\{1,2\}$, are presented in Supplementary Figures \ref{fig: SECT visualizations, deterministic}(b), (c), (h), and (i).

    \item The curves of the univariate maps $t\mapsto \operatorname{SECT}(K^{(j)})\left((1,0)^\T;t\right)$, for $j\in\{1,2\}$, are presented by the black solid lines in Supplementary Figures \ref{fig: SECT visualizations, deterministic}(d) and (j); while the curves of the univariate maps $t\mapsto \operatorname{SECT}(K^{(j)})\left((0,1)^\T;t\right)$, for $j\in\{1,2\}$, are presented by black solid lines in Supplementary Figures \ref{fig: SECT visualizations, deterministic}(e) and (k).

    \item Lastly, the curves of the univariate maps $\vartheta\mapsto \operatorname{SECT}(K^{(j)})\left((\cos\vartheta, \sin\vartheta)^\T;\frac{3}{2}\right)$, for $j\in\{1,2\}$, are presented by the black solid lines in Supplementary Figures \ref{fig: SECT visualizations, deterministic}(f) and (l).
\end{itemize}

\begin{figure}
    \centering
    \includegraphics[scale=0.53]{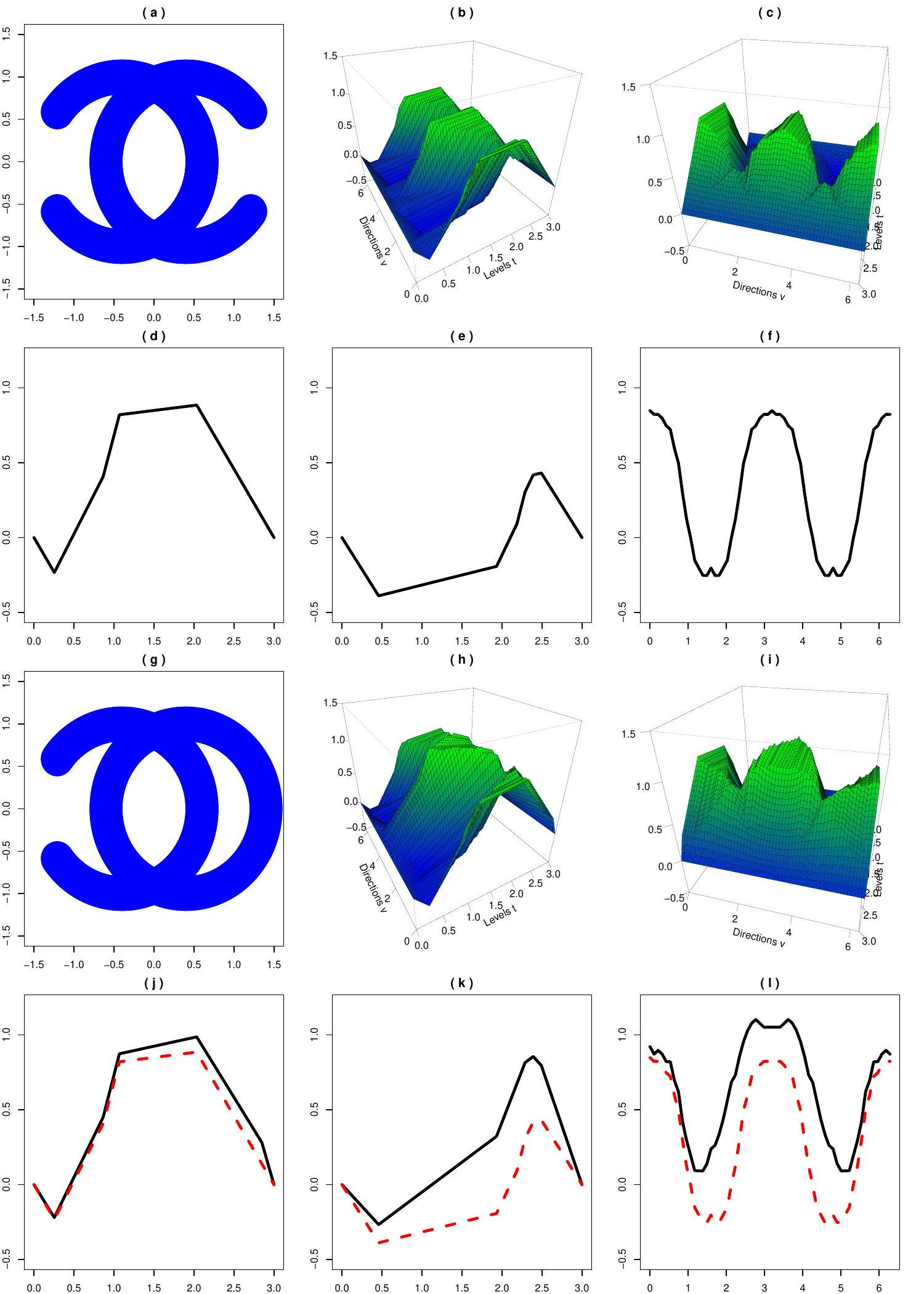}
    \caption{{\footnotesize Visualizations of $\operatorname{SECT}(K^{(j)})$ for $j\in\{1,2\}$, where $K^{(1)}$ and $K^{(2)}$ are defined in Eq.~\eqref{eq: example shapes K1 and K2}. Panels (b) and (c) present the same surface from different angles. Panels (h) and (i) present the same surface from different angles. The similarity between the curves in panel (j) partially indicates that the SECT in only one direction does not preserve all the geometric information of a shape. Panels (g) and (h) also appear in the main text under Figure \ref{fig: SECT_illustration}.}}
    \label{fig: SECT visualizations, deterministic}
\end{figure}

These figures illustrate the continuity of $(\nu,t)\mapsto \operatorname{SECT}(K^{(j)})(\nu, t)$ stated in Theorem \ref{lemma: The continuity lemma; Appendix} (Eq.~\eqref{eq: bivariate Holder continuity} therein). Specifically, the curves and surfaces in these figures look smoother than the sample path of Brownian motions, while they are not differentiable everywhere. With probability one, the sample paths of a Brownian motion are not locally $C^{0,\frac{1}{2}}$-continuous \citep[][Remark 22.4]{klenke2013probability}. Hence, based on Supplementary Figure \ref{fig: SECT visualizations, deterministic}, the regularity of $(\nu,t)\mapsto \operatorname{SECT}(K^{(j)})(\nu, t)$ is likely to be better than that of Brownian motion sample paths, but worse than continuously differentiable functions. Therefore, Supplementary Figure \ref{fig: SECT visualizations, deterministic} supports the $C^{0,\frac{1}{2}}$-continuity in Theorem \ref{lemma: The continuity lemma; Appendix}.

Theorem \ref{thm: invertibility} (i.e., the SECT as a map is injective) indicates that all information of $K^{(1)}$ and $K^{(2)}$ is stored in the surfaces presented by Supplementary Figures \ref{fig: SECT visualizations, deterministic}(b), (c), (h), and (i). The red dashed curves in Supplementary Figures \ref{fig: SECT visualizations, deterministic}(j), (k), and (l) are the counterparts of $K^{(1)}$ (see the curves in Supplementary Figures \ref{fig: SECT visualizations, deterministic}(d), (e), and (f)). The discrepancy between the solid black and dashed red curves illustrates the ability of the SECT to distinguish shapes, which motivates us to develop the hypothesis testing approach in Section \ref{section: hypothesis testing}.

\subsection{Proof-of-Concept Simulation Examples II: Random Shapes}\label{section: Proof-of-Concept Simulation Examples II: Random Shapes}

In this subsection, we compute the SECT for a collection of random shapes $\{K_i\}_{i=1}^n$ of dimension $d=2$. These shapes are randomly generated as follows
\begin{align}\label{eq: randon shapes under null}
        \begin{aligned}
K_i  = & \left\{x\in\mathbb{R}^2 \, \Bigg\vert\, \inf_{y\in S_i}\Vert x-y\Vert\le \frac{1}{5}\right\},\ \ \mbox{ where } \\
        S_i  = & \left\{\left(\frac{2}{5}+a_{1,i}\times\cos t,\ \ b_{1,i}\times\sin t\right) \, \Bigg\vert\, \frac{\pi}{5}\le t\le\frac{9\pi}{5}\right\}\\ & \bigcup\left\{\left(-\frac{2}{5}+a_{2,i}\times\cos t,\ \ b_{2,i}\times\sin t\right) \, \Bigg\vert\, \frac{6\pi}{5}\le t\le\frac{14\pi}{5}\right\},
        \end{aligned}
\end{align}
and $\{a_{1,i}, a_{2,i}, b_{1,i}, b_{2,i}\}_{i=1}^n \overset{\operatorname{i.i.d.}}{\sim} N(1, 0.05^2)$ follow a normal distribution. One element of the shape collection $\{K_i\}_{i=1}^n$ is presented in Supplementary Figure \ref{fig: SECT visualizations, random}(a). The underlying distribution on $\mathcal{S}_{R,d}^M$ generating $\{K_i\}_{i=1}^n$ is denoted by $\mathbb{P}$, and the expectation associated with $\mathbb{P}$ is denoted by $\mathbb{E}$. We estimate the expected value $\mathbb{E}\{\operatorname{SECT}(\nu, t)\}$ by the sample average $\frac{1}{n}\sum_{i=1}^n \operatorname{SECT}(K_i)(\nu, t)$ with $n=100$. We identify each direction $\nu\in\mathbb{S}^1$ through the parametrization $\nu=(\cos\vartheta, \sin\vartheta)$ with some $\vartheta\in[0,2\pi)$ as we did in Section \ref{Proof-of-Concept Simulation Examples I: Deterministic Shapes}. 

\begin{itemize}
    \item The surface of the map $(\vartheta,t)\mapsto \mathbb{E}\{\operatorname{SECT}(\nu, t)\}$ is presented in Supplementary Figures \ref{fig: SECT visualizations, random}(b) and (c).

    \item The black solid curves in Supplementary Figure \ref{fig: SECT visualizations, random}(d) present the 100 sample paths $t\mapsto \operatorname{SECT}(K_i)\left((1,0)^\T, t\right)$; the black solid curves in Supplementary Figure \ref{fig: SECT visualizations, random}(e) present sample paths $t\mapsto \operatorname{SECT}(K_i)\left((0,1)^\T, t\right)$; and the black solid curves in Supplementary Figure \ref{fig: SECT visualizations, random}(f) present paths $\vartheta\mapsto \operatorname{SECT}(K_i)\left((\cos\vartheta, \sin\vartheta)^\T, \frac{3}{2}\right)$, for $i\in\{1,\cdots,100\}$.

    \item The red solid curves in Supplementary Figures \ref{fig: SECT visualizations, random}(d), (e), and (f) present mean curves 
    \begin{align*}
        & t\mapsto \mathbb{E}\{\operatorname{SECT}\left((1,0)^\T, t\right)\}, \\
        & t\mapsto \mathbb{E}\{\operatorname{SECT}\left((0,1)^\T, t\right)\}, \text{ and }\\
        & \vartheta\mapsto \mathbb{E}\left\{\operatorname{SECT}\left((\cos\vartheta, \sin\vartheta)^\T, \frac{3}{2}\right) \right\},
    \end{align*}
    respectively.
\end{itemize} 

The smoothness of the red solid curves in Supplementary Figures \ref{fig: SECT visualizations, random}(d) and (e) supports the regularity of $\{m_\nu(t)\}_{t\in[0,T]}$ in Lemma \ref{thm: mean is in H}. The finite variance of $\operatorname{SECT}(\nu, t)$ for $\nu=(1,0)^\T, (0,1)^\T$ and $t=3/2$, visually presented in Supplementary Figures \ref{fig: SECT visualizations, random}(d), (e), and (f), supports Lemma \ref{assumption: existence of second moments}.

In addition, the blue dashed curves in Supplementary Figures \ref{fig: SECT visualizations, random}(d), (e), and (f) present curves 
\begin{align*}
    & t\mapsto \operatorname{SECT}(K^{(1)})\left((1,0)^\T,t\right), \\
    & t\mapsto \operatorname{SECT}(K^{(1)})\left((0,1)^\T,t\right), \ \ \text{ and }\\
    & \vartheta\mapsto \operatorname{SECT}(K^{(1)})\left((\cos\vartheta, \sin\vartheta)^\T,\frac{3}{2}\right),
\end{align*}
respectively, where shape $K^{(1)}$ is defined in Eq.~\eqref{eq: example shapes K1 and K2}. Since $\mathbb{E}\{a_{1,i}\}=\mathbb{E}\{a_{2,i}\}=\mathbb{E}\{b_{1,i}\}=\mathbb{E}\{b_{2,i}\}=1$, the shape $K^{(1)}$ defined in Eq.~\eqref{eq: example shapes K1 and K2} can be somewhat viewed as the ``mean shape" of the random collection $\{K_i\}_{i=1}^n$. The similarity between the red solid curves and blue dashed curves in \ref{fig: SECT visualizations, random}(d), (e), and (f) supports the ``mean shape" role of $K^{(1)}$. The rigorous definition of a ``mean shape" and its relationship to the mean function $\mathbb{E}\{\operatorname{SECT}(\nu, t)\}$ is outside the scope of this work. A potential approach for defining mean shapes is through the following Fréchet mean \citep{frechet1948elements}
\begin{align}\label{Frechet mean shape}
    K_\oplus \overset{\operatorname{def}}{=} \argmin_{K\in\mathcal{S}_{R,d}^M} \mathbb{E}\left[\left\{\rho(\,\cdot, K)\right\}^2\right] = \argmin_{K\in\mathcal{S}_{R,d}^M} \left[ \int_{\mathcal{S}_{R,d}^M} \left\{\rho(K', K)\right\}^2 \mathbb{P}(dK') \right],
\end{align}
where $\rho$ can be either the metric on $\mathcal{S}_{R,d}^M$ defined in Eq.~\eqref{Eq: distance between shapes} or any other metrics generating $\sigma$-algebras satisfying Assumption \ref{assumption: the measurability of ECC}. 

The existence and uniqueness of the minimizer $K_\oplus$ in Eq.~\eqref{Frechet mean shape}, the relationship between $\operatorname{SECT}(K_\oplus)$ and $\mathbb{E}\{\operatorname{SECT}\}$, and the extension of Eq.~\eqref{Frechet mean shape} to Fréchet and Wasserstein regression \citep{petersen2019frechet, chen2021wasserstein} for random shapes are left for future research. The study of the existence of $K_\oplus$ will be an analog of Section 4 in \cite{mileyko2011probability}.

In the scenarios where the SECT of shapes from distribution $\mathbb{P}$ are computed only in finitely many directions and at finitely many levels (see the end of Section \ref{section: distributions of Gaussian bridge}), the mean surface $(\vartheta, t) \mapsto \mathbb{E}\{\operatorname{SECT}(\nu, t)\}$ in Supplementary Figures \ref{fig: SECT visualizations, random}(b) and (c) can also be potentially estimated using manifold learning methods \citep{dunson2021inferring, meng2021principal, li2022efficient}.

\begin{figure}[h]
    \centering
    \includegraphics[scale=0.6]{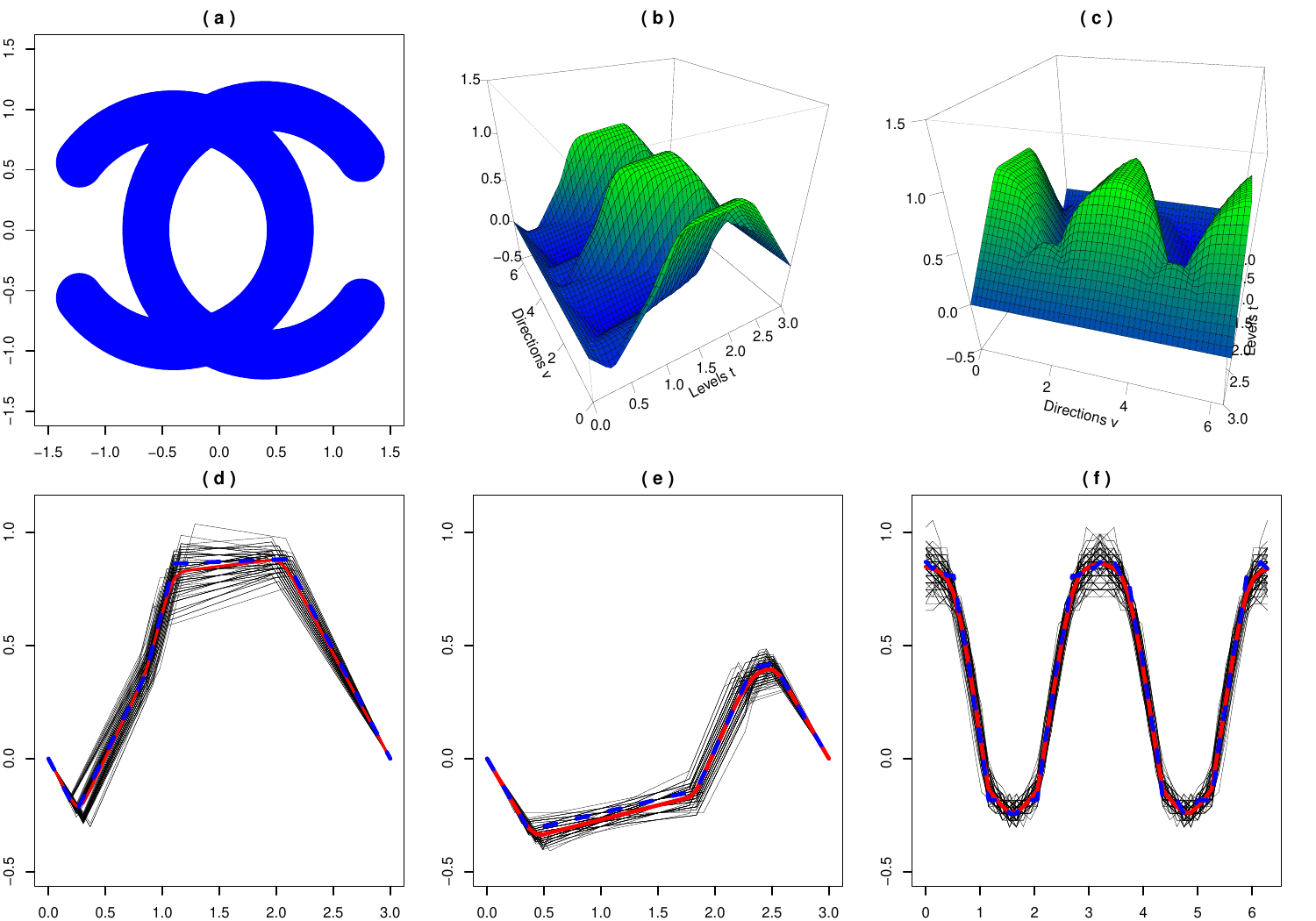}
    \caption{{\footnotesize Panel (a) presents a shape generated from the distribution in Eq.~\eqref{eq: randon shapes under null}. Panels (b) and (c) present the mean surface $(\vartheta, t) \mapsto \mathbb{E}\{\operatorname{SECT}(\nu, t)\}$ from different angles. Panels (d), (e), and (f) present the sample paths $\{\operatorname{SECT}(K_i)(\nu, t)\}_{i=1}^n$ with $\nu=(1,0)^\T$, $\nu=(0,1)^\T$, and $t=3/2$, respectively.} }
    \label{fig: SECT visualizations, random}
\end{figure}

\subsection{Computation of SECT}\label{section: Computation of SECT Using the Čech Complexes}

Here, we present an approach to computing the SECT of the shapes in Figures \ref{fig: SECT visualizations, deterministic} and \ref{fig: SECT visualizations, random}. Let $K \subseteq\mathbb{R}^d$ be a shape of interest. Suppose a finite set of points $\{x_i\}_{i=1}^I\subseteq K$ and a radius $r>0$ are properly chosen such that 
\begin{align}\label{eq: ball unions approx shapes}
    \begin{aligned}
        & K_t^\nu = \left\{x\in K \, \vert \, x\cdot\nu \le t-R \right\} \approx \bigcup_{i\in \mathfrak{I}_t^\nu} \overline{B(x_i,r)},\ \ \mbox{ for all }t\in[0,T]\mbox{ and }\nu\in\mathbb{S}^{d-1},\\
    & \mbox{ where }\, \mathfrak{I}_t^\nu \overset{\operatorname{def}}{=} \left\{i\in\mathbb{N} \, \Big\vert \, 1\le i\le I \, \mbox{ and }\, x_i\cdot\nu\le t-R\right\},
    \end{aligned}
\end{align}
and $\overline{B(x_i,r)}  \overset{\operatorname{def}}{=}  \{x\in\mathbb{R}^d:\Vert x-x_i\Vert\le r\}$ denotes a closed ball centered at $x_i$ with radius $r$. For example, when $d=2$, centers $x_i$ may be chosen as a subset of the grid points
\begin{align*}
    \left\{y_{j,j'} \overset{\operatorname{def}}{=} \left (-R+j\cdot\delta, \, -R+j'\cdot\delta \right)^\T \right\}_{j,j'=1}^J
\end{align*}
of the square $[-R,R]^2$ containing shape $K$, where $\delta=\frac{2R}{J}$ and radius $r=\delta$. Specifically, 
\begin{align}\label{eq: approximation using grid points}
    K_t^\nu \approx \bigcup_{y_{j,j'}\in K_t^\nu} \overline{B\left(y_{j,j'}, \, \delta\right)}\ \ \mbox{ for all }t\in[0,T]\mbox{ and }\nu\in\mathbb{S}^{d-1},
\end{align}
which is a special case of Eq.~\eqref{eq: ball unions approx shapes}. The shape approximation in Eq.~\eqref{eq: approximation using grid points} is illustrated by Supplementary Figures \ref{fig: Computing_SECT}(a) and (b).

\paragraph*{Čech complexes} The Čech Complex determined by the point set $\{x_i\}_{i\in \mathfrak{I}_t^\nu}$ and radius $r$ in Eq.~\eqref{eq: ball unions approx shapes} is defined as the following simplicial complex
\begin{align*}%\label{eq: def of Cech complexes}
    \check{C}_r\left( \{x_i\}_{i\in \mathfrak{I}_t^\nu} \right) \overset{\operatorname{def}}{=} \left\{ \operatorname{conv}\left(\{x_i\}_{i\in s}\right) \, \Bigg\vert \, s\in 2^{\mathfrak{I}_t^\nu} \mbox{ and } \bigcap_{i\in s}\overline{B(x_i,r)}\ne\emptyset\right\},
\end{align*}
where $\operatorname{conv}\left(\{x_i\}_{i\in s}\right)$ denotes the convex hull generated by points $\{x_i\}_{i\in s}$. The nerve theorem \citep[][Chapter III]{edelsbrunner2010computational} indicates that the Čech Complex $\check{C}_r\left( \{x_i\}_{i\in \mathfrak{I}_t^\nu} \right)$ and the union $\bigcup_{i\in \mathfrak{I}_t^\nu} \overline{B(x_i,r)}$ have the same homotopy type. Hence, they share the same Betti numbers, i.e.,
\begin{align*}
    \beta_k\Big(\check{C}_r\left( \{x_i\}_{i\in \mathfrak{I}_t^\nu} \right)\Big) = \beta_k\left(\bigcup_{i\in \mathfrak{I}_t^\nu} \overline{B(x_i,r)}\right),\ \ \mbox{ for all }k\in\mathbb{Z}.
\end{align*}
Using the shape approximation in Eq.~\eqref{eq: ball unions approx shapes}, we have the following approximation for ECC
\begin{align}\label{eq: ECC approximation using Cech complexes}
    \chi_t^{\nu}(K) \approx \sum_{k=0}^{d-1} (-1)^{k} \cdot \beta_k\Big(\check{C}_r\left( \{x_i\}_{i\in \mathfrak{I}_t^\nu} \right)\Big),\ \ \ t\in[0,T].
\end{align}
The method of computing the Betti numbers of simplicial complexes in Eq.~\eqref{eq: ECC approximation using Cech complexes} is standard and can be found in the literature \citep{edelsbrunner2010computational, niyogi2008finding}. Then, the SECT of $K$ is estimated using Eq.~\eqref{Eq: definition of SECT}. The smoothing effect of the integrals in Eq.~\eqref{Eq: definition of SECT} reduces the estimation error.

\begin{figure}[h]
    \centering
    \includegraphics[scale=0.58]{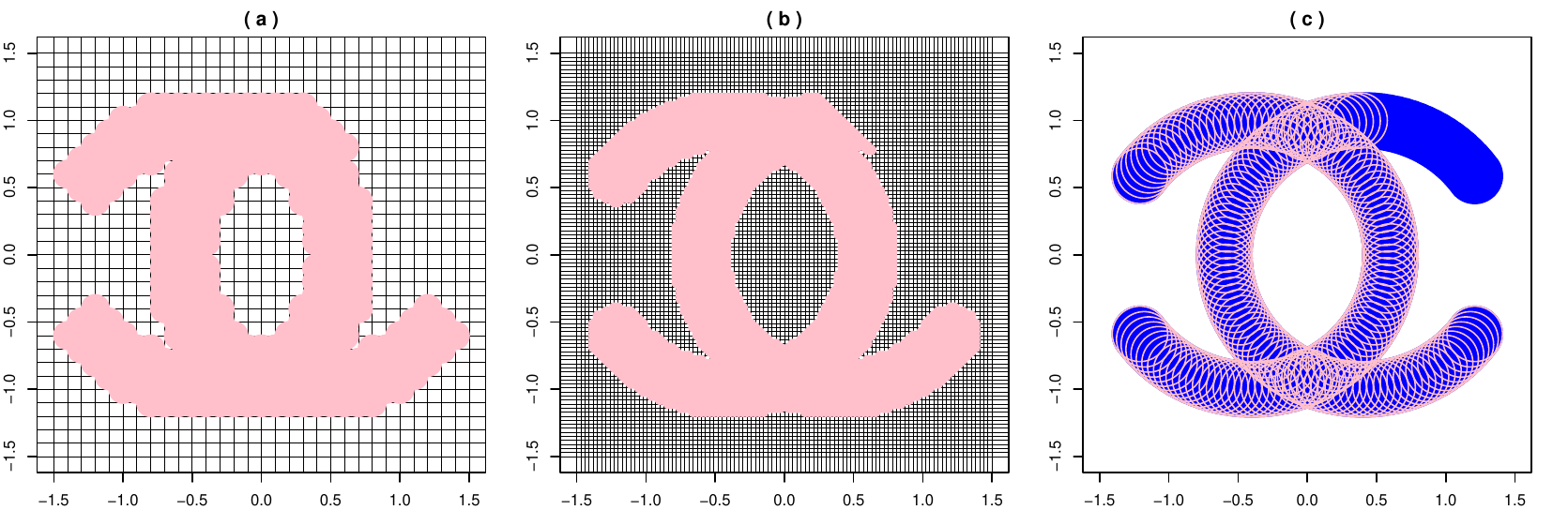}
    \caption{{\footnotesize Illustrations of the shape approximation in Eq.~\eqref{eq: ball unions approx shapes}. The shape $K$ of interest herein is equal to the $K^{(1)}$ defined in Eq.~\eqref{eq: example shapes K1 and K2} (see Supplementary Figure \ref{fig: SECT visualizations, deterministic}(a) and the blue shape in the panel (c) herein). We set $R=\frac{3}{2}$, $t=1$, and $\nu=\left(\sqrt{2}/2,\, \sqrt{2}/2\right)^\T$. Panels (a) and (b) specifically illustrate the approximation in Eq.~\eqref{eq: approximation using grid points} using grid points, and the pink shapes in the two panels present the union $\bigcup_{y_{j,j'}\in K_t^\nu} \overline{B (y_{j,j'},\, \delta )}$ in Eq.~\eqref{eq: approximation using grid points}; in panel (a), $J=30$ (i.e., $\delta=0.1$); in panel (b), $J=100$ (i.e., $\delta=0.03$). Panel (c) illustrates the the approximation in Eq.~\eqref{eq: ball unions approx shapes} with centers $\{x_i\}_i$ being the points in Eq.~\eqref{eq: the centers of our shape approximation} and $r=\frac{1}{5}$. Each pink circle in panel (c) presents a ball $\overline{B(x_i, r)}$.}}
    \label{fig: Computing_SECT}
\end{figure}

\paragraph*{Computing the SECT for our proof-of-concept and simulation examples} For the shape $K^{(1)}$ defined in Eq. \eqref{eq: example shapes K1 and K2} of Section \ref{Proof-of-Concept Simulation Examples I: Deterministic Shapes}, we estimate the SECT of $K^{(1)}$ using the aforementioned Čech complex approach with the following setup: $R=\frac{3}{2}$, $r=\frac{1}{5}$, and the point set $\{x_i\}_i$ is equal to the following collection
\begin{align}\label{eq: the centers of our shape approximation}
     \left\{\left(\frac{2}{5}+\cos t_j, \sin t_j\right) \,\Bigg\vert\, t_j=\frac{\pi}{5}+\frac{j}{J}\cdot\frac{8\pi}{5}\right\}_{j=1}^J\bigcup\left\{\left(-\frac{2}{5}+\cos t_j, \sin t_j\right) \,\Bigg\vert\, \frac{6\pi}{5}+\frac{j}{J}\cdot\frac{8\pi}{5}\right\}_{j=1}^J,
\end{align}
where $J$ is a sufficiently large integer. We implement $J=100$ in our proof-of-concept example. Supplementary Figure \ref{fig: Computing_SECT}(c) illustrates the shape approximation using this setup. The SECT for other shapes in our proof-of-concept/simulation examples is estimated in the similar way.

\section{Alignment}\label{section: ECT Alignment}

In this subsection, we introduce an important preprocessing step based on the ECT to align shapes and discuss its motivation.

\subsection{Motivation}\label{section: Motivation for the ECT-alignment}

It is widely accepted in geometric morphometrics that the intrinsic information of a shape does not change if one translates, rotates, or reflects (if handedness is ignored) the shape, which is mathematically represented as follows. Let $\operatorname{E}(d)$ be the group of rigid motions in $\mathbb{R}^d$ (i.e., the Euclidean group, it comprises translations, rotations, reflections, and finite combinations of them). Shapes $K$ and $K'$ are identified as the same if there exists $\varphi\in\operatorname{E}(d)$ such that $K=\varphi(K')$ (denoted as $K \sim K'$). One can verify that $\sim$ is an equivalence relation and the shape collection of interest is the quotient space $\mathcal{S}_{R,d}^M/\sim$. Since the $\rho$ in Eq.~\eqref{Eq: distance between shapes} is a metric, we may detect the equivalence between $K$ and $K'$ via the following
\begin{align}\label{eq: inf for alignment}
    \Tilde{\rho}\left( K, \, K' \right) \overset{\operatorname{def}}{=} \inf_{\varphi\in\operatorname{E}(d)} \left\{ \rho \left( K, \, \varphi(K') \right) \right\}.
\end{align}
That is, $K\sim K'$ if $\Tilde{\rho}\left( K, \, K' \right)=0$. Eq.~\eqref{eq: inf for alignment} resembles the Procrustes distance in statistical shape analysis \citep[][Section 2.1]{gao2019gaussianmorphometrics}.

To compute the $\Tilde{\rho}$ in Eq.~\eqref{eq: inf for alignment}, one needs $\operatorname{ECT}(\varphi(K))=\{\chi_t^\nu(\varphi(K)) \,\vert\,(\nu,t)\in\mathbb{S}^{d-1}\times[0,T]\}$ for all $\varphi\in\operatorname{E}(d)$. Computing the ECT of every shape $\varphi(K)$ for all $\varphi\in\operatorname{E}(d)$ can be computationally expensive. We may implement the following to bypass the infeasible computation: We compute the ECT of only one shape $K$; we apply the ``dual motion" $\varphi_*$ to the computed $\operatorname{ECT}(K)$ such that $\varphi_*\operatorname{ECT}(K)=\operatorname{ECT}(\varphi(K))$. Then, it suffices to derive the dual motion $\varphi_*$ and show that $\varphi_*$ is computationally efficient for every $\varphi\in\operatorname{E}(d)$. By the structure of $\operatorname{E}(d)$, we only need to derive the dual motions of translations and $\operatorname{O}(d)$-actions, where $\operatorname{O}(d)$ is the orthogonal group in dimension $d$ (rotations and reflections). 

We first consider translations. For any $\eta\in\mathbb{R}^d$, denote $K+\eta\overset{\operatorname{def}}{=}\{x+\eta \,\vert\, x\in K\}$ and assume $K+\eta\subseteq B(0,R)$ without loss of generality. One can then show the following 
\begin{align}\label{eq: translation for ECT}
    \chi_t^\nu\left( K+\eta \right) = \chi_{t-\eta\cdot\nu}^\nu(K).
\end{align}
That is, if $\varphi: x\mapsto x+\eta$, then its dual $\varphi_*: \chi_t^\nu \mapsto \chi_{t-\eta\cdot\nu}^\nu$. Secondly, we consider $\operatorname{O}(d)$-actions. For any $\pmb{A}\in\operatorname{O}(d)$, denote $\pmb{A}K\overset{\operatorname{def}}{=}\{\pmb{A}x \,\vert\,x\in K\}$. One can show the following
\begin{align}\label{eq: O(d) for ECT}
    \chi_t^\nu(\pmb{A}K)=\chi_t^{\pmb{A}^{-1}\nu}(K)
\end{align}
\citep[for a generalized version, see][]{meng2023Inference}. That is, if $\varphi:x\mapsto\pmb{A}x$, then $\varphi_*:\chi_t^\nu\mapsto\chi_t^{\pmb{A}^{-1}\nu}$. Furthermore, Eq.~\eqref{eq: translation for ECT} and Eq.~\eqref{eq: O(d) for ECT} imply that rigid motions do not influence the qualitative properties of SECT (e.g., measurability, Sobolev-regularity, and $\frac{1}{2}$-Hölder continuity).

\subsection{ECT Alignment}

In almost all shape analysis studies, the initial step is to align the shapes. The objective of the alignment is to mitigate the difference between two shapes caused by rigid motions. Suppose $K^\diamondsuit$ is the ``standard shape" as a template. Motivated by Eq.~\eqref{eq: inf for alignment}, we align each shape $K$ to be $\varphi^\blacklozenge(K)$ before any statistical inference, where
\begin{align}\label{eq: ECT alignment}
    \varphi^\blacklozenge \overset{\operatorname{def}}{=} \argmin_{\varphi\in\operatorname{E}(d)} \left\{ \rho \left( K^\diamondsuit, \, \varphi(K) \right) \right\}.
\end{align}
Following the discussion at the end of Section \ref{section: Motivation for the ECT-alignment}, the ECT alignment defined in Eq.~\eqref{eq: ECT alignment} does not change the qualitative properties of the SECT.A numerical approach for the minimization in Eq.~\eqref{eq: ECT alignment}, as well as its proof-of-concept studies, is provided in Supplementary Material of \cite{wang2021statistical} (Section 4). We apply this numerical approach in our paper.

\section{Numerical Foundation for Hypothesis Testing}\label{The Numerical foundation for Hypothesis Testing}

In Section \ref{section: hypothesis testing}, we proposed a fdANOVA approach to testing the hypotheses in Eq.~\eqref{eq: the main hypotheses} based on the $\{\xi_{l,i}\}_{l}$ defined in Eq.~\eqref{eq: def of the xi statistic}. In applications, neither the mean $m_{\nu}^{(j)}(t)$ nor the covariance $\Xi_{\nu}(s,t)$ is known. Hence, the corresponding KL expansion in Eq.~\eqref{eq: rigorous KL expansions of SECT} is unavailable, and the proposed hypothesis testing approach is not directly applicable. Here, motivated by Section 4.3.2 of \cite{williams2006gaussian}, we propose a method for estimating the $\{\xi_{l,i}\}_{l}$ defined in Eq.~\eqref{eq: def of the xi statistic}.

For random shapes $\{K_i^{(j)}\}_{i=1}^n\overset{\operatorname{i.i.d.}}{\sim}\mathbb{P}^{(j)}$, with $j\in\{1,2\}$, we compute their corresponding SECT in finitely many directions and sublevel sets as discussed at the end of Section \ref{section: distributions of Gaussian bridge} to get $\{\operatorname{SECT}(K_i^{(j)})(\nu_p;t_q)\,|\, p=1,\cdots, \Gamma \mbox{ and }q=1,\cdots,\Delta\}_{i=1}^n$ for $j\in\{1,2\}$, where $t_q  =  \frac{T}{\Delta}q$. The SECT of all shapes $K_i^{(j)}$ in the two collections are computed in the same collection of directions $\{\nu_p\}_{p=1}^\Gamma$ and at the same collection of sublevel sets $\{t_q\}_{q=1}^\Delta$. Using Eq.~(5.3) of \cite{zhang2013analysis}, we estimate the mean $\widehat{m}_{\nu_p}^{(j)}(t_q)$ of $m_{\nu_p}^{(j)}(t_q)$ at level $t_q$ by taking the sample mean of $\{\operatorname{SECT}(K_i^{(j)})(\nu_p;t_q)\}_{i=1}^n$ across $i\in\{1,\cdots,n\}$. Then, we estimate the distinguishing direction $\nu^*$ by 
\begin{align}\label{eq: estimated distinguishing direction}
    \widehat{\nu}^* \overset{\operatorname{def}}{=} \argmax_{\nu_p} \left[ \max_{t_q} \left\{\left\vert\, \widehat{m}_{\nu_p}^{(1)}(t_q) - \widehat{m}_{\nu_p}^{(2)}(t_q) \,\right\vert \right\} \right].
\end{align}

Under Assumption \ref{assumption: equal covariance assumption}, we estimate the covariance matrix $\left(\Xi_{\nu^*}(t_{q'},t_{q})\right)_{q,q'=1,\cdots,\Delta}$ using the pooled sample covariance matrix $\pmb{C}  =  \left(\widehat{\Xi}_{\nu^*}(t_{q'},t_{q}) \right)_{q,q'=1,\cdots,\Delta}$ defined by
\begin{align}\label{eq: centered sample vectors}
    \widehat{\Xi}_{\nu^*}(t_{q'},t_{q}) \overset{\operatorname{def}}{=} \frac{1}{2n-1} \sum_{j=1}^2 \sum_{i=1}^n \left( \operatorname{SECT}(K_i^{(j)})(\widehat{\nu}^*,\,t_{q'}) - \widehat{m}_{\widehat{\nu}^*}^{(j)}(t_{q'}) \right)\cdot \left( \operatorname{SECT}(K_i^{(j)})(\widehat{\nu}^*,\,t_{q}) - \widehat{m}_{\widehat{\nu}^*}^{(j)}(t_{q}) \right),
\end{align}
which is based on Eq.~(5.3) of \cite{zhang2013analysis}.

Since the eigenfunctions $\{\phi_l\}_{l=1}^\infty$ and eigenvalues $\{\lambda_l\}_{l=1}^\infty$ satisfy $\lambda_l\phi_l=\int_0^T \phi_l(s)\cdot\Xi_{\nu^*}(s,\cdot) \, ds$, we have the following approximation
\begin{align*}
    \lambda_l\phi_l(t_q)=\int_0^T \phi_l(s)\cdot\Xi_{\nu^*}(s,t_q) \, ds\approx\frac{T}{\Delta}\sum_{q'=1}^\Delta \phi_l(t_{q'})\cdot\Xi_{\nu^*}(t_{q'}, t_q) \approx \frac{T}{\Delta}\sum_{q'=1}^\Delta \phi_l(t_{q'})\cdot\widehat{\Xi}_{\nu^*}(t_{q'}, t_q),
\end{align*}
which is represented in the following matrix form
\begin{align}\label{eq: matrix form, approximate integrals by riemann sums}
    &\lambda_l\begin{pmatrix}
    \phi_l(t_1)\\
    \vdots \\
    \phi_l(t_\Delta)
    \end{pmatrix}\approx\frac{T}{\Delta}
    \begin{pmatrix}
   \widehat{\Xi}_{\nu^*}(t_{1}, t_1) & \ldots & \widehat{\Xi}_{\nu^*}(t_\Delta, t_1) \\
    \vdots & \ddots & \vdots \\
    \widehat{\Xi}_{\nu^*}(t_\Delta, t_2) & \ldots & \widehat{\Xi}_{\nu^*}(t_\Delta, t_\Delta)
    \end{pmatrix}
    \begin{pmatrix}
    \phi_l(t_1)\\
    \vdots \\
    \phi_l(t_\Delta)
    \end{pmatrix}.
\end{align}
We denote the eigenvectors and eigenvalues of $\pmb{C}$ as $\{\pmb{v}_l=(v_{l,1}, \cdots, v_{l,\Delta})^\T\}_{l=1}^\Delta$ and $\{\Lambda_l\}_{l=1}^\Delta$, respectively. The following equation motivates the estimator $\phi_l(t_q)\approx \widehat{\phi}_l(t_q) \overset{\operatorname{def}}{=} \sqrt{\frac{\Delta}{T}} \cdot v_{l,q}$, for all $l\in\{1,\cdots,\Delta\}$,
\begin{align*}
    \sum_{q=1}^\Delta v_{l,q}^2 = \Vert \pmb{v}_l \Vert^2 = 1 = \int_0^T\vert \phi_l(t)\vert^2 dt \approx \frac{T}{\Delta}\sum_{q=1}^\Delta \left(\phi_l(t_q)\right)^2 = \sum_{q=1}^\Delta \left(\sqrt{\frac{T}{\Delta}}\cdot\phi_l(t_q)\right)^2.
\end{align*}
The following equation motivates the estimator $\lambda_l\approx\widehat{\lambda}_l \overset{\operatorname{def}}{=} \frac{T}{\Delta}\Lambda_l$, for all $l\in\{1,\cdots,\Delta\}$,
\begin{align*}
    \lambda_l \left(\widehat{\phi}_l(t_1), \cdots, \widehat{\phi}_l(t_\Delta)\right)^\T &\approx \frac{T}{\Delta}\pmb{C}\left(\widehat{\phi}_l(t_1), \cdots, \widehat{\phi}_l(t_\Delta)\right)^\T  \\
    & = \sqrt{\frac{T}{\Delta}} \pmb{C} \pmb{v}_l \\
    &= \sqrt{\frac{T}{\Delta}} \Lambda_l \pmb{v}_l \\
    &= \left(\frac{T}{\Delta}\Lambda_l\right) \left(\widehat{\phi}_l(t_1), \cdots, \widehat{\phi}_l(t_\Delta)\right)^\T.
\end{align*}
Additionally, we estimate the $L$ defined in Eq.~\eqref{eq: def of L} by the following
\begin{align}\label{eq: estimated L}
    L\approx \widehat{L} \overset{\operatorname{def}}{=} \max\{1,\, \widehat{\mathcal{L}}\},\ \ \ \widehat{\mathcal{L}} \overset{\operatorname{def}}{=} \min \left\{ l=1,\cdots, \Delta \,\Bigg\vert\, \frac{\sum_{l'=1}^l \max\{ \widehat{\lambda}_{l'}, \, 0 \}}{\sum_{l^{''}=1}^\Delta \max\{ \widehat{\lambda}_{l^{''}}, \, 0\}} >0.95\right\},
\end{align}
where we use $\max\{ \widehat{\lambda}_{l^{''}}, \, 0\}$ to compensate for when the estimated eigenvalues may be numerically negative in applications. We estimate the $\xi_{l,i}$ defined in Eq.~\eqref{eq: def of the xi statistic} by the following
\begin{align}\label{eq: def of xi_hat}
    \xi_{l,i} \approx \widehat{\xi}_{l,i} \overset{\operatorname{def}}{=} \frac{1}{\sqrt{2\widehat{\lambda}_l}} \cdot \frac{T}{\Delta} \sum_{q=1}^\Delta \left\{ \operatorname{SECT}(K_i^{(1)})(\widehat{\nu}^*,\,t_q) - \operatorname{SECT}(K_i^{(2)})(\widehat{\nu}^*,\,t_q) \right\} \widehat{\phi}_l(t_q),
\end{align}
for $l=1,\ldots,\widehat{L}$ and $i=1,\ldots,n$. Then, we implement the $\chi^2$-test in Eq.~\eqref{eq: rejection region} as follows
\begin{align}\label{eq: numerical chisq rejection region}
    \sum_{l=1}^{ \widehat{L}}\left(\frac{1}{\sqrt{n}}\sum_{i=1}^n \widehat{\xi}_{l,i}\right)^2 > \chi^2_{\widehat{L}, 1-\alpha} = \mbox{ the $1-\alpha$ lower quantile of the $\chi^2_{\widehat{L}}$ distribution}.
\end{align}
We encapsulate the numerical procedures for the $\chi^2$-test above by Algorithm \ref{algorithm: testing hypotheses on mean functions; Appendix}.
\begin{algorithm}[h]
\renewcommand{\thealgorithm}{1}\caption{: $\chi^2$-test.}\label{algorithm: testing hypotheses on mean functions; Appendix}
	\begin{algorithmic}[1]
		\INPUT
        \noindent (i) SECT of two collection of shapes $\{\operatorname{SECT}(K_i^{(j)})(\nu_p,\,t_q):p=1,\cdots,\Gamma \mbox{ and } q=1,\cdots,\Delta\}_{i=1}^n$ for $j\in\{1,2\}$;
        (ii) confidence level $1-\alpha$ with $\alpha\in(0,1)$.
		\OUTPUT \texttt{Accept} or \texttt{Reject} the null hypothesis $H_0$ in Eq.~\eqref{eq: the main hypotheses}.
		\STATE For each $j\in\{1,2\}$, compute $\widehat{m}_{\nu_p}^{(j)}(t_q) \overset{\operatorname{def}}{=}$ sample mean of $\{\operatorname{SECT}(K_i^{(j)})(\nu_p;t_q)\}_{i=1}^n$ across $i\in\{1,\cdots,n\}$.
		\STATE Compute the estimated distinguishing direction $\widehat{\nu}^*$ using Eq.~\eqref{eq: estimated distinguishing direction}.
		\STATE Compute $\pmb{C}=(\widehat{\Xi}_{\nu^*}(t_{q'},t_{q}))_{q,q'=1,\cdots,\Delta}$ using Eq.~\eqref{eq: centered sample vectors}.
		\STATE Compute the eigenvectors $\{\pmb{v}_l\}_{l=1}^\Delta$ and eigenvalues $\{\Lambda_l\}_{l=1}^\Delta$ of the matrix $\pmb{C}$.
		\STATE Compute $\widehat{\phi}_l(t_q) \overset{\operatorname{def}}{=}  \sqrt{\frac{\Delta}{T}} v_{l,q}$ and $\widehat{\lambda}_l \overset{\operatorname{def}}{=}  \frac{T}{\Delta}\Lambda_l$ for all $l=1,\cdots,\Delta$.
		\STATE Compute $\widehat{L}$ using Eq.~\eqref{eq: estimated L}.
		\STATE Compute $\{\xi_{l,i}:l=1,\cdots,\widehat{L}\}_{i=1}^n$ using Eq.~\eqref{eq: def of xi_hat}, test null $H_0$ using Eq.~\eqref{eq: numerical chisq rejection region}, and report the output.
		\end{algorithmic}
\end{algorithm}

In addition to the $\chi^2$-test detailed in Algorithm \ref{algorithm: testing hypotheses on mean functions; Appendix}, we also propose a permutation-based test as an alternative approach for assessing the statistical hypotheses in Eq.~\eqref{eq: the main hypotheses}. The main idea behind the permutation test is that, under the null hypothesis, shuffling the group labels of shapes should not heavily change the test statistic of interest. To perform the permutation-based test, we first apply Algorithm \ref{algorithm: testing hypotheses on mean functions; Appendix} to our original data and then repeatedly re-apply Algorithm \ref{algorithm: testing hypotheses on mean functions; Appendix} to shapes with shuffled labels.$^\S$\footnote{$\S$: When we apply Algorithm \ref{algorithm: testing hypotheses on mean functions; Appendix} to the original SECT, the result of Eq.~\eqref{eq: estimated L} is denoted as $\widehat{L}_0$. When we apply Algorithm \ref{algorithm: testing hypotheses on mean functions; Appendix} to the shuffled SECT, the $\widehat{L}$ resulting from Eq.~\eqref{eq: estimated L} may differ from $\widehat{L}_0$. To make the comparison between $\mathfrak{S}_0$ and $\mathfrak{S}_{(k^*)}$ fair (see the last step of Algorithm \ref{algorithm: permutation-based testing hypotheses on mean functions}), we set $\widehat{L}$ to be $\widehat{L}_0$.} We then compare how the test statistics derived from the original differ from those computed on the shuffled data. The details of this permutation-based approach are provided in Algorithm \ref{algorithm: permutation-based testing hypotheses on mean functions}. Simulation studies in Section \ref{section: Simulation experiments} show that Algorithm \ref{algorithm: permutation-based testing hypotheses on mean functions} can eliminate the moderate type I error inflation of Algorithm \ref{algorithm: testing hypotheses on mean functions; Appendix}; however, the power under the alternative for Algorithm \ref{algorithm: permutation-based testing hypotheses on mean functions} is moderately weaker than that of Algorithm \ref{algorithm: testing hypotheses on mean functions; Appendix}. 

\begin{algorithm}[h]
\renewcommand{\thealgorithm}{2}\caption{: Permutation-based $\chi^2$-test.}\label{algorithm: permutation-based testing hypotheses on mean functions}
	\begin{algorithmic}[1]
		\INPUT
        \noindent (i) SECT of two collections of shapes $\{\operatorname{SECT}(K_i^{(j)})(\nu_p,\,t_q):p=1,\cdots,\Gamma \mbox{ and } q=1,\cdots,\Delta\}_{i=1}^n$ for $j\in\{1,2\}$;
        (ii) desired confidence level $1-\alpha$ with significance $\alpha\in(0,1)$; (iii) the number of permutations $\Pi$.
		\OUTPUT \texttt{Accept} or \texttt{Reject} the null hypothesis $H_0$ in Eq.~\eqref{eq: the main hypotheses}.
		\STATE Apply Algorithm \ref{algorithm: testing hypotheses on mean functions; Appendix} to the original input SECT data, compute $\widehat{L}_0$ using Eq.~\eqref{eq: estimated L} (see footnote $\S$), and compute the $\chi^2$-test statistic denoted as $\mathfrak{S}_0$ using Eq.~\eqref{eq: numerical chisq rejection region}.
		\FORALL{$k=1,\cdots,\Pi$, }
		\STATE Randomly permute the group labels $j\in\{1,2\}$ of the input SECT data.
		\STATE Apply Algorithm \ref{algorithm: testing hypotheses on mean functions; Appendix} to the permuted SECT data while setting $\widehat{L}$ to be the $\widehat{L}_0$, instead of using Eq.~\eqref{eq: estimated L}, and compute a $\chi^2$-test statistic $\mathfrak{S}_k$ using Eq.~\eqref{eq: numerical chisq rejection region}.
		\ENDFOR
  \STATE Sort the sequence $\{\mathfrak{S}_k\}_{k=1}^\Pi$ into ascending order, i.e., we have the ordered values $\{\mathfrak{S}_{(k)}\}_{k=1}^\Pi$ such that $\mathfrak{S}_{(1)} \le \mathfrak{S}_{(2)} \le \ldots \le \mathfrak{S}_{(\Pi)}$.
		\STATE Compute $k^* \overset{\operatorname{def}}{=}[(1-\alpha)\cdot\Pi]\overset{\operatorname{def}}{=}$ the largest integer smaller than $(1-\alpha)\cdot\Pi$. 
		\STATE \texttt{Reject} the null hypothesis $H_0$ if $\mathfrak{S}_0>\mathfrak{S}_{(k^*)}$ and report the output.
		\end{algorithmic}
\end{algorithm}

\section{Randomization-style Null Hypothesis\\ Significance Test}\label{Randomization-style Null Hypothesis Significance Test}

In Section \ref{section: Simulation experiments}, we compare our proposed Algorithms \ref{algorithm: testing hypotheses on mean functions; Appendix} and \ref{algorithm: permutation-based testing hypotheses on mean functions} with the ``randomization-style null hypothesis significance test (NHST)" \citep[][particularly Section 5.3 therein]{robinson2017hypothesis}, which is designed to test the following hypotheses
\begin{align}\label{eq: hypotheses for randomization-style NHST}
H_0:\ \ \mathbb{P}^{(1)}=\mathbb{P}^{(2)}\ \ \ vs.\ \ \ H_1:\ \ \mathbb{P}^{(1)}\ne\mathbb{P}^{(2)},
\end{align}
The randomization-style NHST is based on the permutation test and the following loss function
\begin{align}\label{eq: randomization-style NHST}
    F\left(\{K_i^{(1)}\}_{i=1}^n, \, \{K_i^{(2)}\}_{i=1}^n\right) \overset{\operatorname{def}}{=}\frac{1}{2n(n-1)}\sum_{k,l=1}^n\left\{\rho\left(K_k^{(1)},\, K_l^{(1)}\right)+\rho\left(K_k^{(2)},\, K_l^{(2)}\right)\right\},
\end{align}
where $\rho$ is the distance function defined in Eq.~\eqref{Eq: distance between shapes}. 

For a given discrete ECT, where $\{\operatorname{ECT}(K_i^{(j)})(\nu_p,\,t_q):p=1,\cdots,\Gamma \mbox{ and } q=1,\cdots,\Delta\}_{i=1}^n$, we may adopt the following approximation
\begin{align}\label{eq: approximate rho}
    \rho\left(K_k^{(j)},\, K_l^{(j)}\right) \approx \sup_{p=1,\ldots,\Gamma}\left(\sum_{q=1}^\Delta \left\vert \operatorname{ECT}(K_k^{(j)})(\nu_p,\, t_q)-\operatorname{ECT}(K_l^{(j)})(\nu_p,\, t_q)\right\vert^2\right)^{1/2}.
\end{align}
We apply Algorithm \ref{algorithm: randomization-style NHST} to implement the randomization-style NHST.

\begin{algorithm}[h]
\renewcommand{\thealgorithm}{3}
		\caption{: Randomization-style NHST}

 \label{algorithm: randomization-style NHST}
	\begin{algorithmic}[1]
		\INPUT
        \noindent (i) ECT of two collection of shapes $\{\operatorname{ECT}(K_i^{(j)})(\nu_p,\,t_q):p=1,\cdots,\Gamma \mbox{ and } q=1,\cdots,\Delta\}_{i=1}^n$ for $j\in\{1,2\}$;
        (ii) confidence level $1-\alpha$ with $\alpha\in(0,1)$; (iii) the number of permutations $\Pi$.
		\OUTPUT \texttt{Accept} or \texttt{Reject} the null hypothesis $H_0$ in Eq.~\eqref{eq: hypotheses for randomization-style NHST}.
		\STATE Apply Eq.~\eqref{eq: randomization-style NHST} and Eq.~\eqref{eq: approximate rho} to the original input ECT data and compute the value of the loss $\mathfrak{S}_0 \overset{\operatorname{def}}{=} F(\{K_i^{(1)}\}_{i=1}^n, \, \{K_i^{(2)}\}_{i=1}^n)$.
		\FORALL{$k=1,\cdots,\Pi$, }
		\STATE Randomly permute the group labels $j\in\{1,2\}$ of the input ECT data.
		\STATE Apply Eq.~\eqref{eq: randomization-style NHST} and Eq.~\eqref{eq: approximate rho} to the permuted ECT data and compute the corresponding value of the loss function $F$; denote the value of the loss by $\mathfrak{S}_k$.
		\ENDFOR
  \STATE Sort the sequence $\{\mathfrak{S}_k\}_{k=1}^\Pi$ into ascending order, i.e., we have the ordered values $\{\mathfrak{S}_{(k)}\}_{k=1}^\Pi$ such that $\mathfrak{S}_{(1)} \le \mathfrak{S}_{(2)} \le \ldots \le \mathfrak{S}_{(\Pi)}$.
		\STATE Compute $k^* \overset{\operatorname{def}}{=}[\alpha\cdot\Pi]\overset{\operatorname{def}}{=}$ the largest integer smaller than $\alpha\cdot\Pi$. 
		\STATE \texttt{Reject} the null hypothesis $H_0$ if $\mathfrak{S}_0<\mathfrak{S}_{(k^*)}$ and report the output.
		\end{algorithmic}
\end{algorithm}

\section{Landmark-based Permutation Test}\label{section: Landmark-based Permutation Test}

Landmarks are widely used in geometric morphometrics \citep{kendall1977diffusion, kendall1984shape, kendall1989survey, gao2019gaussian, gao2019gaussianmorphometrics}. In this section, we introduce a landmark-based hypothesis testing approach for distinguishing between shape collections. We compare this approach with our proposed Algorithm \ref{algorithm: testing hypotheses on mean functions; Appendix} and \ref{algorithm: permutation-based testing hypotheses on mean functions} using the mandibular molar data presented in Figure \ref{fig: Teeth} (see Section \ref{section: Applications}). This section is divided into the following two subsections:
\begin{itemize}
    \item In Section \ref{section: Permutation Test Using Landmark-based Procrustes Distances}, motivated by \cite{robinson2017hypothesis}, we propose a permutation test based on Procrustes distances, which is presented in Algorithm \ref{algorithm: Permutation test using landmark-based Procrustes distances}.

    %\item In Section \ref{section: Naive Landmarks and Correspondence}, we first propose the landmarks and correspondence naively chosen for measuring the dissimilarity between the shape collections $\{K_i^{(0)}\}_{i=1}^n$ and $\{K_i^{(\epsilon)}\}_{i=1}^n$ in the simulation study in Section \ref{section: Simulation experiments}. Then, we propose the Procrustes distance $\varrho_P^D$ associated with the landmarks and correspondence; the Procrustes distance can be as an input of Algorithm \ref{algorithm: Permutation test using landmark-based Procrustes distances}.

    \item In Section \ref{section: Gaussian Process-based Landmarks and Correspondence}, we first use the ``Gaussian process landmarking (GPL) algorithm" \citep{gao2019gaussian, gao2019gaussianmorphometrics} to generate landmarks on the mandibular molars analyzed in Section \ref{section: Applications}. Then, we apply the ``bounded distortion Gaussian process landmark matching" method \citep[][Section 4.2]{gao2019gaussianmorphometrics} to generate the correspondence between each pair of mandibular molars. Lastly, we implement the continuous Procrustes distance $\varrho_P$ induced by the correspondence. The Procrustes distance $\varrho_P$ can be used as an input to Algorithm \ref{algorithm: Permutation test using landmark-based Procrustes distances}.
\end{itemize}

\subsection{Permutation Test Using Landmark-based Distances}\label{section: Permutation Test Using Landmark-based Procrustes Distances}

In this subsection, to test the hypotheses in Eq.~\eqref{eq: hypotheses for randomization-style NHST}, we propose a permutation test using landmark-based distances. Similar to Algorithm \ref{algorithm: randomization-style NHST}, we need the following loss function
\begin{align}\label{eq: loss function for the Procrustes distance-based permutation test}
    G\left(\{K_i^{(1)}\}_{i=1}^n, \, \{K_i^{(2)}\}_{i=1}^n\right) \overset{\operatorname{def}}{=}\frac{1}{2n(n-1)}\sum_{k,l=1}^n\left\{\varrho\left(K_k^{(1)},\, K_l^{(1)}\right)+\varrho\left(K_k^{(2)},\, K_l^{(2)}\right)\right\},
\end{align}
where $\{K_i^{(j)}\}_{i=1}^n\overset{\operatorname{i.i.d.}}{\sim}\mathbb{P}^{(j)}$, for $j\in\{1,2\}$, and the $\varrho$ is a landmark-based distance. A comprehensive description of the permutation test using landmark-based distances is given in Algorithm \ref{algorithm: Permutation test using landmark-based Procrustes distances}. An example of the landmark-based distance $\varrho$ is the continuous Procrustes distance $\varrho_{P}$ defined in Section \ref{section: Gaussian Process-based Landmarks and Correspondence}. 

\begin{algorithm}[h]
\renewcommand{\thealgorithm}{4}
		\caption{: Permutation test using landmark-based Procrustes distances}

 \label{algorithm: Permutation test using landmark-based Procrustes distances}
	\begin{algorithmic}[1]
		\INPUT
        \noindent (i) Landmark-based distance $\varrho$;
        (ii) confidence level $1-\alpha$ with significance $\alpha\in(0,1)$; (iii) the number of permutations $\Pi$.
		\OUTPUT \texttt{Accept} or \texttt{Reject} the null hypothesis $H_0$ in Eq.~\eqref{eq: hypotheses for randomization-style NHST}.
		\STATE Compute the value of the loss function in Eq.~\eqref{eq: loss function for the Procrustes distance-based permutation test} using the original (un-shuffled) shapes. That is, compute the distances $\{\varrho(K_k^{(1)}, K_l^{(1)})\}_{k,l=1}^n$ and $\{\varrho(K_k^{(2)}, K_l^{(2)})\}_{k,l=1}^n$; then, compute the value $\mathfrak{S}_0 \overset{\operatorname{def}}{=} G(\{K_i^{(1)}\}_{i=1}^n, \, \{K_i^{(2)}\}_{i=1}^n)$ using Eq.~\eqref{eq: loss function for the Procrustes distance-based permutation test}.
		\FORALL{$k=1,\cdots,\Pi$, }
		\STATE Randomly permute the group labels $j\in\{1,2\}$ of the shapes $\{K_i^{(j)}\}_{i,j}$.
		\STATE Apply Eq.~\eqref{eq: loss function for the Procrustes distance-based permutation test} to the permuted shapes and compute the corresponding value of the loss function $G$; denote the value of the loss by $\mathfrak{S}_k$.
		\ENDFOR
  \STATE Sort the sequence $\{\mathfrak{S}_k\}_{k=1}^\Pi$ into ascending order, i.e., we have the ordered values $\{\mathfrak{S}_{(k)}\}_{k=1}^\Pi$ such that $\mathfrak{S}_{(1)} \le \mathfrak{S}_{(2)} \le \ldots \le \mathfrak{S}_{(\Pi)}$.
		\STATE Compute $k^* \overset{\operatorname{def}}{=}[\alpha\cdot\Pi]\overset{\operatorname{def}}{=}$ the largest integer smaller than $\alpha\cdot\Pi$. 
		\STATE \texttt{Reject} the null hypothesis $H_0$ if $\mathfrak{S}_0<\mathfrak{S}_{(k^*)}$ and report the output.
		\end{algorithmic}
\end{algorithm}

\subsection{Gaussian Process-based Landmarks and Correspondence}\label{section: Gaussian Process-based Landmarks and Correspondence}

This subsection is designed for the landmark-based analysis of mandibular molars presented in Section \ref{section: Applications}. In this section, we briefly introduce the ``Gaussian process landmarking (GPL) algorithm" and ``bounded distortion GPL matching" method that were developed in \cite{gao2019gaussian} and \cite{gao2019gaussianmorphometrics}. Following this, a continuous Procrustes distance \citep{al2013continuous} is defined to measure the dissimilarity between each pair of mandibular molars.

\subsubsection{Gaussian process landmarking algorithm}

The GPL algorithm and its applications to the anatomical surfaces represented by discrete triangular meshes have been developed in \cite{gao2019gaussian} and \cite{gao2019gaussianmorphometrics}. The algorithm is comprehensively described in Algorithm 2.1 of \cite{gao2019gaussianmorphometrics}, which is the one implemented in our paper. Briefly, the GPL algorithm takes a triangular mesh of interest and the number of desired landmarks (which is set to be 40 in both \cite{gao2019gaussianmorphometrics} and our paper) as inputs; it returns feature vertices (called landmarks) of the triangular mesh as the output. Figure 1 of \cite{gao2019gaussianmorphometrics} gives an example visually showing the performance of the GPL algorithm.

\subsubsection{Procrustes distance induced by the Gaussian Process-based Landmarks and Correspondence}

In addition to the GPL algorithm, \cite{gao2019gaussianmorphometrics} also introduced the ``bounded distortion GPL matching" method (Section 4.2 therein) to construct a correspondence map $f$ using the sampled landmarks. Specifically, suppose we are interested in two anatomical surfaces $K_1$ and $K_2$, which are the shapes of interest; landmarks $\{\zeta_i^{(1)}\}_{i=1}^{40}$ and $\{\zeta_i^{(2)}\}_{i=1}^{40}$ are sampled from the $K_1$ and $K_2$, respectively, using the GPL algorithm;\footnote{\cite{gao2019gaussianmorphometrics} set the number of landmarks on each anatomical surface to be 40 in the examples illustrated therein; we adopt the same choice.} the bounded distortion GPL matching method returns a correspondence map $f: K_1\rightarrow K_2$ using these landmarks as input.

With the obtained correspondence map $f:K_1\rightarrow K_2$, we may compute the following continuous Procrustes distance (also see \cite{al2013continuous} and Eq.~(4.3) of \cite{gao2019gaussianmorphometrics})
\begin{align}\label{eq: def of the continuous Procrustes distance}
    \varrho_{P} \left( K_1, \,K_2 \right) \overset{\operatorname{def}}{=} \left( \inf_{\varphi\in\operatorname{E}(3)} \int_{K_1} \left\Vert f(x)-\varphi(x) \right\Vert^2 d\operatorname{vol}_{K_1}(x)
    \right)^{1/2},
\end{align}
where $d\operatorname{vol}_{K_1}$ denotes the volume form of the surface $K_1$.

In this paper, we compute $\varrho_{P}$ using the approach proposed in \cite{gao2019gaussianmorphometrics}. Specifically, we implement the code provided in the GitHub repository of the authors.

\section{Applications to Silhouette Database}\label{section: Silhouette Database}

We use a subset of the silhouette database that includes three classes of shapes: apples, hearts, and children (see Supplementary Figure \ref{fig: Silhouette Database}; each class has 20 shapes). For each shape shown in Supplementary Figure \ref{fig: Silhouette Database}, we compute its SECT. Specifically, we compute the ECCs for 72 directions, evenly sampled over the interval $[0,2\pi]$. For each direction, we analyze 100 sublevel sets. We apply Algorithm \ref{algorithm: testing hypotheses on mean functions; Appendix} and \ref{algorithm: permutation-based testing hypotheses on mean functions} to test the hypothesis that shapes differ between classes and present the results in Table \ref{tab: Silhouette Database}. The p-values in Table \ref{tab: Silhouette Database} are either $\chi^2$-test p-values (from Algorithm \ref{algorithm: testing hypotheses on mean functions; Appendix}) or permutation-test p-values (from Algorithm \ref{algorithm: permutation-based testing hypotheses on mean functions} with $\Pi=1000$). In addition to testing differences between shape classes, we also apply the algorithms within each individual shape class. Specifically, for each shape class, we randomly split the class into two halves and test for differences between them using the algorithms. We repeat this random splitting procedure 100 times and present the corresponding p-values in Table \ref{tab: Silhouette Database} (rows 4-6). For each shape class, we summarize the 100 p-values by reporting their mean with the standard deviation given in parentheses.

\begin{figure}[h]
    \centering
    \includegraphics[scale=0.25]{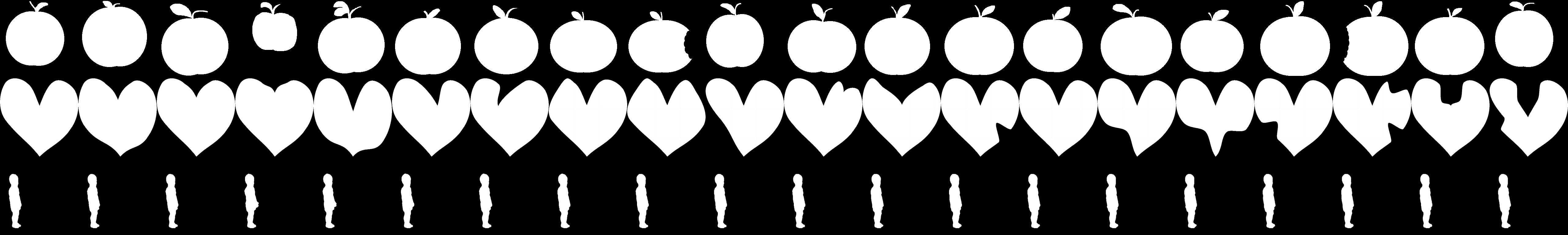}
    \caption{{\footnotesize Each row corresponds to one of the shape classes: apples, hearts, and children.}}
    \label{fig: Silhouette Database}
\end{figure}

\begin{table}[h]
\vspace*{0.5em}
\small
\def\arraystretch{0.8}
\caption{P-values of Algorithm \ref{algorithm: testing hypotheses on mean functions; Appendix} and \ref{algorithm: permutation-based testing hypotheses on mean functions} for the silhouette database.}
\label{tab: Silhouette Database}
\centering
\begin{tabular}{|c|c|c|}
\hline
& Algorithm \ref{algorithm: testing hypotheses on mean functions; Appendix}                                         & Algorithm \ref{algorithm: permutation-based testing hypotheses on mean functions}        \\[2pt] \hline

Apples vs. Hearts                       & $<0.01$ &  $<0.01$    \\[2pt]
Apples vs. Children                      & $<0.01$   & $<0.01$                   \\ [2pt]
Hearts vs. Children &$<0.01$ &$<0.01$\\[2pt]\hline
Apples vs. Apples &   0.26 (0.23)   &      0.46 (0.27)  \\[2pt]
Hearts vs. Hearts &    0.17 (0.16) &    0.47 (0.29) \\[2pt]
Children vs. Children & 0.39 (0.28) &   0.49 (0.30)\\[2pt]
\hline
\end{tabular}
\end{table}

Rows 1-3 of Table \ref{tab: Silhouette Database} show that our proposed Algorithm \ref{algorithm: testing hypotheses on mean functions; Appendix} and \ref{algorithm: permutation-based testing hypotheses on mean functions} can distinguish the shape classes of apples, hearts, and children presented in Supplementary Figure \ref{fig: Silhouette Database}. Rows 4-6 show that our proposed algorithms tend to minimize type I errors for the shape-valued data presented in Supplementary Figure \ref{fig: Silhouette Database}. The within-class p-values for each shape class reflect its homogeneity or heterogeneity (e.g., the class of children has the highest homogeneity among all three shape classes). Algorithm \ref{algorithm: permutation-based testing hypotheses on mean functions} tends to have larger p-values than Algorithm \ref{algorithm: testing hypotheses on mean functions; Appendix} when applied within each shape class, which is essentially due to the permutation procedure implemented in Algorithm \ref{algorithm: permutation-based testing hypotheses on mean functions}.

\section{Numerical Experiments on the ANOVA for Functional Data --- Existing Methods vs. Our Proposed Methods}\label{appendix: Numerical Experiments on One-way ANOVA --- Existing Methods vs. Our Proposed Methods}

In this section, we compare our proposed Algorithms \ref{algorithm: testing hypotheses on mean functions; Appendix} and \ref{algorithm: permutation-based testing hypotheses on mean functions} with twelve existing fdANOVA methods, which are listed below \citep[also see][]{gorecki2019fdanova}:
\begin{itemize}
    \item FP: permutation test based on a basis function representation and an F-statistic \citep{gorecki2015comparison}.

    \item CH and CS: $L^2$-norm-based parametric bootstrap tests for homoscedastic and heteroscedastic samples, respectively \citep{cuevas2004anova}.

    \item L2N and L2B: $L^2$-norm-based test with naive and bias-reduced method of estimation, respectively \citep{faraway1997regression, zhang2007statistical, zhang2013analysis}.

    \item L2b: $L^2$-norm-based bootstrap test \citep{zhang2013analysis}.

    \item FN and FB: F-type test with naive and bias-reduced method of estimation, respectively \citep{shen2004f, zhang2011statistical}.

    \item Fb: F-type bootstrap test \citep{zhang2013analysis}.

    \item TRP: tests based on random projections \citep{cuesta2010simple}. TRP-ANOVA indicates the TRP test using the ANOVA F statistic; TRP-ATS indicates the TRP test using the ANOVA-type statistic (ATS) proposed by \cite{brunner1997box}. TRP-WTPS indicates the TRP test using the Wald-type permutation statistic (WTPS) of \cite{pauly2015asymptotic}.
\end{itemize}
These existing methods are implemented using the \texttt{R} package \texttt{fdANOVA} \citep{gorecki2019fdanova}. Further details regarding these methods are available in \cite{gorecki2019fdanova}. As stated within \cite{gorecki2019fdanova}: ``satisfactory results are usually obtained by using the default values"; as a result, we use the default parameters provided by the package \texttt{fdANOVA}.

The application of the fdANOVA methods to the SECT is described in Section \ref{section: hypothesis testing}. The distributions that generate the simulated shapes in our numerical experiments are described in Section \ref{section: Simulation experiments}, particularly in Eq.~\eqref{eq: explicit P varepsilon}. The rejection rates of Algorithms \ref{algorithm: testing hypotheses on mean functions; Appendix}, \ref{algorithm: permutation-based testing hypotheses on mean functions}, \ref{algorithm: randomization-style NHST}, and the existing fdANOVA methods in different settings are presented in Table \ref{table: comparison with the existing ANOVA methods}, which complements the results in Table \ref{table: epsilon vs. rejection rates}. Notably, $\varepsilon=0$ indicates that the simulated shapes are generated under the null hypothesis. These rejection rates are also depicted in Figure \ref{fig: simulation visualizations}.

\begin{table}[h]
\centering
\caption{\footnotesize{Rejection rates (from 1000 experiments) for different indices $\varepsilon$ (significance $\alpha=0.05$).}}\label{table: comparison with the existing ANOVA methods}
\vspace*{0.5em}
\small
\def\arraystretch{0.8}
\begin{tabular}{|c|c|c|c|c|c|c|c|c|c|}
\hline
Indices $\varepsilon$ & 0.00 & 0.01 & 0.02 & 0.03 & 0.04 & 0.05 & 0.06 & 0.08 & 0.10 \\ \hline
Algorithm \ref{algorithm: testing hypotheses on mean functions; Appendix} & 0.118 & 0.161 & 0.315 & 0.519 & 0.785 & 0.910 &  0.975  & 0.990  & 1.000   \\
Algorithm \ref{algorithm: permutation-based testing hypotheses on mean functions} & 0.046  & 0.054 & 0.162 & 0.343  & 0.612   & 0.789  & 0.931   & 0.994 & 1.000 \\
Algorithm \ref{algorithm: randomization-style NHST} & 0.050 & 0.050 & 0.111 & 0.185 & 0.335 & 0.535 & 0.739 & 0.983 & 0.999 \\
FP         & 0.136 & 0.153 & 0.308  & 0.539  & 0.810  & 0.924 & 0.986 & 0.997 & 1.000  \\
CH         & 0.136 & 0.141 & 0.309  & 0.539  & 0.806  & 0.927 & 0.985 & 0.997  & 1.000 \\
CS         & 0.139 & 0.148 & 0.318  & 0.547  & 0.807  & 0.933 & 0.985 & 0.997  & 1.000  \\
L2N        & 0.138 & 0.147 & 0.319  & 0.550  & 0.809  & 0.931 & 0.985 & 0.997  & 1.000  \\
L2B        & 0.139 & 0.155 & 0.321  & 0.557  & 0.810  & 0.933 & 0.985 & 0.997  & 1.000  \\
L2b        & 0.138 & 0.147 & 0.319  & 0.551  & 0.803  & 0.929 & 0.985 & 0.997  & 1.000  \\
FN         & 0.136 & 0.144 & 0.316  & 0.544  & 0.807  & 0.929 & 0.985 & 0.997  & 1.000  \\
FB         & 0.138 & 0.147 & 0.318  & 0.546  & 0.808  & 0.929 & 0.985 & 0.997  & 1.000  \\
Fb         & 0.138 & 0.139 & 0.309  & 0.541  & 0.802  & 0.925 & 0.984 & 0.997  & 1.000  \\
TRP-ANOVA  & 0.073 & 0.091 & 0.256  & 0.510  & 0.782  & 0.931 & 0.980 & 0.997  & 1.000  \\
TRP-ATS    & 0.076 & 0.093 & 0.258  & 0.519  & 0.785  & 0.932 & 0.982 & 0.997  & 1.000  \\
TRP-WTPS   & 0.075 & 0.091 & 0.261  & 0.515  & 0.790  & 0.929 & 0.980 & 0.997  & 1.000  \\ \hline
\end{tabular}
\end{table}

\section{Trade-off Studies for Algorithms \ref{algorithm: testing hypotheses on mean functions; Appendix} and \ref{algorithm: permutation-based testing hypotheses on mean functions} --- Numbers of Directions and Levels, Sample Sizes, and Computational Cost}\label{section: Runtime}

In this section, we study the trade-offs among the following quantities using simulations:
\begin{itemize}
\item Runtimes/computational cost.
    \item Number $\Gamma$ of directions.
    \item Number $\Delta$ of levels.
    \item Number $n$ of shape pairs.
\end{itemize}

Using the random shape-generating mechanism introduced in Section \ref{section: Simulation experiments}, we simulate a data set for each fixed sample size \( n \in \{25, 50, 100\} \) and fixed \( \varepsilon \in \{0,\,0.05\} \). Specifically, the simulated data set is given by \(\{(K_i^{(0)},\, K_i^{(\varepsilon)})\}_{i=1}^n \overset{iid}{\sim} \mathbb{P}^{(0)}\otimes \mathbb{P}^{(\epsilon)}\). Using the simulated data set, we then apply Algorithm \ref{algorithm: testing hypotheses on mean functions; Appendix} and \ref{algorithm: permutation-based testing hypotheses on mean functions} to test the following the hypotheses, considering different combinations of \( \Gamma \in \{2,4,8\} \) and \( \Delta \in \{25, 50, 100\} \).
\begin{align}\label{eq: runtime study HT; Appendix}
    \begin{aligned}
        & H_0: m_\nu^{(0)}(t)=m_\nu^{(\varepsilon)}(t)\mbox{ for all }(\nu,t)\in\mathbb{S}^{d-1}\times[0,T]\ \ \ \\
        & vs. \ \ \ H_1:  m_\nu^{(0)}(t)\ne m_\nu^{(\varepsilon)}(t)\mbox{ for some }(\nu,t).
    \end{aligned}
\end{align} 
We repeat this procedure 20 times and explore all the combinations of \( n \in \{25, 50, 100\} \), \( \Gamma \in \{2,4,8\} \), and \( \Delta \in \{25, 50, 100\} \). The simulation results are presented in Tables \ref{table: runtime of Algorithm 1}, \ref{table: pvalue of Algorithm 1(0.05)}, and \ref{table: pvalue of Algorithm 1(0)} and summarized as follows:
\begin{enumerate}
    \item For each input combination of $n$, $\Gamma$, and $\Delta$ for each algorithm, the mean and standard deviation (in parentheses) of the runtimes across the 20 simulations are presented in Table \ref{table: runtime of Algorithm 1}.
    
    \item For each input combination of $n$, $\Gamma$, and $\Delta$ for each algorithm and when $\epsilon=0.05$, the mean and standard deviation (in parentheses) of the p-values across the 20 simulations are presented in Table \ref{table: pvalue of Algorithm 1(0.05)}. This table presents the trade-offs between the accuracy of our algorithms (in terms of p-values), sample size $n$, number $\Gamma$ of directions, and number $\Delta$ of levels under the alternative hypothesis $H_1$ in Eq.~\eqref{eq: runtime study HT; Appendix} with $\epsilon=0.05$. The p-values in Table \ref{table: pvalue of Algorithm 1(0.05)} demonstrate that the larger the $n$, $\Delta$, and $\Gamma$, the more likely Algorithm \ref{algorithm: testing hypotheses on mean functions; Appendix} and \ref{algorithm: permutation-based testing hypotheses on mean functions} reject the null hypothesis $H_0$ in Eq.~\eqref{eq: runtime study HT; Appendix}. Particularly, with a significance of $\alpha=0.05$, only $\Gamma=4$ directions are sufficient for Algorithm \ref{algorithm: testing hypotheses on mean functions; Appendix} and \ref{algorithm: permutation-based testing hypotheses on mean functions} to reject $H_0$ when $\Delta=100$ and $n=50$. Therefore, we adopt $\Gamma=4$ for the proof-of-concept purpose in our simulation studies presented in Section \ref{section: Simulation experiments}.
    
    \item For each input combination of $n$, $\Gamma$, and $\Delta$ for each algorithm and when $\epsilon=0$, the mean and standard deviation (in parentheses) of the p-values across the 20 simulations are presented in Table \ref{table: pvalue of Algorithm 1(0)}. This table presents the trade-offs between the accuracy of our algorithms (in terms of p-values), sample size $n$, number $\Gamma$ of directions, and number $\Delta$ of levels under the null hypothesis $H_0: \mathbb{P}^{(0)} =  \mathbb{P}^{(\epsilon)}$ with $\epsilon=0$. The p-values in Table \ref{table: pvalue of Algorithm 1(0)} demonstrate that no matter how small or large the $n$, $\Delta$, and $\Gamma$ are, neither Algorithm \ref{algorithm: testing hypotheses on mean functions; Appendix} nor \ref{algorithm: permutation-based testing hypotheses on mean functions} tends to distinguish $\mathbb{P}^{(0)}$ and $\mathbb{P}^{(\epsilon)}$ with $\epsilon=0$ falsely.
\end{enumerate}

The source code for the this study is publically available online through the link provided in the ``Software Availability" section. This study is conducted on a computer with an AMD Ryzen 7 5800H processor running at 3200 MHz using 16 GB of RAM, running Windows version 21H2.

\begin{table}[H]
\centering
\caption{Runtimes of Algorithm \ref{algorithm: testing hypotheses on mean functions; Appendix} and \ref{algorithm: permutation-based testing hypotheses on mean functions} (in seconds) based on $\Gamma, \Delta$, and $n$.}\label{table: runtime of Algorithm 1}
    \vspace*{0.5em}
\small
\def\arraystretch{0.8}
\begin{tabular}{|ccccc|}
\hline
\multicolumn{5}{|c|}{Algorithm \ref{algorithm: testing hypotheses on mean functions; Appendix}}\\
\hline
\multicolumn{1}{|c|}{Number of directions}                   & \multicolumn{1}{c|}{Number of levels}  & \multicolumn{1}{c|}{\textit{n} = 25}  & \multicolumn{1}{c|}{\textit{n} = 50}  &\multicolumn{1}{c|}{\textit{n} = 100}  \\ \hline
\multicolumn{1}{|c|}{} & \multicolumn{1}{c|}{$\Delta=25$} & \multicolumn{1}{c|}{0.75 (0.10)} & \multicolumn{1}{c|}{1.40 (0.08)} & \multicolumn{1}{c|}{2.23 (0.09)} \\ \cline{2-5} 
\multicolumn{1}{|c|}{$\Gamma=2$}                    & \multicolumn{1}{c|}{$\Delta=50$} & \multicolumn{1}{c|}{1.26 (0.09)} & \multicolumn{1}{c|}{2.36 (0.08)} & \multicolumn{1}{c|}{4.25 (0.10)} \\ \cline{2-5} 
\multicolumn{1}{|c|}{}                    & \multicolumn{1}{c|}{$\Delta=100$} & \multicolumn{1}{c|}{2.32 (0.09)} & \multicolumn{1}{c|}{4.52 (0.15)} & \multicolumn{1}{c|}{8.10 (0.21)} \\ \hline
\multicolumn{1}{|c|}{} & \multicolumn{1}{c|}{$\Delta=25$} & \multicolumn{1}{c|}{1.25 (0.09)} & \multicolumn{1}{c|}{2.26 (0.08)} & \multicolumn{1}{c|}{4.13 (0.15)} \\ \cline{2-5} 
\multicolumn{1}{|c|}{$\Gamma=4$}                    & \multicolumn{1}{c|}{$\Delta=50$} & \multicolumn{1}{c|}{2.26 (0.09)} & \multicolumn{1}{c|}{4.29 (0.14)} &\multicolumn{1}{c|}{8.00 (0.18)} \\ \cline{2-5} 
\multicolumn{1}{|c|}{}                    & \multicolumn{1}{c|}{$\Delta=100$} & \multicolumn{1}{c|}{4.28 (0.13)} & \multicolumn{1}{c|}{8.31 (0.17)} & \multicolumn{1}{c|}{15.77 (0.29)} \\ \hline
\multicolumn{1}{|c|}{} & \multicolumn{1}{c|}{$\Delta=25$} & \multicolumn{1}{c|}{2.27 (0.10)} & \multicolumn{1}{c|}{4.25 (0.12)} & \multicolumn{1}{c|}{8.06 (0.21)} \\ \cline{2-5} 
\multicolumn{1}{|c|}{$\Gamma=8$}                    & \multicolumn{1}{c|}{$\Delta=50$} & \multicolumn{1}{c|}{4.32 (0.13)} & \multicolumn{1}{c|}{8.39 (0.17)} &  \multicolumn{1}{c|}{15.81 (0.14)} \\ \cline{2-5} 
\multicolumn{1}{|c|}{}                    & \multicolumn{1}{c|}{$\Delta=100$} & \multicolumn{1}{c|}{8.38 (0.12)} & \multicolumn{1}{c|}{16.69 (0.34)} & \multicolumn{1}{c|}{32.69 (1.16)} \\ \hline
\multicolumn{5}{|c|}{Algorithm \ref{algorithm: permutation-based testing hypotheses on mean functions}}\\
\hline
\multicolumn{1}{|c|}{Number of directions}                   & \multicolumn{1}{c|}{Number of levels}  & \multicolumn{1}{c|}{\textit{n} = 25}  & \multicolumn{1}{c|}{\textit{n} = 50}  &\multicolumn{1}{c|}{\textit{n} = 100}  \\ \hline
\multicolumn{1}{|c|}{} & \multicolumn{1}{c|}{$\Delta=25$} & \multicolumn{1}{c|}{1.53 (0.14)} & \multicolumn{1}{c|}{2.89 (0.11)} & \multicolumn{1}{c|}{4.20 (0.13)} \\ \cline{2-5} 
\multicolumn{1}{|c|}{$\Gamma=2$}                    & \multicolumn{1}{c|}{$\Delta=50$} & \multicolumn{1}{c|}{2.69 (0.12)} & \multicolumn{1}{c|}{4.18 (0.15)} & \multicolumn{1}{c|}{7.34 (0.24)} \\ \cline{2-5} 
\multicolumn{1}{|c|}{}                    & \multicolumn{1}{c|}{$\Delta=100$} & \multicolumn{1}{c|}{4.89 (0.25)} & \multicolumn{1}{c|}{9.12 (0.26)} & \multicolumn{1}{c|}{13.89 (0.83)} \\ \hline
\multicolumn{1}{|c|}{} & \multicolumn{1}{c|}{$\Delta=25$} & \multicolumn{1}{c|}{2.07 (0.17)} & \multicolumn{1}{c|}{3.55 (0.15)} & \multicolumn{1}{c|}{6.16 (0.18)} \\ \cline{2-5} 
\multicolumn{1}{|c|}{$\Gamma=4$}                    & \multicolumn{1}{c|}{$\Delta=50$} & \multicolumn{1}{c|}{3.52 (0.21)} & \multicolumn{1}{c|}{6.31 (0.13)} &\multicolumn{1}{c|}{11.19 (0.28)} \\ \cline{2-5} 
\multicolumn{1}{|c|}{}                    & \multicolumn{1}{c|}{$\Delta=100$} & \multicolumn{1}{c|}{6.97 (0.22)} & \multicolumn{1}{c|}{12.60 (0.67)} & \multicolumn{1}{c|}{21.62 (1.39)} \\ \hline
\multicolumn{1}{|c|}{} & \multicolumn{1}{c|}{$\Delta=25$} & \multicolumn{1}{c|}{3.12 (0.13)} & \multicolumn{1}{c|}{5.64 (0.19)} & \multicolumn{1}{c|}{10.22 (0.22)} \\ \cline{2-5} 
\multicolumn{1}{|c|}{$\Gamma=8$}                    & \multicolumn{1}{c|}{$\Delta=50$} & \multicolumn{1}{c|}{5.64 (0.17)} & \multicolumn{1}{c|}{10.46 (0.16)} &  \multicolumn{1}{c|}{19.36 (0.48)} \\ \cline{2-5} 
\multicolumn{1}{|c|}{}                    & \multicolumn{1}{c|}{$\Delta=100$} & \multicolumn{1}{c|}{11.05 (0.17)} & \multicolumn{1}{c|}{20.42 (0.31)} & \multicolumn{1}{c|}{37.85 (1.87)} \\ \hline
\end{tabular}
\end{table}

\begin{table}[H]
\centering
\caption{P-value of Algorithm \ref{algorithm: testing hypotheses on mean functions; Appendix} and \ref{algorithm: permutation-based testing hypotheses on mean functions} based on $\Gamma, \Delta$, and $n$. ($\varepsilon = 0.05$)}\label{table: pvalue of Algorithm 1(0.05)}
    \vspace*{0.5em}
\small
\def\arraystretch{0.8}
\begin{tabular}{|ccccc|}
\hline
\multicolumn{5}{|c|}{Algorithm \ref{algorithm: testing hypotheses on mean functions; Appendix}}\\
\hline
\multicolumn{1}{|c|}{Number of directions}                   & \multicolumn{1}{c|}{Number of levels}  & \multicolumn{1}{c|}{\textit{n} = 25}  & \multicolumn{1}{c|}{\textit{n} = 50}  &\multicolumn{1}{c|}{\textit{n} = 100}  \\ \hline
\multicolumn{1}{|c|}{} & \multicolumn{1}{c|}{$\Delta=25$} & \multicolumn{1}{c|}{0.3487 (0.2618)} & \multicolumn{1}{c|}{0.1804 (0.2659)} & \multicolumn{1}{c|}{0.0612 (0.1072)} \\ \cline{2-5} 
\multicolumn{1}{|c|}{$\Gamma=2$}                    & \multicolumn{1}{c|}{$\Delta=50$} & \multicolumn{1}{c|}{0.1696 (0.2205)} & \multicolumn{1}{c|}{0.1502 (0.2350)} & \multicolumn{1}{c|}{0.0179 (0.0357)} \\ \cline{2-5} 
\multicolumn{1}{|c|}{}                    & \multicolumn{1}{c|}{$\Delta=100$} & \multicolumn{1}{c|}{0.1224 (0.1383)} & \multicolumn{1}{c|}{0.0565 (0.0948)} & \multicolumn{1}{c|}{0.0123 (0.0251)} \\ \hline
\multicolumn{1}{|c|}{} & \multicolumn{1}{c|}{$\Delta=25$} & \multicolumn{1}{c|}{0.1909 (0.1964)} & \multicolumn{1}{c|}{0.1367 (0.1626)} & \multicolumn{1}{c|}{0.0656 (0.0933)} \\ \cline{2-5} 
\multicolumn{1}{|c|}{$\Gamma=4$}                    & \multicolumn{1}{c|}{$\Delta=50$} & \multicolumn{1}{c|}{0.1414 (0.1693)} & \multicolumn{1}{c|}{0.1250 (0.1861)} &\multicolumn{1}{c|}{0.0070 (0.0119)} \\ \cline{2-5} 
\multicolumn{1}{|c|}{}                    & \multicolumn{1}{c|}{$\Delta=100$} & \multicolumn{1}{c|}{0.1747 (0.2432)} & \multicolumn{1}{c|}{0.0517 (0.0703)} & \multicolumn{1}{c|}{0.0056 (0.0147)} \\ \hline
\multicolumn{1}{|c|}{} & \multicolumn{1}{c|}{$\Delta=25$} & \multicolumn{1}{c|}{0.2773 (0.2425)} & \multicolumn{1}{c|}{0.1130 (0.1770)} & \multicolumn{1}{c|}{0.0294 (0.0561)} \\ \cline{2-5} 
\multicolumn{1}{|c|}{$\Gamma=8$}                    & \multicolumn{1}{c|}{$\Delta=50$} & \multicolumn{1}{c|}{0.1064 (0.1140)} & \multicolumn{1}{c|}{0.0533 (0.0929)} &  \multicolumn{1}{c|}{0.0179 (0.0235)} \\ \cline{2-5} 
\multicolumn{1}{|c|}{}                    & \multicolumn{1}{c|}{$\Delta=100$} & \multicolumn{1}{c|}{0.1371 (0.2053)} & \multicolumn{1}{c|}{0.0296 (0.0694)} & \multicolumn{1}{c|}{0.0044 (0.0125)} \\ \hline
\multicolumn{5}{|c|}{Algorithm \ref{algorithm: permutation-based testing hypotheses on mean functions}}\\
\hline
\multicolumn{1}{|c|}{Number of directions}                   & \multicolumn{1}{c|}{Number of levels}  & \multicolumn{1}{c|}{\textit{n} = 25}  & \multicolumn{1}{c|}{\textit{n} = 50}  &\multicolumn{1}{c|}{\textit{n} = 100}  \\ \hline
\multicolumn{1}{|c|}{} & \multicolumn{1}{c|}{$\Delta=25$} & \multicolumn{1}{c|}{0.3582 (0.2710)} & \multicolumn{1}{c|}{0.1942 (0.1807)} & \multicolumn{1}{c|}{0.0600 (0.1052)} \\ \cline{2-5} 
\multicolumn{1}{|c|}{$\Gamma=2$}                    & \multicolumn{1}{c|}{$\Delta=50$} & \multicolumn{1}{c|}{0.2156 (0.1863)} & \multicolumn{1}{c|}{0.1704 (0.2030)} & \multicolumn{1}{c|}{0.0385 (0.0950)} \\ \cline{2-5} 
\multicolumn{1}{|c|}{}                    & \multicolumn{1}{c|}{$\Delta=100$} & \multicolumn{1}{c|}{0.2067 (0.1984)} & \multicolumn{1}{c|}{0.0883 (0.1186)} & \multicolumn{1}{c|}{0.0601 (0.2046)} \\ \hline
\multicolumn{1}{|c|}{} & \multicolumn{1}{c|}{$\Delta=25$} & \multicolumn{1}{c|}{0.3102 (0.2727)} & \multicolumn{1}{c|}{0.3664 (0.2714)} & \multicolumn{1}{c|}{0.1970 (0.2644)} \\ \cline{2-5} 
\multicolumn{1}{|c|}{$\Gamma=4$}                    & \multicolumn{1}{c|}{$\Delta=50$} & \multicolumn{1}{c|}{0.3282 (0.3152)} & \multicolumn{1}{c|}{0.1445 (0.1841)} &\multicolumn{1}{c|}{0.0364 (0.1141)} \\ \cline{2-5} 
\multicolumn{1}{|c|}{}                    & \multicolumn{1}{c|}{$\Delta=100$} & \multicolumn{1}{c|}{0.3250 (0.3197)} & \multicolumn{1}{c|}{0.0728 (0.1225)} & \multicolumn{1}{c|}{0.0194 (0.0676)} \\ \hline
\multicolumn{1}{|c|}{} & \multicolumn{1}{c|}{$\Delta=25$} & \multicolumn{1}{c|}{0.3704 (0.3408)} & \multicolumn{1}{c|}{0.2533 (0.2406)} & \multicolumn{1}{c|}{0.2138 (0.2670)} \\ \cline{2-5} 
\multicolumn{1}{|c|}{$\Gamma=8$}                    & \multicolumn{1}{c|}{$\Delta=50$} & \multicolumn{1}{c|}{0.2108 (0.2354)} & \multicolumn{1}{c|}{0.1437 (0.2147)} &  \multicolumn{1}{c|}{0.0373 (0.0665)} \\ \cline{2-5} 
\multicolumn{1}{|c|}{}                    & \multicolumn{1}{c|}{$\Delta=100$} & \multicolumn{1}{c|}{0.2107 (0.2843)} & \multicolumn{1}{c|}{0.1613 (0.2141)} & \multicolumn{1}{c|}{0.0186 (0.0567)} \\ \hline
\end{tabular}
\end{table}

\begin{table}[H]
\centering
\caption{P-value of Algorithm \ref{algorithm: testing hypotheses on mean functions; Appendix} and \ref{algorithm: permutation-based testing hypotheses on mean functions} based on $\Gamma, \Delta$, and $n$. ($\varepsilon = 0$)}\label{table: pvalue of Algorithm 1(0)}
    \vspace*{0.5em}
\small
\def\arraystretch{0.8}
\begin{tabular}{|ccccc|}
\hline
\multicolumn{5}{|c|}{Algorithm \ref{algorithm: testing hypotheses on mean functions; Appendix}}\\
\hline
\multicolumn{1}{|c|}{Number of directions}                   & \multicolumn{1}{c|}{Number of levels}  & \multicolumn{1}{c|}{\textit{n} = 25}  & \multicolumn{1}{c|}{\textit{n} = 50}  &\multicolumn{1}{c|}{\textit{n} = 100}  \\ \hline
\multicolumn{1}{|c|}{} & \multicolumn{1}{c|}{$\Delta=25$} & \multicolumn{1}{c|}{0.3826 (0.2486)} & \multicolumn{1}{c|}{0.3712 (0.2770)} & \multicolumn{1}{c|}{0.4374 (0.2333)} \\ \cline{2-5} 
\multicolumn{1}{|c|}{$\Gamma=2$}                    & \multicolumn{1}{c|}{$\Delta=50$} & \multicolumn{1}{c|}{0.3943 (0.2425)} & \multicolumn{1}{c|}{0.3725 (0.2448)} & \multicolumn{1}{c|}{0.3234 (0.2493)} \\ \cline{2-5} 
\multicolumn{1}{|c|}{}                    & \multicolumn{1}{c|}{$\Delta=100$} & \multicolumn{1}{c|}{0.3209 (0.2122)} & \multicolumn{1}{c|}{0.3947 (0.2432)} & \multicolumn{1}{c|}{0.5072 (0.3393)} \\ \hline
\multicolumn{1}{|c|}{} & \multicolumn{1}{c|}{$\Delta=25$} & \multicolumn{1}{c|}{0.2513 (0.2415)} & \multicolumn{1}{c|}{0.4025 (0.2906)} & \multicolumn{1}{c|}{0.2910 (0.2347)} \\ \cline{2-5} 
\multicolumn{1}{|c|}{$\Gamma=4$}                    & \multicolumn{1}{c|}{$\Delta=50$} & \multicolumn{1}{c|}{0.3060 (0.2171)} & \multicolumn{1}{c|}{0.2694 (0.2351)} &\multicolumn{1}{c|}{0.4245 (0.2508)} \\ \cline{2-5} 
\multicolumn{1}{|c|}{}                    & \multicolumn{1}{c|}{$\Delta=100$} & \multicolumn{1}{c|}{0.2855 (0.2420)} & \multicolumn{1}{c|}{0.2357 (0.2230)} & \multicolumn{1}{c|}{0.3426 (0.2939)} \\ \hline
\multicolumn{1}{|c|}{} & \multicolumn{1}{c|}{$\Delta=25$} & \multicolumn{1}{c|}{0.2431 (0.2237)} & \multicolumn{1}{c|}{0.2711 (0.2784)} & \multicolumn{1}{c|}{0.3265 (0.2509)} \\ \cline{2-5} 
\multicolumn{1}{|c|}{$\Gamma=8$}                    & \multicolumn{1}{c|}{$\Delta=50$} & \multicolumn{1}{c|}{0.3199 (0.2110)} & \multicolumn{1}{c|}{0.2430 (0.1694)} &  \multicolumn{1}{c|}{0.3367 (0.2481)} \\ \cline{2-5} 
\multicolumn{1}{|c|}{}                    & \multicolumn{1}{c|}{$\Delta=100$} & \multicolumn{1}{c|}{0.2507 (0.2215)} & \multicolumn{1}{c|}{0.3212 (0.2307)} & \multicolumn{1}{c|}{0.4346 (0.2546)} \\ \hline
\multicolumn{5}{|c|}{Algorithm \ref{algorithm: permutation-based testing hypotheses on mean functions}}\\
\hline
\multicolumn{1}{|c|}{Number of directions}                   & \multicolumn{1}{c|}{Number of levels}  & \multicolumn{1}{c|}{\textit{n} = 25}  & \multicolumn{1}{c|}{\textit{n} = 50}  &\multicolumn{1}{c|}{\textit{n} = 100}  \\ \hline
\multicolumn{1}{|c|}{} & \multicolumn{1}{c|}{$\Delta=25$} & \multicolumn{1}{c|}{0.4278 (0.2947)} & \multicolumn{1}{c|}{0.4764 (0.2830)} & \multicolumn{1}{c|}{0.4921 (0.2359)} \\ \cline{2-5} 
\multicolumn{1}{|c|}{$\Gamma=2$}                    & \multicolumn{1}{c|}{$\Delta=50$} & \multicolumn{1}{c|}{0.5637 (0.3072)} & \multicolumn{1}{c|}{0.5866 (0.2655)} & \multicolumn{1}{c|}{0.5510 (0.3005)} \\ \cline{2-5} 
\multicolumn{1}{|c|}{}                    & \multicolumn{1}{c|}{$\Delta=100$} & \multicolumn{1}{c|}{0.3028 (0.2026)} & \multicolumn{1}{c|}{0.5540 (0.2932)} & \multicolumn{1}{c|}{0.4155 (0.2892)} \\ \hline
\multicolumn{1}{|c|}{} & \multicolumn{1}{c|}{$\Delta=25$} & \multicolumn{1}{c|}{0.5738 (0.3467)} & \multicolumn{1}{c|}{0.5293 (0.2335)} & \multicolumn{1}{c|}{0.4714 (0.2967)} \\ \cline{2-5} 
\multicolumn{1}{|c|}{$\Gamma=4$}                    & \multicolumn{1}{c|}{$\Delta=50$} & \multicolumn{1}{c|}{0.4489 (0.2582)} & \multicolumn{1}{c|}{0.4639 (0.3050)} &\multicolumn{1}{c|}{0.4669 (0.3310)} \\ \cline{2-5} 
\multicolumn{1}{|c|}{}                    & \multicolumn{1}{c|}{$\Delta=100$} & \multicolumn{1}{c|}{0.5193 (0.2830)} & \multicolumn{1}{c|}{0.5579 (0.2984)} & \multicolumn{1}{c|}{0.5557 (0.2215)} \\ \hline
\multicolumn{1}{|c|}{} & \multicolumn{1}{c|}{$\Delta=25$} & \multicolumn{1}{c|}{0.4786 (0.2669)} & \multicolumn{1}{c|}{0.4782 (0.3068)} & \multicolumn{1}{c|}{0.4721 (0.3296)} \\ \cline{2-5} 
\multicolumn{1}{|c|}{$\Gamma=8$}                    & \multicolumn{1}{c|}{$\Delta=50$} & \multicolumn{1}{c|}{0.3867 (0.2768)} & \multicolumn{1}{c|}{0.4543 (0.3116)} &  \multicolumn{1}{c|}{0.5324 (0.2943)} \\ \cline{2-5} 
\multicolumn{1}{|c|}{}                    & \multicolumn{1}{c|}{$\Delta=100$} & \multicolumn{1}{c|}{0.6138 (0.2375)} & \multicolumn{1}{c|}{0.5110 (0.2581)} & \multicolumn{1}{c|}{0.4109 (0.2592)} \\ \hline
\end{tabular}
\end{table}

\section{Proofs}\label{section: appendix, proofs}

\subsection{Proof of Theorem \ref{thm: the separability of C(Shere;H)}}

\begin{proof}[Proof of Theorem \ref{thm: the separability of C(Shere;H)}]
Since $H_0^1([0,T])$ is a separable Hilbert space \citep[][Section 8.3]{brezis2011functional}, it suffices to show the results (i) and (ii). 

The separability of $\mathcal{H}$ implies that $\mathcal{H}$ has an orthonormal basis $\{\boldsymbol{e}_j\}_{j=1}^\infty$ \citep[][Theorem 5.11]{brezis2011functional}. Since $C(\mathbb{S}^{d-1})=C(\mathbb{S}^{d-1};\mathbb{R})$ is separable \citep[][Section 3.6]{brezis2011functional}, $C(\mathbb{S}^{d-1})$ has a dense and countable subset $D$. Then, the linear hull $\Tilde{D}\overset{\operatorname{def}}{=}\operatorname{span}\{g\boldsymbol{e}_j\,\vert\, g\in D \mbox{ and } j=1,2,\cdots\}$ is a dense and countable subset of $C(\mathbb{S}^{d-1};\mathcal{H})$, and the reasoning is the following.

For any $f\in C(\mathbb{S}^{d-1};\mathcal{H})$, we have
\begin{align*}
f(\nu)=\sum_{j=1}^\infty \langle f(\nu), \boldsymbol{e}_j\rangle \boldsymbol{e}_j,\ \ \mbox{ for each }\nu\in\mathbb{S}^{d-1},
\end{align*}
where $\langle \cdot, \cdot \rangle$ denotes the inner product of $\mathcal{H}$ and $\sum_{j=1}^\infty$ converges in the $\mathcal{H}$-topology. It is straightforward that the function $\nu\mapsto \langle f(\nu), \boldsymbol{e}_j\rangle$ is an element of $C(\mathbb{S}^{d-1})$, for each fixed $j=1,2,\cdots$. Hence, for any $\epsilon>0$, there exists $\{g_j\}_{j=1}^\infty\subseteq D$ such that 
\begin{align*}
\sup_{\nu\in\mathbb{S}^{d-1}}\left\vert \langle f(\nu), \boldsymbol{e}_j\rangle -g_j(\nu) \right\vert<\frac{\epsilon}{2^{j+1}},
\end{align*}
for all $j=1,2,\cdots$, which implies
\begin{align*}
\left\Vert f - \sum_{j=1}^\infty g_j \boldsymbol{e}_j \right\Vert_{C(\mathbb{S}^{d-1};\mathcal{H})} &= \sup_{\nu\in\mathbb{S}^{d-1}} \left\Vert f(\nu) - \sum_{j=1}^\infty g_j(\nu) \boldsymbol{e}_j \right\Vert_{\mathcal{H}} \\
& = \sup_{\nu\in\mathbb{S}^{d-1}} \left\Vert \sum_{j=1}^\infty \Big(\langle f(\nu), \boldsymbol{e}_j\rangle-g_j(\nu)\Big) \boldsymbol{e}_j \right\Vert_{\mathcal{H}}\\
&\le \sup_{\nu\in\mathbb{S}^{d-1}} \sum_{j=1}^\infty \left\vert \langle f(\nu), \boldsymbol{e}_j\rangle -g_j(\nu) \right\vert <\epsilon.
\end{align*}
Since $\{\sum_{j=1}^n g_j \boldsymbol{e}_j\}_{n=1}^\infty \subseteq \Tilde{D}$, the proof of result (i) is complete. 

The result (ii) can be proved using the same trick implemented in the proof of result (i). The proof is complete.
\end{proof}

\subsection{An Elementary Proof of Eq.~\eqref{eq: Sobolev embedding from Morrey}}\label{proof: simple proof of the Sobolev embedding}

\begin{proof}[Proof of Eq.~\eqref{eq: Sobolev embedding from Morrey}]
For any $f\in\mathcal{H}  =  H_0^1([0,T]) = \{f\in L^2([0,1])\,\vert\, f'\in L^2([0,T]) \mbox{ and }f(0)=f(T)=0\}$, we identify $f$ as a continuous function defined on the compact interval $[0,T]$ (see Section \ref{section: notation for closed vs. open}, also Theorem 8.2 of \cite{brezis2011functional}, for a justification). It suffices to show $\Vert f\Vert_{C^{0,\frac{1}{2}}([0,T])} \le \Tilde{C}_T \cdot \Vert f\Vert_{\mathcal{H}}$ for some constant $\Tilde{C}_T$ depending only on $T$.

Theorem 8.2 of \cite{brezis2011functional} implies
\begin{align}\label{eq: NL formula for weak derivatives}
    f(t)-f(s) = \int_s^t f'(\tau) \,d\tau, \ \ \text{ for all }s,t\in[0,T],
\end{align}
where $f'$ denotes the weak derivative of $f$. Without loss of generality, we assume $s<t$ in Eq.~\eqref{eq: NL formula for weak derivatives}. $f(0)=0$ and Eq.~\eqref{eq: NL formula for weak derivatives} imply the following inequalities for all $t\in[0,T]$
\begin{align*}
    \vert f(t) \vert & = \left\vert \int_0^t f'(\tau) \,d\tau\right\vert \\
    & \le \int_0^T \left\vert f'(\tau) \right\vert \,d\tau \\
    & \le \sqrt{T}\cdot\left( \int_0^T \left\vert f'(\tau) \right\vert^2 \,d\tau \right)^{1/2} \\
    & = \sqrt{T}\cdot\Vert f\Vert_{\mathcal{H}}.
\end{align*}
Hence, $\sup_{t\in[0,T]}\vert f(t) \vert \le \sqrt{T}\cdot\Vert f\Vert_{\mathcal{H}}$. Again, Eq.~\eqref{eq: NL formula for weak derivatives} implies 
\begin{align*}
    \vert f(t) - f(s) \vert & = \left\vert \int_s^t f'(\tau) \,d\tau\right\vert \\
    & \le \sqrt{t-s}\cdot\left( \int_s^t \left\vert f'(\tau) \right\vert^2 \,d\tau \right)^{1/2} \\
    & \le \sqrt{t-s}\cdot\left( \int_0^T \left\vert f'(\tau) \right\vert^2 \,d\tau \right)^{1/2} \\
    & = \sqrt{t-s}\cdot \Vert f\Vert_{\mathcal{H}}.
\end{align*}
Therefore, we have
\begin{align*}
    \sup_{s,t\in[0,T] \text{ and } s\ne t} \left(\frac{\vert f(t) - f(s) \vert}{\sqrt{\vert t-s \vert}}\right) \le \Vert f\Vert_{\mathcal{H}},
\end{align*}
which implies
\begin{align*}
    \Vert f\Vert_{C^{0,\frac{1}{2}}([0,T])} = \sup_{t\in[0,T]}\vert f(t) \vert + \sup_{s,t\in[0,T] \text{ and } s\ne t} \left(\frac{\vert f(t) - f(s) \vert}{\sqrt{\vert t-s \vert}}\right) \le (1+\sqrt{T}) \cdot \Vert f\Vert_{\mathcal{H}}.
\end{align*}
The desired Eq.~\eqref{eq: Sobolev embedding from Morrey} follows (i.e., $\Tilde{C}_T=1+\sqrt{T}$).
\end{proof}

\subsection{Proof of Theorem \ref{thm: boundedness topological invariants theorem}}

We recall that the discontinuities of $t\mapsto \beta_k(K_t^\nu)$ and $t\mapsto\chi(K_t^\nu)=\chi_t^\nu(K)$ are the HCPs of $K$ in direction $\nu$ (see Section \ref{section: Homeomorphism Critical Points}).

\begin{proof}[Proof of Theorem \ref{thm: boundedness topological invariants theorem}]
    For any fixed $\nu\in\mathbb{S}^{d-1}$, the following inclusion is straightforward
\begin{align}\label{eq: PD inclusion}
\Big(\operatorname{Dgm}_k(K;\phi_{\nu})\cap (-\infty,t)\times(t,\infty)\Big) \subseteq \left\{\xi\in \operatorname{Dgm}_k(K;\phi_{\nu}) \, \vert \, \operatorname{pers}(\xi)>0\right\},
\end{align}
where the function $\phi_\nu$ is defined in Eq.~\eqref{Eq: Morse function 1}, and the definitions of $\operatorname{Dgm}_k(K;\phi_{\nu})$ and $\operatorname{pers}(\xi)$ are given in Appendix \ref{The Relationship between PHT and SECT}. Together with the k-triangle lemma \citep{edelsbrunner2000topological, cohen2007stability}, the inclusion in Eq.~\eqref{eq: PD inclusion} and Condition \ref{condition: the condition for defining S_{R,d}^M} imply
\begin{align}\label{eq: boundedness of Betti numbers}
    \begin{aligned}
        \beta_k(K_t^\nu) &=\# \Big(\operatorname{Dgm}_k(K;\phi_{\nu})\cap (-\infty,t)\times(t,\infty)\Big) \\
    & \le \# \{\xi\in \operatorname{Dgm}_k(K;\phi_{\nu}) \, \vert \, \operatorname{pers}(\xi)>0\} \\
    & \le\frac{M}{d},
    \end{aligned}
\end{align}
for all $k\in\{0,1,\cdots,d-1\}$ and all $t$ that are not HCPs in direction $\nu$, where the cardinality $\#\{\cdot\}$ counts the multiplicity of the multisets. Eq.~\eqref{Eq: first def of Euler characteristic curve} implies
\begin{align}\label{eq: Chi bounded by Betti}
    \begin{aligned}
    \left\vert\chi_{t}^\nu(K)\right\vert & = \left\vert\sum_{k=0}^{d-1} (-1)^{k}\cdot\beta_k(K_t^{\nu})\right\vert \\ 
    & \le d\cdot\sup_{k\in\{0,\cdots,d-1\}}\beta_k(K_t^{\nu}) \\ 
    & \le M,
    \end{aligned}
\end{align}
for all $t$ that are not HCPs in direction $\nu$. The right continuity of $t\mapsto\chi(K_t^\nu)$ stated in Lemma \ref{lemma: right continuity of the ECT}, together with Eq.~\eqref{eq: Chi bounded by Betti}, implies that $\left\vert\chi_{t}^\nu(K)\right\vert \le M$ holds for all $t\in[0,T]$. Then, we have
\begin{align*}
    \sup_{\nu\in\mathbb{S}^{d-1}}\left(\sup_{0\le t\le T}\left\vert\chi_{t}^\nu(K)\right\vert\right) \le M.
\end{align*}
The proof is complete. 
\end{proof}

\subsection{Proof of Lemma \ref{thm: Sobolev function paths; general, appendix}}

To prove Lemma \ref{thm: Sobolev function paths; general, appendix}, we need the following lemma as a preparation.
\begin{lemma}\label{lemma: weak derivative formula}
    For any $K\in\mathcal{S}_{R,d}^M$ and fixed $\nu\in\mathbb{S}^{d-1}$, the function $t\mapsto \int_0^t\chi_\tau^\nu(K) \,d\tau$ has its first-order weak derivative $t\mapsto\chi_t^\nu(K)$.
\end{lemma}
\begin{proof}
Because of 
\begin{align*}
    \left\{\int_0^t \chi_\tau^\nu(K) d\tau\right\}_{t\in[0,T]} &\in\{\mbox{all absolutely continuous functions on }[0,T]\}\\
    &=\{x\in L^1([0,T]): \mbox{the weak derivative $x'$ exists and }x'\in L^1([0,T])\}\\
    & \overset{\operatorname{def}}{=} W^{1,1}([0,T])
\end{align*}
(see the Remark 8 after Proposition 8.3 in \cite{brezis2011functional} for details), the weak derivative of $\{\int_0^t \chi_\tau^\nu(K) d\tau\}_{t\in[0,T]}$ exists. Lemma \ref{lemma: weak derivative formula} follows from Theorem 8.2 of \cite{brezis2011functional}.
\end{proof}
\noindent\textbf{Remark:} Using Lemma \ref{thm: tameness property}, one can verify that $\chi_t^\nu(K)$ is the classical derivative of $\int_0^t\chi_\tau^\nu(K) \,d\tau$ for all $t$ except for the finitely many HCPs of $K$ in direction $\nu$.

With Lemma \ref{lemma: weak derivative formula}, we prove Lemma \ref{thm: Sobolev function paths; general, appendix} as follows.

\begin{proof}[Proof of Lemma \ref{thm: Sobolev function paths; general, appendix}
.] 
For the simplicity, we denote
\begin{align*}
    F(t) \overset{\operatorname{def}}{=} \int_0^t\chi_\tau^\nu(K) d\tau - \frac{t}{T} \int_0^T \chi_\tau^\nu(K) d\tau, \ \ \mbox{ for }t\in[0,T].
\end{align*}
Theorem \ref{thm: boundedness topological invariants theorem} implies 
\begin{align*}
    \vert F(t)\vert \le \int_0^T \vert\chi_\tau^\nu(K) \vert d\tau + \frac{t}{T} \int_0^T \vert\chi_\tau^\nu(K)\vert d\tau \le 2TM,\ \ \mbox{ for }t\in[0,T].
\end{align*}
Hence, $F\in L^p([0,T])$ for $p\in[1,\infty)$. Lemma \ref{lemma: weak derivative formula} implies that the weak derivative of $F$ exists and is $F'(t)=\chi_t^\nu(K)-\frac{1}{T}\int_0^T \vert\chi_\tau^\nu(K)\vert d\tau$. We have the boundedness
\begin{align*}
    \vert F'(t)\vert \le \vert \chi_{t}^\nu(K) \vert + \frac{1}{T} \int_{0}^T \vert \chi_\tau^\nu(K)\vert d\tau \le 2M, \ \ \mbox{ for }t\in[0,T],
\end{align*}
which implies $F'\in L^p([0,T])$ for $p\in[1,\infty)$. Furthermore, $F(0)=F(T)=0$, together with the discussion above, implies $F\in W^{1,p}_0([0,T])$ for all $p\in[1,\infty)$ \citep[][Theorem 8.12]{brezis2011functional}. Theorem 8.8 and the Remark 8 after Proposition 8.3 in \cite{brezis2011functional} imply $W^{1,p}_0([0,T]) \subseteq \mathcal{B}$ for $p\in[1,\infty)$. The proof of Lemma \ref{thm: Sobolev function paths; general, appendix} is complete.   
\end{proof}

\subsection{Proof of Eq.~\eqref{Eq: lemma for the continuity inequality}}

This subsection gives the proof of the first half of Theorem \ref{lemma: The continuity lemma; Appendix}, i.e., Eq.~\eqref{Eq: lemma for the continuity inequality}. The following lemmas are prepared for the proof of Eq.~\eqref{Eq: lemma for the continuity inequality}.

\begin{lemma}\label{lemma: stability lemma 1}
Suppose $K\in\mathcal{S}_{R,d}^M$. We have the following estimate for all $t$ that are neither HCPs of $K$ in direction $\nu_1$ nor HCPs of $K$ in direction $\nu_2$.
\begin{align}\label{Eq: counting estimate}
\begin{aligned}
    & \Upsilon_k(t;\nu_1, \nu_2) \overset{\operatorname{def}}{=} \left\vert \beta_k(K_t^{\nu_1}) - \beta_k(K_t^{\nu_2}) \right\vert  \\ 
    & \le \#\left\{x\in \operatorname{Dgm}_k(K;\phi_{\nu_1}) \, \Big\vert \, x\ne\gamma^*(x) \mbox{ and } \underline{(x,\gamma^*(x))}\bigcap\partial\big((-\infty,t)\times(t,\infty)\big)\ne\emptyset\right\},
    \end{aligned}
\end{align}
where $\underline{(x,\gamma^*(x))}$ denotes the straight line segment connecting points $x$ and $\gamma^*(x)$ in $\mathbb{R}^2$, the map $\gamma^*$ is any optimal bijection such that 
\begin{align}\label{eq: optimal bijection condition}
    W_\infty \Big(\operatorname{Dgm}_k(K;\phi_{\nu_1}),\, \operatorname{Dgm}_k(K;\phi_{\nu_2}) \Big) = \sup \Big\{\Vert \xi - \gamma^*(\xi) \Vert_{l^\infty} \, \Big\vert \, \xi\in \operatorname{Dgm}_k(K;\phi_{\nu_1})\Big\}
\end{align}
(see Definition \ref{def: bottleneck distance}, and $\Vert\cdot\Vert_{l^\infty}$ is defined in Eq.~\eqref{eq: def of l infinity norm}), and the cardinality $\#$ counts the corresponding multiplicity.
\end{lemma}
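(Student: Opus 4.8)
The plan is to prove the bound by a direct counting argument built on the $k$-triangle lemma, which was already invoked in the proof of Theorem \ref{thm: boundedness topological invariants theorem}. Write $Q_t \overset{\operatorname{def}}{=} (-\infty,t)\times(t,\infty)$ for the open ``upper-left'' quadrant. Since $t$ is an HCP in neither direction, the $k$-triangle lemma gives $\beta_k(K_t^{\nu_i}) = \#\{\operatorname{Dgm}_k(K;\phi_{\nu_i})\cap Q_t\}$ for $i=1,2$, where the count is taken with multiplicity. Every diagonal point $(s,s)$ fails to lie in $Q_t$, and because $K\in\mathscr{S}_{R,d}^M$ carries only finitely many off-diagonal points of positive persistence in each direction, both counts are finite sums of indicators over the off-diagonal parts of the two diagrams.

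First I would rewrite the second Betti number using the optimal bijection. Since $\gamma^*$ maps $\operatorname{Dgm}_k(K;\phi_{\nu_1})$ bijectively onto $\operatorname{Dgm}_k(K;\phi_{\nu_2})$, reindexing the finitely supported sum yields
\begin{align*}
\beta_k(K_t^{\nu_2}) = \sum_{y\in\operatorname{Dgm}_k(K;\phi_{\nu_2})} \mathbf{1}[y\in Q_t] = \sum_{x\in\operatorname{Dgm}_k(K;\phi_{\nu_1})} \mathbf{1}[\gamma^*(x)\in Q_t].
\end{align*}
Subtracting this from the analogous expression for $\beta_k(K_t^{\nu_1})$ and applying the triangle inequality termwise gives
\begin{align*}
\Upsilon_k(t;\nu_1,\nu_2) \le \sum_{x\in\operatorname{Dgm}_k(K;\phi_{\nu_1})} \Big\vert \mathbf{1}[x\in Q_t]-\mathbf{1}[\gamma^*(x)\in Q_t]\Big\vert.
\end{align*}

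The summand is nonzero precisely when exactly one of the points $x,\gamma^*(x)$ lies in $Q_t$. In that situation $x\ne\gamma^*(x)$ automatically (equal points give equal indicators), and since one endpoint lies in the open set $Q_t$ while the other lies in its complement, connectedness of the segment $\underline{(x,\gamma^*(x))}$ forces it to meet the topological boundary $\partial\big((-\infty,t)\times(t,\infty)\big)$. Hence each $x$ contributing to the sum satisfies both membership conditions defining the right-hand side of Eq.~\eqref{Eq: counting estimate}, so the sum is bounded by the cardinality appearing there, which is the claimed estimate.

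The one point requiring care --- and the main technical obstacle --- is the bookkeeping for the diagonal $\mathfrak{D}$, which carries infinite multiplicity in both diagrams. I would emphasize that diagonal points never lie in the open quadrant $Q_t$, so they contribute $0$ to every indicator above; the reindexing by $\gamma^*$ is therefore legitimate because only the finitely many off-diagonal points of positive persistence produce nonzero terms. The inequality direction we need ($\le$) is moreover robust to how $\gamma^*$ pairs surplus diagonal points: any such pairing can only add nonnegative terms to the right-hand side count, so no optimality of the diagonal matching is needed beyond the mere existence of $\gamma^*$ guaranteed by Definition \ref{def: bottleneck distance} and Eq.~\eqref{eq: optimal bijection condition}.
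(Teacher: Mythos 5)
Your proof is correct and follows essentially the same route as the paper's: both rest on the $k$-triangle lemma identity $\beta_k(K_t^{\nu_i}) = \#\{\operatorname{Dgm}_k(K;\phi_{\nu_i}) \cap (-\infty,t)\times(t,\infty)\}$ and the observation that whenever $\gamma^*$ separates a pair across the open quadrant, the connecting segment must meet $\partial\big((-\infty,t)\times(t,\infty)\big)$. The only difference is presentational: the paper argues via WLOG plus a pigeonhole count (``otherwise $\gamma^*$ is not bijective''), while you organize the same count as a termwise triangle inequality on indicator sums, which has the added benefit of making the diagonal-multiplicity bookkeeping explicit.
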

\begin{remark} 
Because $(\mathscr{D}, W_\infty)$ is a geodesic space, the optimal bijection $\gamma^*$ does exist \citep[][Proposition 1 and its proof]{turner2013means}.
\end{remark}
\begin{proof}[Proof of Lemma \ref{lemma: stability lemma 1}]
Since $t$ is not an HCP, neither $\operatorname{Dgm}_k(K;\phi_{\nu_1})$ nor $\operatorname{Dgm}_k(K;\phi_{\nu_2})$ has a point on the boundary $\partial\big((-\infty,t)\times(t,\infty)\big)$. If $\beta_k(K_t^{\nu_1}) = \beta_k(K_t^{\nu_2})$, Eq.~\eqref{Eq: counting estimate} is true. Otherwise, without loss of generality, we assume $\beta_k(K_t^{\nu_1}) > \beta_k(K_t^{\nu_2})$. Notice
\begin{align*}
    \beta_k(K_t^{\nu_i})=\#\left\{\operatorname{Dgm}_k(K;\phi_{\nu_i})\bigcap (-\infty,t)\times(t,\infty) \right\},\ \ \mbox{ for }i\in\{1,2\}.
\end{align*}
Let $\gamma^*$ be any optimal bijection, then there should be at least $\beta_k(K_t^{\nu_1}) - \beta_k(K_t^{\nu_2})$ straight line segments $\underline{(x,\gamma^*(x))}$ crossing $\partial\big((-\infty,t)\times(t,\infty)\big)$; otherwise, $\gamma^*$ is not bijective. Hence, 
\begin{align*}
    &\beta_k(K_t^{\nu_1}) - \beta_k(K_t^{\nu_2}) \\ 
    & \le \#\left\{x\in \operatorname{Dgm}_k(K;\phi_{\nu_1})\,\Big\vert\, x\ne\gamma^*(x) \mbox{ and } \underline{(x,\gamma^*(x))}\bigcap\partial\big((-\infty,t)\times(t,\infty)\big)\ne\emptyset\right\},
\end{align*}
and Eq.~\eqref{Eq: counting estimate} follows.
\end{proof}

\begin{lemma}\label{lemma: stability lemma 2}
Suppose $K\in\mathcal{S}_{R,d}^M$. Except for finitely many $t$, we have 
\begin{align*}
    \Upsilon_k(t;\nu_1, \nu_2) \le \frac{2M}{d} \cdot \mathbbm{1}_{\mathcal{T}_k},\ \ \mbox{where}
\end{align*}
\begin{align*}
    \mathcal{T}_k \overset{\operatorname{def}}{=} \left\{t\in[0,T] \mbox{ not an HCP in direction }\nu_1\mbox{ or }\nu_2 \,\Big\vert\, \mbox{there exists } x\in \operatorname{Dgm}_k(K;\phi_{\nu_1}) \mbox{ such that } x\ne\gamma^*(x) \right.
    \\ \left. \mbox{ and } \underline{(x,\gamma^*(x))}\bigcap\partial\big((-\infty,t)\times(t,\infty)\big)\ne\emptyset\right\},
\end{align*}
and $\gamma^*: \operatorname{Dgm}_k(K;\phi_{\nu_1})\rightarrow \operatorname{Dgm}_k(K;\phi_{\nu_2})$ is any optimal bijection satisfying Eq.~\eqref{eq: optimal bijection condition}.
\end{lemma}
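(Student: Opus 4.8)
The plan is to build directly on Lemma~\ref{lemma: stability lemma 1}, which already supplies the inequality $\Upsilon_k(t;\nu_1,\nu_2)\le \#S_t$ for every $t$ that is not an HCP in direction $\nu_1$ or $\nu_2$, where
$$S_t \overset{\operatorname{def}}{=}\left\{x\in\operatorname{Dgm}_k(K;\phi_{\nu_1})\,\Big\vert\, x\ne\gamma^*(x)\ \text{and}\ \underline{(x,\gamma^*(x))}\cap\partial\big((-\infty,t)\times(t,\infty)\big)\ne\emptyset\right\}.$$
The set $\mathcal{T}_k$ is by construction exactly the set of non-HCP levels $t$ for which $S_t\ne\emptyset$. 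Hence for a non-HCP $t\notin\mathcal{T}_k$ we have $S_t=\emptyset$, so $\Upsilon_k(t;\nu_1,\nu_2)\le\#S_t=0$, which matches $\tfrac{2M}{d}\cdot\mathbf{1}_{\mathcal{T}_k}(t)=0$. Thus the only remaining task is to bound $\#S_t$ by $2M/d$ when $t\in\mathcal{T}_k$, after discarding the finitely many exceptional levels.

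For the cardinality bound, first fix the optimal bijection $\gamma^*$ so that it is the identity on all diagonal points of $\operatorname{Dgm}_k(K;\phi_{\nu_1})$ that it does not pair with an off-diagonal point of $\operatorname{Dgm}_k(K;\phi_{\nu_2})$; this normalization keeps $\gamma^*$ optimal because matching a diagonal point to itself contributes $0\le W_\infty$ to the supremum in Eq.~\eqref{eq: optimal bijection condition}. Now classify each $x\in S_t$. If both $x$ and $\gamma^*(x)$ lie on the diagonal, then the segment $\underline{(x,\gamma^*(x))}$ is contained in the diagonal, which meets $\partial\big((-\infty,t)\times(t,\infty)\big)$ only at the corner $(t,t)$; the normalization forces $x=\gamma^*(x)$ for such points, so they are excluded from $S_t$. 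Therefore every $x\in S_t$ has at least one of $x$, $\gamma^*(x)$ off the diagonal: either $x$ is an off-diagonal point of $\operatorname{Dgm}_k(K;\phi_{\nu_1})$, or $x$ is diagonal while $\gamma^*(x)$ is an off-diagonal point of $\operatorname{Dgm}_k(K;\phi_{\nu_2})$.

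The counting then follows from condition~\eqref{Eq: topological invariants boundedness condition}, which bounds the number of positive-persistence (equivalently, off-diagonal) points of $\operatorname{Dgm}_k(K;\phi_{\nu})$ by $M/d$ in every direction $\nu$; in particular $\operatorname{Dgm}_k(K;\phi_{\nu_1})$ and $\operatorname{Dgm}_k(K;\phi_{\nu_2})$ each carry at most $M/d$ off-diagonal points. Because $\gamma^*$ is a bijection, distinct $x\in S_t$ have distinct images $\gamma^*(x)$, so the first class contributes at most $M/d$ points (charged to the off-diagonal points of $\operatorname{Dgm}_k(K;\phi_{\nu_1})$) and the second contributes at most $M/d$ points (charged injectively to the off-diagonal points of $\operatorname{Dgm}_k(K;\phi_{\nu_2})$). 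Summing gives $\#S_t\le 2M/d$, hence $\Upsilon_k(t;\nu_1,\nu_2)\le 2M/d=\tfrac{2M}{d}\cdot\mathbf{1}_{\mathcal{T}_k}(t)$ for $t\in\mathcal{T}_k$. Finally, the clause ``except for finitely many $t$'' is accounted for by the HCPs: since $K$ is tame, it has finitely many HCPs in direction $\nu_1$ and finitely many in direction $\nu_2$, so the inequality holds for all but this finite set of levels.

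I expect the main obstacle to be the bookkeeping around the diagonal points of the persistence diagrams — specifically, justifying the normalization of $\gamma^*$ on the diagonal and verifying that diagonal-to-diagonal matchings cannot contribute to $S_t$, so that the clean injective charging to off-diagonal points (and hence the uniform bound $M/d$ from Eq.~\eqref{Eq: topological invariants boundedness condition}) goes through for the chosen optimal bijection.
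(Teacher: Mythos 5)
Your proposal reaches the right conclusion but by a genuinely different route from the paper, and the route you chose is where all the danger lies. The paper never bounds $\#S_t$ at all: for every $t$ that is not an HCP in either direction, Theorem \ref{thm: boundedness topological invariants theorem}(i) gives
\begin{align*}
\Upsilon_k(t;\nu_1,\nu_2)=\left\vert\beta_k(K_t^{\nu_1})-\beta_k(K_t^{\nu_2})\right\vert\le \beta_k(K_t^{\nu_1})+\beta_k(K_t^{\nu_2})\le \frac{2M}{d},
\end{align*}
a uniform bound that makes no reference to $\gamma^*$ whatsoever; Lemma \ref{lemma: stability lemma 1} is invoked only to show $\Upsilon_k(t;\nu_1,\nu_2)=0$ for non-HCP $t\notin\mathcal{T}_k$, and the finitely many exceptional levels are the HCPs. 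That is the entire proof.

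Your version instead makes the cardinality of $S_t$ carry the $2M/d$, and two steps need repair before it closes. First, the reduction ``the only remaining task is to bound $\#S_t$ by $2M/d$ when $t\in\mathcal{T}_k$'' is not valid for the \emph{given} bijection: the lemma quantifies over an arbitrary optimal $\gamma^*$, and such a $\gamma^*$ may match (even infinitely many, since diagonal points carry infinite multiplicity) diagonal points to distinct diagonal points whose segments pass through the corner $(t,t)$; for such a $\gamma^*$, $\#S_t$ can exceed $2M/d$ or be infinite, so $\#S_t$ is the wrong quantity to bound. Your normalization rescues the count, but it replaces $\gamma^*$ by a different bijection $\tilde\gamma$, hence replaces $S_t$ and $\mathcal{T}_k$ by different sets; to recover the lemma for the original $\gamma^*$ you still need the (easy, but absent) closing remark that $\Upsilon_k$ does not depend on the bijection, so the uniform bound $\Upsilon_k\le 2M/d$ obtained via $\tilde\gamma$ applies on $\mathcal{T}_k(\gamma^*)$, while Lemma \ref{lemma: stability lemma 1} applied to $\gamma^*$ itself handles non-HCP $t\notin\mathcal{T}_k(\gamma^*)$. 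Second, the normalization is asserted rather than proved: redirecting every diagonal-to-diagonal matching to the identity must be checked to preserve bijectivity of the multiset matching; this does work, precisely because each diagonal point has countably infinite multiplicity (removing the finitely many diagonal copies consumed by off-diagonal points on either side leaves the same multiset, on which the identity is available), and optimality is preserved since identity pairs contribute $0$. Granting these repairs, your charging argument — at most $M/d$ points of $S_t$ with $x$ off-diagonal, and at most $M/d$ more charged injectively through $\gamma^*$ to off-diagonal points of $\operatorname{Dgm}_k(K;\phi_{\nu_2})$, both via condition \eqref{Eq: topological invariants boundedness condition} — is correct. What the comparison buys: your route yields the stronger statement that the crossing set itself is small for a normalized matching (the same fact the paper implicitly leans on later when integrating over $\mathcal{T}_k$ in the proof of Theorem \ref{lemma: The continuity lemma}), but for the lemma as stated the paper's Betti-number bound sidesteps every matching subtlety, which is why its proof is two lines long.
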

\begin{proof}[Proof of Lemma \ref{lemma: stability lemma 2}]
Eq.~\eqref{eq: boundedness of Betti numbers} implies
\begin{align*}
    \Upsilon_k(t;\nu_1, \nu_2) = \left\vert \beta_k(K_t^{\nu_1}) - \beta_k(K_t^{\nu_2}) \right\vert \le 2M/d.
\end{align*}
Furthermore, the inequality in Eq.~\eqref{Eq: counting estimate} indicates that $\Upsilon_k(t;\nu_1, \nu_2)=0$ if $t\notin\mathcal{T}_k$, except for finitely many HCPs in directions $\nu_1$ and $\nu_2$. Then the desired estimate follows.
\end{proof}

\begin{proof}[Proof of Eq.~\eqref{Eq: lemma for the continuity inequality}.] 
Eq.~\eqref{Eq: first def of Euler characteristic curve} and Lemma \ref{lemma: stability lemma 2} imply the following for $p\in[1,\infty)$
\begin{align}\label{Eq: estimate in proof}
    \begin{aligned}
        & \int_0^T \left\vert\Big\{\chi_\tau^{\nu_1}(K)-\chi_\tau^{\nu_2}(K)\Big\} \right\vert^p d\tau \\
        & = \int_0^T \left\vert \sum_{k=0}^{d-1} (-1)^k\cdot\Big(\beta_k(K_\tau^{\nu_1}) - \beta_k(K_\tau^{\nu_2}) \Big) \right\vert^p d\tau \\
    & \le \int_0^T \left(\sum_{k=0}^{d-1}\Upsilon_k(\tau;\nu_1,\nu_2)\right)^p d\tau \\
    & \le d^{(p-1)} \cdot \sum_{k=0}^{d-1} \int_0^T \Big(\Upsilon_k(\tau;\nu_1,\nu_2)\Big)^p d\tau \\
    & \le \frac{(2M)^p}{d} \cdot \sum_{k=0}^{d-1} \int_{\mathcal{T}_k} d\tau \\
    & \le \frac{(2M)^p}{d} \cdot \sum_{k=0}^{d-1} \left(\sum_{\xi\in \operatorname{Dgm}_k(K;\phi_{\nu_1})} 2\cdot\Vert \xi-\gamma^*(\xi)\Vert_{l^\infty}\right),
    \end{aligned}
\end{align}
where the last inequality follows from the definition of $\mathcal{T}_k$. Since $\Vert \xi-\gamma^*(\xi)\Vert_{l^\infty}$ can be positive only if $\operatorname{pers}(\xi)>0$ or $\operatorname{pers}(\gamma^*(\xi))>0$, there are at most $N$ terms $\Vert \xi-\gamma^*(\xi)\Vert_{l^\infty}>0$, where the condition in Eq.~\eqref{Eq: topological invariants boundedness condition} implies
\begin{align*}
    N \overset{\operatorname{def}}{=} \sum_{i=1}^2 \#\{\xi\in \operatorname{Dgm}_k(K;\phi_{\nu_i}) \,\vert \, \operatorname{pers}(\xi)>0\} \le 2M/d.
\end{align*}
Therefore, the inequality in Eq.~\eqref{Eq: estimate in proof} implies
\begin{align*}
    \int_0^T \left\vert\Big\{\chi_\tau^{\nu_1}(K)-\chi_\tau^{\nu_2}(K)\Big\} \right\vert^p d\tau & \le \frac{2 \cdot (2M)^{(p+1)}}{d} \cdot \sup\Big\{ \Vert \xi-\gamma^*(\xi)\Vert_{l^\infty} \,\Big\vert\, \xi\in \operatorname{Dgm}_k(K;\phi_{\nu_1})\Big\} \\ 
    & = \frac{2 \cdot (2M)^{(p+1)}}{d} \cdot W_\infty \Big(\operatorname{Dgm}_k(K;\phi_{\nu_1}),\, \operatorname{Dgm}_k(K;\phi_{\nu_2}) \Big).
\end{align*}
Then, Theorem \ref{thm: bottleneck stability} implies 
\begin{align*}
    \int_0^T \left\vert\Big\{\chi_\tau^{\nu_1}(K)-\chi_\tau^{\nu_2}(K)\Big\} \right\vert^p d\tau \le \frac{2\cdot(2M)^{(p+1)}}{d} \cdot \sup_{x\in K} \vert x\cdot(\nu_1-\nu_2)\vert.
\end{align*}
Additionally, $\vert x\cdot(\nu_1-\nu_2)\vert\le \Vert x\Vert\cdot\Vert \nu_1-\nu_2\Vert$ and $K\subseteq B(0,R)$ provide
\begin{align}\label{Eq: Continuity inequality lemma}
    \int_0^T \left\vert\Big\{\chi_\tau^{\nu_1}(K)-\chi_\tau^{\nu_2}(K)\Big\} \right\vert^p d\tau \le \frac{2 \cdot R \cdot (2M)^{(p+1)}}{d} \cdot \Vert \nu_1-\nu_2\Vert.
\end{align}
Define the constant $C^*_{M,R,d}$ as follows
\begin{align}\label{eq: C^*_{M,R,d}}
    C^*_{M,R,d} \overset{\operatorname{def}}{=}  \sqrt{ \frac{16M^3R}{d} + \frac{32M^3R}{d} + \frac{64 M^4 R}{d^2} } .
\end{align}
(The constant $C^*_{M,R,d}$ defined in Eq.~\eqref{eq: C^*_{M,R,d}} will also be implemented in other proofs.)

Setting $p=2$, Eq.~\eqref{Eq: Continuity inequality lemma} implies the following
\begin{align*}
    \left( \int_0^T \left\vert\Big\{\chi_\tau^{\nu_1}(K)-\chi_\tau^{\nu_2}(K)\Big\} \right\vert^2 d\tau \right)^{1/2} & \le \sqrt{ \frac{16M^3R}{d} \cdot \Vert \nu_1-\nu_2\Vert } \\
    & \le C^*_{M,R,d} \cdot \sqrt{\Vert \nu_1-\nu_2\Vert},
\end{align*}
which is the inequality in Eq.~\eqref{Eq: lemma for the continuity inequality}. 
\end{proof}

\subsection{Proof of Theorem \ref{lemma: The continuity lemma}}

\begin{proof}[Proof of result (i), i.e., Eq.~\eqref{Eq: continuity inequality}.] 
The definition of $\operatorname{SECT}(K)$, together with Eq.~\eqref{Eq: Continuity inequality lemma}, implies
\begin{align*}
& \Big\Vert \operatorname{SECT}(K)(\nu_1) - \operatorname{SECT}(K)(\nu_2) \Big\Vert^2_{\mathcal{H}} \\
    & = \int_0^T \left\vert  \frac{d}{dt}\operatorname{SECT}(K)(\nu_1;t) - \frac{d}{dt}\operatorname{SECT}(K)(\nu_2;t) \right\vert^2 dt \\
    & = \int_0^T \left\vert  \Big(\chi_t^{\nu_1}(K)-\chi_t^{\nu_2}(K)\Big) - \frac{1}{T} \int_0^T \Big(\chi_\tau^{\nu_1}(K)-\chi_\tau^{\nu_2}(K)\Big) d\tau \right\vert^2 dt\\
    & \le \int_0^T \left( \Big\vert\chi_t^{\nu_1}(K)-\chi_t^{\nu_2}(K)\Big\vert + \frac{1}{T} \int_0^T \Big\vert\chi_\tau^{\nu_1}(K)-\chi_\tau^{\nu_2}(K)\Big\vert d\tau \right)^2 dt \\
    & \le 2\int_0^T  \Big\vert\chi_t^{\nu_1}(K)-\chi_t^{\nu_2}(K)\Big\vert^2 dt + \frac{2}{T}\left(\int_0^T \Big\vert\chi_\tau^{\nu_1}(K)-\chi_\tau^{\nu_2}(K)\Big\vert d\tau \right)^2 \\
    & \le \frac{32M^3R}{d} \cdot \Vert \nu_1 - \nu_2\Vert + \frac{64 M^4 R}{d^2} \cdot \Vert 
    \nu_1-\nu_2 \Vert^2,
\end{align*}
where the last inequality above comes from Eq.~\eqref{Eq: Continuity inequality lemma}. Then, we have
\begin{align*}
    & \Big\Vert \operatorname{SECT}(K)(\nu_1) - \operatorname{SECT}(K)(\nu_2) \Big\Vert_{\mathcal{H}} \\
    & \le \sqrt{ \frac{32M^3R}{d} \cdot \Vert \nu_1 - \nu_2\Vert + \frac{64 M^4 R}{d^2} \cdot \Vert 
    \nu_1-\nu_2 \Vert^2 } \\
    & \le C^*_{M,R,d} \cdot \sqrt{ \Vert \nu_1 - \nu_2\Vert + \Vert 
    \nu_1-\nu_2 \Vert^2 },
\end{align*}
where $C^*_{M,R,d}$ is defined in Eq.~\eqref{eq: C^*_{M,R,d}}. The proof of result (i), i.e., Eq.~\eqref{Eq: continuity inequality}, is complete. 
\end{proof}

\begin{proof}[Proof of result (ii).] 
The law of cosines and Taylor's expansion indicates
\begin{align*}
    \frac{\Vert \nu_1-\nu_2 \Vert}{d_{\mathbb{S}^{d-1}}(\nu_1, \nu_2)} & = \sqrt{ 2 \cdot \frac{1-\cos\left(d_{\mathbb{S}^{d-1}}(\nu_1, \nu_2)\right)}{\left\{d_{\mathbb{S}^{d-1}}(\nu_1, \nu_2)\right\}^2} }\\
    & = \sqrt{ 2 \cdot \left[ \sum_{n=1}^\infty \frac{(-1)^{n+1}}{(2n)!} \cdot \Big\{d_{\mathbb{S}^{d-1}}(\nu_1, \nu_2)\Big\}^{2n-2} \right] } = O(1).
\end{align*}
Then, result (ii) comes from the following
\begin{align}\label{eq: 1/2 Holder argument}
    \begin{aligned}
        & \frac{\Vert \operatorname{SECT}(K)(\nu_1)-\operatorname{SECT}(K)(\nu_2) \Vert_{\mathcal{H}}}{\sqrt{d_{\mathbb{S}^{d-1}}(\nu_1, \nu_2)}}\\ 
    & \le C^*_{M,R,d} \cdot \sqrt{ \frac{\Vert \nu_1 - \nu_2\Vert}{d_{\mathbb{S}^{d-1}}(\nu_1, \nu_2)} + \frac{\Vert 
    \nu_1-\nu_2 \Vert^2}{d_{\mathbb{S}^{d-1}}(\nu_1, \nu_2)} }=O(1).
    \end{aligned}
\end{align}
The proof of Theorem \ref{lemma: The continuity lemma} is complete.
\end{proof}

\subsection{Proof of Lemma \ref{lemma: uniform boundedness of the SECT}}

\begin{proof}[Proof of Lemma \ref{lemma: uniform boundedness of the SECT}.]
    Theorem \ref{thm: boundedness topological invariants theorem}, Lemma \ref{lemma: weak derivative formula}, and the definition of $\Vert \cdot\Vert_\mathcal{H}$ imply the following
\begin{align*}
\left\Vert \operatorname{SECT}(K) \right\Vert_{C(\mathbb{S}^{d-1};\mathcal{H})} &= \sup_{\nu\in\mathbb{S}^{d-1}} \left\{\Vert \operatorname{SECT}(K)(\nu)\Vert_{\mathcal{H}} \right\} \\
&= \sup_{\nu\in\mathbb{S}^{d-1}} \left(\int_0^T \left\vert \chi_t^\nu(K) - \frac{1}{T}\int_0^T \chi_\tau^\nu(K)d\tau \right\vert^2 dt\right)^{1/2} \\ 
&\le 2M\cdot\sqrt{T}.
\end{align*}
The proof of Lemma \ref{lemma: uniform boundedness of the SECT} is complete.
\end{proof}

\subsection{Proof of Eq.~\eqref{eq: bivariate Holder continuity}}

This subsection gives the proof of the second half of Theorem \ref{lemma: The continuity lemma; Appendix}, i.e., Eq.~\eqref{eq: bivariate Holder continuity}.

\begin{proof}[Proof of Eq.~\eqref{eq: bivariate Holder continuity}.]
    We consider the following inequality for all $\nu_1,\nu_2\in\mathbb{S}^{d-1}$ and $t_1, t_2\in[0,T]$
\begin{align}\label{eq: pm terms for Holder}
        \begin{aligned}
            &\vert \operatorname{SECT}(K)(\nu_1; t_1)-\operatorname{SECT}(K)(\nu_2; t_2)\vert\\
    &\le \vert \operatorname{SECT}(K)(\nu_1; t_1)-\operatorname{SECT}(K)(\nu_1; t_2)\vert\\
    & + \vert \operatorname{SECT}(K)(\nu_1; t_2)-\operatorname{SECT}(K)(\nu_2; t_2)\vert\\
    & \overset{\operatorname{def}}{=} I+II.
        \end{aligned}
\end{align}
From the definition of $\Vert\cdot\Vert_{C^{0,\frac{1}{2}}([0,T])}$ and Eq.~\eqref{eq: Sobolev embedding from Morrey}, we have
\begin{align*}
    & \sup_{t_1, t_2\in[0,T], \, t_1\ne t_2}\frac{\vert \operatorname{SECT}(K)(\nu_1; t_1)-\operatorname{SECT}(K)(\nu_1; t_2)\vert}{\vert t_1-t_2 \vert^{1/2}} \\ 
    & \le \Vert \operatorname{SECT}(K)(\nu_1)\Vert_{C^{0,\frac{1}{2}}([0,T])}\\
    & \le \Tilde{C}_T  \Vert \operatorname{SECT}(K)(\nu_1)\Vert_\mathcal{H}\\
    & \le \Tilde{C}_T  \Vert \operatorname{SECT}(K)\Vert_{C(\mathbb{S}^{d-1};\mathcal{H})},
\end{align*}
which implies the following for all $t_1, t_2\in[0,T]$ 
\begin{align*}
    I &\le \Tilde{C}_T  \Vert \operatorname{SECT}(K)\Vert_{C(\mathbb{S}^{d-1};\mathcal{H})} \cdot \vert t_1-t_2 \vert^{1/2} \\
    &\le \Tilde{C}_T \cdot 2M\sqrt{T} \cdot \sqrt{\vert t_1-t_2 \vert},
\end{align*}
where the second inequality follows from Lemma \ref{lemma: uniform boundedness of the SECT}.

Applying Eq.~\eqref{eq: Sobolev embedding from Morrey} again, we have 
\begin{align*}
    II & \le \Vert \operatorname{SECT}(K)(\nu_1)-\operatorname{SECT}(K)(\nu_2)\Vert_{\mathcal{B}} \\
    & \le \Tilde{C}_T \Vert \operatorname{SECT}(K)(\nu_1)-\operatorname{SECT}(K)(\nu_2)\Vert_{\mathcal{H}} \\
    & \le \Tilde{C}_T \cdot C^*_{M,R,d} \cdot \sqrt{ \Vert \nu_1 - \nu_2\Vert + \Vert 
    \nu_1-\nu_2 \Vert^2 },
\end{align*}
where the last inequality follows from Theorem \ref{lemma: The continuity lemma}(i). Then, the inequality in Eq.~\eqref{eq: bivariate Holder continuity} follows from Eq.~\eqref{eq: pm terms for Holder}.
\end{proof}

\subsection{Proof of Theorem \ref{thm: invertibility}}\label{section: proof of the injectivity of SECT}

Recall the following concepts discussed in Appendix \ref{section: Further Theorems}:
\begin{itemize}
    \item For any given o-minimal structure $\mathcal{S}$, any elements of $\mathcal{S}$ are called definable sets.
    \item Compact definable sets are called constructible sets. The collection of constructible subsets of $\mathbb{R}^d$ is denoted by $\operatorname{CS}(\mathbb{R}^d)$.
    \item If $\mathcal{S}$ satisfies Assumption \ref{Assumption: basic requirements for o-minimal structures of interest}, we have $\mathcal{S}_{R,d}^M \subseteq \operatorname{CS}(\mathbb{R}^d)$.
\end{itemize}
Then, Theorem \ref{thm: invertibility} follows directly from Corollary \ref{corollary: Corollary 1 of Ghrist et all.(2018)} (i.e., Corollary 1 of \cite{ghrist2018persistent}).

We have the following as a further explanation for Corollary \ref{corollary: Corollary 1 of Ghrist et all.(2018)}: Using a Morse theory-like result, \cite{ji2023euler} showed that the ECT and SECT determine each other; then, the injectivity of the ECT stated in Theorem \ref{thm: injectivity of the ECT} (i.e., Theorem 1 of \cite{ghrist2018persistent} or Theorem 3.5 of \cite{curry2022many}) implies the injectivity of the SECT.

\subsection{Proof of Theorem \ref{Thm: metric theorem for shapes}}\label{proof: measurability of SECT and PECT; appendix}

\begin{proof}[Proof of Theorem \ref{Thm: metric theorem for shapes}.]
The proof needs the concept of PECT defined in Eq.~\eqref{Eq: def of PECT}.

The compactness of $K$, together with $K \subseteq B(0,R)$, implies that $\chi(K_t^\nu)=0$ for all $t$ satisfying the following
\begin{align*}
    0\le t< \operatorname{dist}\left(K, \partial B(0,R) \right)=\inf\left\{\Vert x-y\Vert \,\Big\vert\, x\in K \text{ and }y\in \partial B(0,R) \right\} \overset{\operatorname{def}}{=} \sigma>0.
\end{align*}
Therefore, $\operatorname{PECT}(K)(\nu,t)= \int_0^t \chi_\tau^\nu(K) \,d\tau =0$ for all $0\le t<\sigma$, which implies
\begin{align*}
    \frac{d^+}{dt}\Big\vert_{t=0} \operatorname{SECT}(K)(\nu, t) = - \frac{1}{T} \operatorname{PECT}(K)(\nu, T),
\end{align*}
where $\frac{d^+}{dt}$ denotes the right derivative with respect to $t$. Hence, we have
\begin{align*}
    \operatorname{PECT}(K)(\nu, t) = \operatorname{SECT}(K)(\nu, t) - t\cdot \frac{d^+}{dt}\Big\vert_{t=0} \operatorname{SECT}(K)(\nu, t).
\end{align*}
That is, $\operatorname{PECT}(K)$ and $\operatorname{SECT}(K)$ determine each other. Then, Theorem \ref{thm: invertibility} implies that the $\operatorname{PECT}$ defined in Eq.~\eqref{Eq: def of PECT} is injective. (Alternatively, Lemma \ref{lemma: weak derivative formula}, together with Lemma \ref{lemma: right continuity of the ECT} and Theorem \ref{thm: injectivity of the ECT}, also implies that the $\operatorname{PECT}$ in Eq.~\eqref{Eq: def of PECT} is injective.)

The triangle inequalities and symmetry of $\rho$ follow from that of the metric of $C(\mathbb{S}^{d-1};\mathcal{H})$. Equation $\rho(K_1, K_2)=0$ indicates $\Vert \operatorname{PECT}(K_1)(\nu)-\operatorname{PECT}(K_2)(\nu)\Vert_{\mathcal{H}_{BM}}=0$ for all $\nu\in\mathbb{S}^{d-1}$. \cite{evans2010partial} (Theorem 5 of Chapter 5.6) implies $\Vert \operatorname{PECT}(K_1)(\nu)-\operatorname{PECT}(K_2)(\nu)\Vert_{\mathcal{B}}=0$ for all $\nu\in\mathbb{S}^{d-1}$. Then, we have $\int_0^t\chi_\tau^\nu(K_1) d\tau=\int_0^t\chi_\tau^\nu(K_2) d\tau$ for all $t\in[0,T]$ and $\nu\in\mathbb{S}^{d-1}$; hence, $\operatorname{SECT}(K_1)=\operatorname{SECT}(K_2)$. Then, Theorem \ref{thm: invertibility} implies $K_1=K_2$. Therefore, $\rho$ is a distance.

The proof of that $\mathscr{F}=\mathscr{B}(\rho)$ satisfies Assumption \ref{assumption: the measurability of ECC} is motivated by the following chain of maps for any fixed $\nu\in\mathbb{S}^{d-1}$ and $t\in[0,T]$.
\begin{align*}
    & \mathcal{S}_{R,d}^M \ \ \ \xrightarrow{\operatorname{PECT}}\ \ \  C(\mathbb{S}^{d-1};\mathcal{H}_{BM})\ \ \  \xrightarrow{\text{projection}}\ \ \  \mathcal{H}_{BM}\mbox{, which is embedded into }\mathcal{B} \ \ \  \xrightarrow{\text{projection}} \ \ \mathbb{R}, \\
    & K \mapsto \left\{\operatorname{PECT}(K)(\nu')\right\}_{\nu'\in\mathbb{S}^{d-1}} \mapsto \left\{\operatorname{PECT}(K)(\nu, t') \right\}_{t'\in[0,T]} \ \mapsto \operatorname{PECT}(K)(\nu, t)=\int_0^{t} \chi_{\tau}^\nu(K) d\tau,
\end{align*}
where all spaces above are metric spaces and equipped with their Borel algebras. We notice the following facts: 
\begin{itemize}
    \item the mapping $\operatorname{PECT}: \mathcal{S}_{R,d}^M \rightarrow C(\mathbb{S}^{d-1}; \mathcal{H}_{BM})$ is isometric;
    \item the projection $C(\mathbb{S}^{d-1};\mathcal{H}_{BM})\rightarrow\mathcal{H}_{BM}, \, \{F(\nu')\}_{\nu'\in\mathbb{S}^{d-1}}\mapsto F(\nu)$ is continuous for each fixed direction $\nu$;
    \item applying \cite{evans2010partial} (Theorem 5 of Chapter 5.6) again, the embedding $\mathcal{H}_{BM}\rightarrow\mathcal{B}, \, F(\nu) \mapsto F(\nu)$ is continuous;
    \item projection $\mathcal{B}\rightarrow \mathbb{R},\ \{x(t')\}_{t'\in[0,T]}\mapsto x(t)$ is continuous.
\end{itemize}
Therefore, $\mathcal{S}_{R,d}^M\rightarrow\mathbb{R},\ \ K\mapsto \operatorname{PECT}(K)(\nu,t)$ is continuous, hence, measurable. 

For any $K\in\mathcal{S}_{R,d}^M$, Lemmas \ref{thm: tameness property} and \ref{lemma: right continuity of the ECT} imply the following for all $t$ and $\nu$
\begin{align*}
    \chi^\nu_{t}(K)=\lim_{n\rightarrow\infty}\left[\frac{1}{\delta_n}\left
\{\operatorname{PECT}(K)(\nu, t+\delta_n)-\operatorname{PECT}(K)(\nu, t)\right\}\right],
\end{align*}
where $\lim_{n\rightarrow\infty}\delta_n=0$ and $\delta_n>0$. The measurability of $\operatorname{PECT}(\nu, t+\delta_n)$ and $\operatorname{PECT}(\nu, t)$ implies that $\chi_t^\nu: \mathcal{S}_{R,d}^M\rightarrow\mathbb{R}, K\mapsto \chi_t^\nu(K)$ is measurable, for any fixed $\nu$ and $t$. 

The proof of Theorem \ref{Thm: metric theorem for shapes} is complete. 
\end{proof}

\subsection{Proof of Lemma \ref{assumption: existence of second moments}}
\begin{proof}[Proof of Lemma \ref{assumption: existence of second moments}.]
    Lemma \ref{lemma: uniform boundedness of the SECT} implies the following for all $K\in\mathcal{S}_{R,d}^M$ 
    \begin{align*}
        \sup_{\nu\in\mathbb{S}^{d-1}} \Vert \operatorname{SECT}(K)(\nu)\Vert_{\mathcal{H}}^2 \le 4M^2T.
    \end{align*}
    Then, we have
    \begin{align*}
        \int_{\mathcal{S}_{R,d}^M} \left\{ \sup_{\nu\in\mathbb{S}^{d-1}} \Vert \operatorname{SECT}(K)(\nu)\Vert_{\mathcal{H}}^2 \right\} \,\mathbb{P}(dK) \le 4M^2T <\infty.
    \end{align*}
    The proof of Lemma \ref{assumption: existence of second moments} is complete.
\end{proof}

\subsection{Proof of Lemma \ref{thm: mean is in H}}

The proof of Lemma \ref{thm: mean is in H} is divided into five small proofs.

\begin{proof}[Proof of result (i).]
    For each fixed direction $\nu\in\mathbb{S}^{d-1}$, Theorem \ref{thm: SECT distribution theorem in each direction} indicates that the mapping $\operatorname{SECT}(\nu): K\mapsto \operatorname{SECT}(K)(\nu)$ is an $\mathcal{H}$-valued measurable function defined on the probability space $(\mathcal{S}_{R,d}^M, \mathscr{F}, \mathbb{P})$. We first show the Bochner $\mathbb{P}$-integrability of $\operatorname{SECT}(\nu)$ (see Section 5 in Chapter V of \cite{yosida1965functional} for the definition of Bochner $\mathbb{P}$-integrability), and the Bochner integral of $\operatorname{SECT}(\nu)$ will be fundamental to our proof. Lemma 1.3 of \cite{da2014stochastic} indicates that $\operatorname{SECT}(\nu)$ is strongly $\mathscr{F}$-measurable (see Section 4 in Chapter V of \cite{yosida1965functional} for the definition of strong $\mathscr{F}$-measurability). Then, Lemma \ref{assumption: existence of second moments} indicates that the Bochner integral
\begin{align*}
    m^*_\nu \overset{\operatorname{def}}{=} \int_{\mathcal{S}_{R,d}^M} \operatorname{SECT}(K)(\nu) \,\mathbb{P}(dK)
\end{align*}
is Bochner $\mathbb{P}$-integrable and $m^*_\nu \in\mathcal{H}$ \citep[][Section 5 of Chapter V, Theorem 1]{yosida1965functional}. The Corollary 2 in Section 5 of Chapter V of \cite{yosida1965functional}, together with that $\mathcal{H}$ is the RKHS generated by the kernel $\kappa(s,t)=\min\{s,t\}-\frac{st}{T}$ \citep[][Example 4.9]{lifshits2012lectures}, implies
\begin{align*}
    m^*_\nu(t) &= \langle \kappa(t,\cdot),\, m^*_\nu \rangle \\
    & = \int_{\mathcal{S}_{R,d}^M} \Big\langle \kappa(t,\cdot), \, \operatorname{SECT}(K)(\nu) \Big\rangle \,\mathbb{P}(dK) \\
    & =\int_{\mathcal{S}_{R,d}^M} \operatorname{SECT}(K)(\nu, t) \,\mathbb{P}(dK) \\
    & = \mathbb{E}\left\{ \operatorname{SECT}(\nu, t)\right\} =m_\nu(t),\ \ \mbox{ for all }t\in[0,T],
\end{align*}
where $\langle\cdot, \cdot\rangle$ denotes the inner product of $\mathcal{H}$. Therefore, $m_\nu=m^*_\nu\in\mathcal{H}$. The proof of result (i) is complete.
\end{proof}

\begin{proof}[Proof of result (ii).]
    To prove result (ii), we first show the product measurability of the following map for each fixed direction $\nu\in\mathbb{S}^{d-1}$
\begin{align}\label{eq: product measurability}
    \begin{aligned}
        & \Big(\mathcal{S}_{R,d}^M \times [0,T], \mathscr{F}\otimes \mathscr{B}([0,T])\Big)\ \  \rightarrow \ \ (\mathbb{R}, \mathscr{B}(\mathbb{R})),\\
    & (K,t) \ \ \mapsto \ \ \operatorname{SECT}(K)(\nu, t),
    \end{aligned}
\end{align}
where $\mathscr{F}\otimes \mathscr{B}([0,T])$ denotes the product $\sigma$-algebra generated by $\mathscr{F}$ and $\mathscr{B}([0,T])$. Define the filtration $\{\mathscr{F}_t\}_{t\in[0,T]}$ by $\mathscr{F}_t \overset{\operatorname{def}}{=}\sigma(\{\operatorname{SECT}(\nu, t')\,\vert\, t'\in[0,t]\}) \subseteq \mathscr{F}$ for $t\in[0,T]$. Because the sample paths of $\operatorname{SECT}(\nu)$ are in $\mathcal{H}$, these sample paths are continuous (see the Sobolev embedding in Eq.~\eqref{eq: H, Holder, B embeddings}. Proposition 1.13 of \cite{karatzas2012brownian} implies that the stochastic process $\operatorname{SECT}(\nu)$ is progressively measurable with respect to the filtration $\{\mathscr{F}_t\}_{t\in[0,T]}$. Then, the mapping in Eq.~\eqref{eq: product measurability} is measurable with respect to the product $\sigma$-algebra $\mathscr{F}\otimes \mathscr{B}([0,T])$ \citep[][Definitions 1.6 and 1.11, also the paragraph right after Definition 1.11 therein]{karatzas2012brownian}. Lemma \ref{assumption: existence of second moments} implies
\begin{align*}
    \int_0^T \int_{\mathcal{S}_{R,d}^M} \vert \operatorname{SECT}(K)(\nu, t)\vert^2 \,\mathbb{P}(dK) dt \le T \cdot \Tilde{C}^2_T \cdot \mathbb{E}\Vert \operatorname{SECT}(\nu)\Vert^2_{\mathcal{H}}<\infty,
\end{align*}
where the double integral is well-defined because of the product measurability of the mapping in Eq.~\eqref{eq: product measurability} and the Fubini's theorem. Then, the proof of result (ii) is complete. 
\end{proof}

\begin{proof}[Proof of result (iii).]
    Eq.~\eqref{eq: bivariate Holder continuity} implies
    \begin{align*}
        \mathbb{E}\left\vert  \operatorname{SECT}(\nu,t+\epsilon) - \operatorname{SECT}(\nu,t) \right\vert^2 \le \Tilde{C}_T^2\cdot 4M^2T\cdot\vert\epsilon\vert \rightarrow 0,
    \end{align*}
    as $\epsilon\rightarrow0$. The proof is complete.
\end{proof}

\begin{proof}[Proof of result (iv).]
    Result (iv) follows from Lemma 4.2 of \cite{alexanderian2015brief}.
\end{proof}

\begin{proof}[Proof of result (v).]
    For any $\nu_1,\nu_2\in\mathbb{S}^{d-1}$, the proof of result (i) implies the following Bochner integral representation
\begin{align*}
    \Vert m_{\nu_1} - m_{\nu_2} \Vert_{\mathcal{H}} & = \left\Vert \int_{\mathcal{S}_{R,d}^M} \operatorname{SECT}(K)(\nu_1) - \operatorname{SECT}(K)(\nu_2) \,\mathbb{P}(dK) \right\Vert_{\mathcal{H}} \\
    & \overset{(1)}{\le} \int_{\mathcal{S}_{R,d}^M} \Big\Vert \operatorname{SECT}(K)(\nu_1) - \operatorname{SECT}(K)(\nu_2) \Big\Vert_{\mathcal{H}} \mathbb{P}(dK) \\ 
    & \overset{(2)}{\le} C^*_{M,R,d} \cdot \sqrt{ \Vert \nu_1 - \nu_2\Vert + \Vert 
    \nu_1-\nu_2 \Vert^2 },
\end{align*}
where the inequality (1) follows from the Corollary 1 in Section 5 of Chapter V of \cite{yosida1965functional}, and the inequality (2) follows from Eq.~\eqref{Eq: continuity inequality}. With the argument in Eq.~\eqref{eq: 1/2 Holder argument}, the proof of result (v) is complete.
\end{proof}

\subsection{Proof of Theorem \ref{thm: KL expansions of SECT}}\label{section: proof of KL expansion; Appendix}

\begin{proof}[Proof of Theorem \ref{thm: KL expansions of SECT}]
    Lemma \ref{thm: mean is in H} implies that, for each $j\in\{1,2\}$, the stochastic process $\{\operatorname{SECT}(\nu^*;t)-m_{\nu^*}^{(j)}(t)\}_{t\in[0,T]}$ is of mean zero, mean-square continuous, and belongs to $L^2(\mathcal{S}_{R,d}^M\times[0,T],\, \mathbb{P}^{(j)}(dK)\otimes dt)$. Then, result (i) follows from Theorem 7.3.5 of \cite{hsing2015theoretical} (equivalently, Corollary 5.5 of \cite{alexanderian2015brief}). 

To prove result (ii), we denote the following
\begin{align*}
    &\ \ \ \ \ D_L(K^{(1)},K^{(2)}; \,t) \\
    & \overset{\operatorname{def}}{=}\left\{ \operatorname{SECT}(K^{(1)})(\nu^*;t) - \operatorname{SECT}(K^{(2)})(\nu^*;t) \right\} \\ 
    &\ \ \ \ \ - \left[ \left\{m^{(1)}_{\nu^*}(t) + \sum_{l'=1}^L \sqrt{\lambda_{l'}} \cdot Z_{l'}^{(1)}(K^{(1)}) \cdot \phi_{l'}(t)\right\} -\left\{m^{(2)}_{\nu^*}(t) + \sum_{l'=1}^L \sqrt{\lambda_{l'}} \cdot Z_{l'}^{(2)}(K^{(2)}) \cdot \phi_{l'}(t)\right\} \right]
\end{align*}
Then, result (i) implies the following
\begin{align*}
    &\ \ \ \lim_{L\rightarrow\infty} \left\{\sup_{t\in[0,T]} \left\Vert D_L(\cdot, \cdot; \, t)\right\Vert^2_{L^2}\right\} \\
    & =\lim_{L\rightarrow\infty}\left\{\sup_{t\in[0,T]} \int_{\mathcal{S}_{R,d}^M \times \mathcal{S}_{R,d}^M}  \left\vert D(K^{(1)},K^{(2)}; \, t) \right\vert^2 \mathbb{P}^{(1)}\otimes\mathbb{P}^{(2)}(dK^{(1)}, dK^{(2)})\right\} = 0,
\end{align*}
where $L^2$ is the abbreviation for $L^2(\mathcal{S}_{R,d}^M\times \mathcal{S}_{R,d}^M, \mathscr{F}\otimes\mathscr{F}, \mathbb{P}^{(1)}\otimes\mathbb{P}^{(2)})$. For each fixed $l=1,2,\ldots$, we have
\begin{align}\label{eq: zero L2 norm}
    \begin{aligned}
        & \ \ \ \ \left\Vert \frac{1}{\sqrt{2\lambda_l}}\int_0^T D_L(\cdot,\cdot; \, t)\phi_l(t) dt\right\Vert_{L^2} \\
        & \le \frac{1}{\sqrt{2\lambda_l}}\int_0^T \left\Vert D_L(\cdot,\cdot; \, t) \right\Vert_{L^2} \vert\phi_l(t)\vert dt \\
    &\le \sup_{t\in[0,T]}\left\Vert D_L(\cdot,\cdot; \, t) \right\Vert_{L^2} \cdot\frac{1}{\sqrt{2\lambda_l}}\int_0^T \vert\phi_l(\tau)\vert d\tau \rightarrow0, \ \ \mbox{ as }L\rightarrow\infty.
    \end{aligned}
\end{align}
In addition, for each fixed $l=1,2,\ldots$ and $L>l$, we have
\begin{align*}
    & \ \ \ \ \frac{1}{\sqrt{2\lambda_l}}\int_0^T D_L(K^{(1)}, K^{(2)};t)\phi_l(t) dt\\ 
    &= \frac{1}{\sqrt{2\lambda_l}}\int_0^T \left\{ \operatorname{SECT}(K^{(1)})(\nu^*;t) - \operatorname{SECT}(K^{(2)})(\nu^*;t) \right\}\phi_l(t)dt \\
    &\ \ -\frac{1}{\sqrt{2\lambda_l}}\int_0^T \left\{m^{(1)}_{\nu^*}(t)-m^{(2)}_{\nu^*}(t)\right\}\phi_l(t)dt \\
    &\ \ -\frac{1}{\sqrt{2\lambda_l}}\int_0^T \left[ \sum_{l'=1}^L \sqrt{\lambda_{l'}} \cdot \left\{ Z_{l'}^{(1)}(K^{(1)})-Z_{l'}^{(2)}(K^{(2)}) \right\} \cdot \phi_{l'}(t) \right]\cdot\phi_l(t) dt \\
    &=\delta_l\left(K^{(1)}, \, K^{(2)}\right) - \left[ \theta_l + \left( \frac{Z_{l}^{(1)}(K^{(1)})-Z_{l}^{(2)}(K^{(2)})}{\sqrt{2}} \right) \right],
\end{align*}
where $\delta_l\left(K^{(1)}, \, K^{(2)}\right)$ is defined in Eq.~\eqref{eq: KL expansions of SECT}, and the last equality follows from the $L^2([0,T])$-orthonormality of $\{\phi_l\}_{l=1}^\infty$. The limit in Eq.~\eqref{eq: zero L2 norm} imply
\begin{align*}
    & \ \ \ \ \int_{\mathcal{S}_{R,d}^M \times \mathcal{S}_{R,d}^M} \left\vert \, \delta_l\left(K^{(1)}, \, K^{(2)}\right) - \left[ \theta_l + \left( \frac{Z_{l}^{(1)}(K^{(1)})-Z_{l}^{(2)}(K^{(2)})}{\sqrt{2}} \right) \right] \right\vert^2 \mathbb{P}^{(1)}\otimes\mathbb{P}^{(2)}(dK^{(1)}, dK^{(2)}) \\
    & = \lim_{L\rightarrow\infty} \left\Vert \frac{1}{\sqrt{2\lambda_l}}\int_0^T D_L(\cdot,\cdot; \, t)\phi_l(t) dt\right\Vert_{L^2} \\
    &=0.
\end{align*}
Then, there exists $\mathcal{N}_l\in\mathscr{F}\otimes\mathscr{F}$, which depends on $l$, such that $\mathbb{P}^{(1)}\otimes\mathbb{P}^{(2)}(\mathcal{N}_l)=0$ and
\begin{align}\label{eq: ultimate goal of the KL expansion}
    \delta_l\left(K^{(1)}, \, K^{(2)}\right) = \theta_l + \left( \frac{Z_{l}^{(1)}(K^{(1)})-Z_{l}^{(2)}(K^{(2)})}{\sqrt{2}} \right),
\end{align}
for any $(K^{(1)}, \, K^{(2)}) \notin \mathcal{N}_l$. Define $\mathcal{N}\overset{\operatorname{def}}{=}\bigcup_{l=1}^\infty \mathcal{N}_l$, we have $\mathbb{P}^{(1)}\otimes\mathbb{P}^{(2)}(\mathcal{N})=0$ and Eq.~\eqref{eq: ultimate goal of the KL expansion} holds for all $(K^{(1)}, \, K^{(2)}) \notin \mathcal{N}$ and $l=1,2,\ldots$. The proof of result (ii) is complete. 
\end{proof}

\subsection{Proof of Lemma \ref{lemma: representing H0}}

\begin{proof}[Proof of Lemma \ref{lemma: representing H0}.]
We have shown that the null $H_0$ is equivalent to $m_{\nu^*}^{(1)}(t)=m_{\nu^*}^{(2)}(t)$ for all $t\in[0,T]$, where $\nu^*$ is defined in Eq.~\eqref{eq: def of distinguishing direction}. The null $H_0$ directly implies that $\theta_l=0$ for all $l$. On the other hand, if $\theta_l=0$ for all $l$, that $\{\phi_l\}_l$ is an orthonormal basis of $L^2([0,T])$ indicates that $m_{\nu^*}^{(1)}=m_{\nu^*}^{(2)}$ almost everywhere with respect to the Lebesgue measure $dt$. Part (i) of Lemma \ref{thm: mean is in H} and the embedding $\mathcal{H}\subseteq\mathcal{B}$ in Eq.~\eqref{eq: H, Holder, B embeddings} imply that $m_{\nu^*}^{(1)}$ and $m_{\nu^*}^{(2)}$ are continuous functions. As a result, $m_{\nu^*}^{(1)}(t)=m_{\nu^*}^{(2)}(t)$ for all $t\in[0,T]$. The proof is complete.
\end{proof}

\subsection{Proof of Theorem \ref{eq: the SECT is non-Gaussian}}

\begin{proof}[Proof of Theorem \ref{eq: the SECT is non-Gaussian}] 
    The independence condition implies that the stochastic process $\operatorname{PECT}(\nu)=\{\operatorname{PECT}(\nu,t)=\int_0^t \chi_\tau^\nu \,d\tau\}_{t\in[0,T]}$ has independent increments. The continuity of the sample paths of $\operatorname{PECT}(\nu)$ and that $\operatorname{PECT}(\nu,0)=\int_0^0 \chi_\tau^\nu \,d\tau = 0$ imply $\operatorname{PECT}(\nu)$ is a Gaussian process \citep[][Theorem 14.4]{kallenberg2021foundations}.
\end{proof}

\section{Potential Future Research Areas}\label{section: potential future research areas}

In this section, we list several potential future research areas that we believe are related to our work.

\subsection{Generative Models for Complex Shapes}\label{appendix: Generative Models for Complex Shapes}

Theorem \ref{thm: KL expansions of SECT} also holds for any fixed direction $\nu$; that is, the distinguishing direction $\nu^*$ in Theorem \ref{thm: KL expansions of SECT} can be replaced with any fixed $\nu\in\mathbb{S}^{d-1}$ (see the corresponding proof in Appendix \ref{section: proof of KL expansion; Appendix}). The first result of Theorem \ref{thm: KL expansions of SECT} can be formally represented as follows 
\begin{align}\label{eq: formal KL expansion}
    \operatorname{SECT}(K)(\nu,t) = m^{(j)}_{\nu}(t) + \sum_{l=1}^\infty \sqrt{\lambda_l} \cdot Z_{l}^{(j)}(K) \cdot \phi_l(t).
\end{align}
Using Eq.~\eqref{eq: formal KL expansion}, the random sampling of shapes may be considered. This involves sampling the stochastic process on the right-hand side of Eq.~\eqref{eq: formal KL expansion} and reconstructing a shape by applying the inverse of the injective map $
\operatorname{SECT}:\mathcal{S}_{R,d}^M\rightarrow C(\mathbb{S}^{d-1};\mathcal{H})$. Still, several challenges arise: 
\begin{enumerate}
    \item the map \( \operatorname{SECT} \) is not surjective, and the characterization of the image \( \operatorname{SECT}(\mathcal{S}_{R,d}^M) \) remains to be developed;

    \item one must properly select the covariance function \( \Xi_\nu \) and the distribution of \( \{Z_l^{(j)}\}_{l=1}^\infty \) to ensure that the sample paths of $ m^{(j)}_{\nu}(t) + \sum_{l=1}^\infty \sqrt{\lambda_l} \cdot Z_{l}^{(j)} \cdot \phi_l(t)$ belong to \( \operatorname{SECT}(\mathcal{S}_{R,d}^M) \);

    \item reconstructing shapes \( K \) from \( \operatorname{SECT}(K) \), as discussed in \cite{fasy2018challenges}, is still an open question in the field. The random sampling of shapes using Eq.~\eqref{eq: formal KL expansion} is left for future research. 
\end{enumerate}

\subsection{Definition of Mean Shapes}

The existence and uniqueness of the mean shapes $K_\oplus$ in the Fréchet sense as defined in Eq.~\eqref{Frechet mean shape} are still unknown. If mean shapes $K_\oplus$ do exist, the relationship between $\operatorname{SECT}(K_\oplus)$ and $\mathbb{E}\{\operatorname{SECT}\}$ is of particular interest. In addition, the relationship of $\mathbb{E}\{\operatorname{SECT}\}$ to the theory for ``expectations of random sets" \citep{molchanov2005theory} is of interest and left for future research. 

The Fréchet mean in Eq.~\eqref{Frechet mean shape} may be extended to the conditional Fréchet mean and implemented in the Fréchet regression --- predicting shapes $K$ using multiple scalar predictors \citep{petersen2019frechet}. For example, predicting molecular shapes and structures from scalar-valued indicators or sequences has become of high interest in biology \citep{jumper2021highly, yang2020improved}. As an example of the other way around, predicting clinical outcomes from the tumors is also of interest \citep{moon2020predicting, somasundaram2021persistent, vipond2021multiparameter}, which is potentially relevant to the Wasserstein regression \citep{chen2021wasserstein}.

%\subsection{Generative Models for Shapes}

%\textcolor{blue}{The foundations for the randomness of shapes and the distributions of the SECT allow for the generative modeling of shapes. Suppose we are interested in a collection of shapes $\{K_i\}_{i=1}^n\subseteq\mathcal{S}_{R,d}^M$ (e.g., mandibular molars or tumors), and the underlying distribution of $\{\operatorname{SECT}(K_i)\}_{i=1}^n$ is $\mathbf{P}$, where $\mathbf{P}$ is a probability measure on the separable Banach space $C(\mathbb{S}^{d-1};\mathcal{H})$. Suppose $\mathbf{P}$ is known or accurately estimated. We can then simulate $\{\operatorname{SECT}_{i'}\}_{i'=1}^{n'} \overset{\operatorname{i.i.d.}}{\sim} \mathbf{P}$, which can potentially be conducted via the Karhunen–Loève expansion (see Theorem \ref{thm: KL expansions of SECT}). Since the shape-to-SECT map $K\mapsto \operatorname{SECT}(K)$ is invertible (see Theorem \ref{thm: invertibility}), for each $i'\in\{1,\cdots,n'\}$, there exists a unique shape $K_{i'}$ whose SECT is $\operatorname{SECT}_{i'}$. Then the simulated $\{K_{i'}\}_{i'=1}^{n'}$ and the shapes of interest share the same distribution $\mathbf{P}$. The estimation of the distribution $\mathbf{P}$ from $\{K_i\}_{i=1}^n$ is left for future research. The shape reconstruction step (i.e., constructing an approximate inverse map $\operatorname{SECT}_i\mapsto K_i$) is outside the scope of the paper. Challenges in reconstructing shapes from ECCs were discussed by \cite{fasy2018challenges}.}

\subsection{Two-sample Test via the Reproducing Kernel Hilbert Space embedding or Optimal Transport}

From the viewpoint of statistical inference, the ultimate goal of this paper is to solve the following two-sample test problem (also see Eq.~\eqref{eq: original hypotheses}):
\begin{align}\label{eq: oringial two-sample test problem; Appendix}
    H_0^*:\ \ \mathbb{P}^{(1)} = \mathbb{P}^{(2)},\ \ \ vs. \ \ \ H_1^*: \ \ \mathbb{P}^{(1)} \ne \mathbb{P}^{(2)}, 
\end{align} 
where the observed shapes $\{K_i^{(j)}\}_{i=1}^n\overset{\operatorname{i.i.d.}}{\sim}\mathbb{P}^{(j)}$, for $j\in\{1,2\}$. Through the distinguishing direction $\nu^*$ defined in Eq.~\eqref{eq: def of distinguishing direction}, the two-sample test problem in Eq.~\eqref{eq: oringial two-sample test problem; Appendix} can be transformed into testing the equality of probability measures defined on the RKHS $\mathcal{H}$, i.e.,
\begin{align}\label{eq: pushed forward two-sample test problem; Appendix}
    H_0:\ \ \mathbf{P}^{(1)} = \mathbf{P}^{(2)},\ \ vs. \ \ \  \mathbf{P}^{(1)} \ne \mathbf{P}^{(2)},
\end{align}
where $\mathbf{P}^{(j)}(B)\overset{\operatorname{def}}{=}\mathbb{P}^{(j)}\left\{K\in\mathcal{S}_{R,d}^M \,\vert\, \operatorname{SECT}(K)(\nu^*)\in B\right\}$ for all $B\in\mathscr{B}(\mathcal{H})$ and $j\in\{1,2\}$.

In the literature, numerous powerful frameworks for two-sample test problems have been developed in the past decade. One notable framework is the ``kernel two-sample test" \citep{gretton2006kernel, gretton2012kernel, hagrass2023spectral}, which is based on the concept of ``RKHS embedding of probability measures" \citep{smola2007hilbert, fukumizu2009kernel, sriperumbudur2010hilbert, sriperumbudur2011universality, muandet2017kernel}. Another framework is the rank-based distribution-free test framework proposed by \cite{deb2023multivariate}, rooted in the theory of optimal transport \citep{villani2009optimal}. 

With appropriate adjustments, it is possible to adapt the existing two-sample test frameworks from the literature to address the specific two-sample test problems outlined in Eq.~\eqref{eq: oringial two-sample test problem; Appendix} and Eq.~\eqref{eq: pushed forward two-sample test problem; Appendix}. The exploration of this avenue is left for our future research.

\subsection{Euler Characteristic-based Statistical Inference on Grayscale Images: Theory and Applications}

Each shape $K$ can be viewed as a binary-valued image, e.g., the points inside $K$ are assigned to be 1 while the points outside $K$ are assigned to be 0. That is, the shape can be equivalently represented as the indicator function $\mathbbm{1}_K$ of the shape $K$. Many images in applications are real-valued instead of binary-valued, e.g., the computed tomography (CT) scans of lung cancer tumors \citep{maldonado2021validation}. The real-valued images are referred to as grayscale images.

Over the past several years, some Euler characteristic-based representations of grayscale images have been proposed. \cite{jiang2020weighted} proposed the weighted Euler curve transform (WECT) for the analysis of MNIST digit images \citep{lecun1998gradient} and GBM tumor data \citep{holland2000glioblastoma}. \cite{kirveslahti2023representing} introduced three representations: the lifted ECT (LECT), super LECT (SELECT), and marginal Euler curve (MEC) for grayscale images. \cite{kirveslahti2023representing} also demonstrated that the MEC coincides with the WECT on weighted simplicial complexes. \cite{meng2023Inference} introduced the Euler-Radon transform (ERT) for modeling grayscale images using Euler integration over real-valued functions \citep{baryshnikov2010euler}. Notably, they found that the MEC coincides with the ``floor version" of the ERT. However, one key question remains unresolved in this series of frameworks: a probability space has yet to be constructed to mathematically characterize the randomness of the grayscale images of interest, which is left for future research.

\subsection{Euler Characteristic-based Topological Data Analysis}

Euler characteristic-based descriptors, especially the ECT, have been pivotal for TDA due to the following:
\begin{enumerate}
    \item Euler calculus \citep{schapira1988cycles, viro1988some, schapira1991operations, schapira1995tomography, van1998tame, baryshnikov2010euler, ghrist2014elementary} provides rich mathematical machinery for the Euler characteristic-based descriptors in TDA. For example, \cite{ghrist2018persistent} and \cite{curry2022many} applied Schapira's inversion formula \citep{schapira1995tomography} to show the injectivity of the ECT.

    \item Compared to persistence diagram-based descriptors, computing Euler characteristic-based descriptors is more efficient \citep{milosavljevic2011zigzag, hacquard2023euler, munch2023invitation}.

    \item As illustrated in this paper, Euler characteristic-based descriptors may allow the implementation of functional analysis tools for mathematical purposes and the implementation of functional data analysis for statistical purposes.
\end{enumerate}

\cite{munch2023invitation} provided a comprehensive overview of the ECT from both theoretical and applied perspectives. In addition, persistent homology-based descriptors have been applied to brain networks for more than a decade \citep{lee2011discriminative, wang2023topological, li2023tree}. We may also consider applying Euler characteristic-based descriptors to weighted graphs, particularly focusing on their applications to brain connectivity \citep{friston1993functional, lee2011discriminative, li2023tree, chen2024gradient, meng2024population}. There is still much room for the future development of Euler characteristic-based descriptors.

\end{appendix}
%%%%%%%%%%%%%%%%%%%%%%%%%%%%%%%%%%%%%%%%%%%%%%%%%%%%%%%%%%%%%%%%%%%%%%%%%%%%%%%%%%%%%%%%%%%%%%%%%%%%%%%%%%%%%%%%%%%%%%%%%%%%%%%%%%%%%%%%%%%%%%%%%%%%%%%%

%\clearpage
%\newpage

%%%%%%%%%%%%%%%%%%%%%%%%%%%%%%%%%%

%\clearpage
%\newpage

%\bibliographystyle{plain}
\bibliography{sample}

\end{document}